\documentclass{sig-alternate-10}
\usepackage{amsfonts}
\usepackage{balance}  
\usepackage{color, graphicx, MnSymbol, hyperref, subfigure}
\usepackage{multirow, framed, float, tabularx}
\usepackage{aliascnt,enumitem}
\usepackage{relsize}
\usepackage{listings}
\usepackage{colortbl}
\usepackage{tikz,pgf}
\usepackage{mathrsfs}
\usetikzlibrary{arrows}
\usepackage{algorithm}
\usepackage{algorithmicx}
\usepackage{algpseudocode}
\usepackage{fancyvrb}



\newcommand\indep{\protect\mathpalette{\protect\independenT}{\perp}} 
\def\independenT#1#2{\mathrel{\rlap{$#1#2$}\mkern2mu{#1#2}}}

\newcommand{\attr}{{\cal{A}}}
\newcommand{\I}{{\mathbb{I}}}

\newcommand{\ci}{{\cal{CI}}}

\newcommand\mvd{\twoheadrightarrow}

\newcommand{\X}{{\mathtt{X}}}
\newcommand{\Y}{{\mathtt{Y}}}
\newcommand{\Z}{{\mathtt{Z}}}
\newcommand{\W}{{\mathtt{W}}}

\definecolor{LightCyan}{rgb}{0.88,1,1}
\definecolor{Gray}{gray}{0.9}

\newcommand{\babak}[1]{{\texttt {\color{green} Babak: [{#1}]}}}

\newcommand{\sudeepa}[1]{{{\tt \color{blue} Sudeepa: [{#1}]}}}

\newcommand{\red}[1]{{\textbf {\tt {\color{red} {#1}}}}}











\newcommand{\proj}[1]{{\Pi}}
\newcommand{\sel}[1]{{\sigma}}

\newcommand{\cut}[1]{}
\newcommand{\cutfull}[1]{}

\newcommand{\commentresolved}[1]{}

\newcommand{\ie}{{\it i.e.}} 
\newcommand{\eg}{{\it e.g.}} 
\newcommand{\etal}{{et al.}} 



        %
        %
        %


\newtheorem{theorem}{Theorem}[section]          	
\newaliascnt{lemma}{theorem}				
\newtheorem{lemma}[lemma]{Lemma}              	
\aliascntresetthe{lemma}  					
\newaliascnt{conjecture}{theorem}			
\aliascntresetthe{conjecture}  				
\newaliascnt{remark}{theorem}				

\aliascntresetthe{remark}  					
\newaliascnt{corollary}{theorem}			
\newtheorem{corollary}[corollary]{Corollary}      
\aliascntresetthe{corollary}  				
\newaliascnt{definition}{theorem}			
\newtheorem{definition}[definition]{Definition}    
\aliascntresetthe{definition}  				
\newaliascnt{proposition}{theorem}			
\newtheorem{proposition}[proposition]{Proposition}  
\aliascntresetthe{proposition}  				
\newaliascnt{example}{theorem}			
\newtheorem{example}[example]{Example}  	
\aliascntresetthe{example}  				
\newaliascnt{observation}{theorem}			
\newtheorem{observation}[observation]{Observation}  
\aliascntresetthe{observation}  				

\cut{

}

\allowdisplaybreaks
\begin{document}
\title{A Framework for Inferring Causality from Multi-Relational Observational Data using Conditional Independence} %
\numberofauthors{2}
\author{
\alignauthor%
Sudeepa Roy\\
\affaddr{Duke University}\\
\email{sudeepa@cs.duke.edu}%
\alignauthor%
Babak Salimi\\
\affaddr{University of Washington}\\
\email{bsalimi@cs.washington.edu}%
}

\maketitle
\pagestyle{empty}

\begin{sloppypar}
\begin{abstract}

The study of causality or causal inference -- how much a given treatment causally affects a given outcome in a population -- goes way beyond correlation or association analysis of variables, and is critical in making sound data driven decisions  and policies in a multitude of applications. 
The gold standard in causal inference is performing \emph{controlled experiments}, which often is  not possible due to logistical or ethical reasons. As an alternative, inferring causality on \emph{observational data} based on the \emph{Neyman-Rubin potential outcome model} has been extensively used in statistics, economics, and social sciences over several decades.  In this paper,  we present a formal framework for sound causal analysis on observational datasets that are given as multiple relations and where the population under study is obtained by joining these base relations. We study a crucial condition for inferring causality from observational data, called the \emph{strong ignorability assumption} (the treatment and outcome variables should be independent in the joined relation given the observed covariates), using known conditional independences that hold in the base relations. We also discuss how the structure of the conditional independences in base relations given as graphical models help infer new conditional independences in the joined relation. The proposed framework combines concepts from  databases, statistics, and graphical models, and aims to initiate new research directions spanning these fields to facilitate powerful data-driven decisions in today's big data world.

 \cut{
 A valid study of causality on observational data inherently depends on strongly ignorable treatment assumption, which says that the treatment and the outcome variables must be conditionally independent in the data given the other observed variables called covariates.
 }

 \cut{
 The study of causality or causal inference -- how much a given treatment causally affects a given outcome in a population -- goes way beyond correlation or association analysis of variables, and is critical in making sound data driven decisions  and policies in a multitude of applications. 
The gold standard in causal inference is performing `controlled experiments', which often is  not possible due to logistical or ethical reasons. As an alternative, inferring causality on `observational data' based on the `Neyman-Rubin potential outcome model' has been extensively used in statistics, economics, and social sciences over several decades.  In this paper,  we present a formal framework for sound causal analysis on observational datasets that are given as multiple relations and where the population under study is obtained by joining these base relations. We study a crucial condition for inferring causality from observational data, called the `strong ignorability assumption' (the treatment and outcome variables should be independent in the joined relation given the observed covariates) using known conditional independences that hold in the base relations. We also discuss how the structure of the conditional independences in base relations given as graphical models help infer new conditional independences in the joined relation. The proposed framework combines concepts from  databases, statistics, and graphical models, and aims to initiate new research directions spanning these fields to facilitate powerful data-driven decisions in today's big data world.
 }
\end{abstract}

\section{Introduction}\label{sec:introduction}
The problem of \emph{causal inference} goes far beyond simple correlation, association, or model-based prediction analysis, and is practically indispensable in health, medicine, social sciences, and other domains. For example, a medical researcher may want to find out whether a new drug is effective in curing cancer of a certain type. An economist may want to understand whether a job-training program helps improve employment prospects, or whether an economic depression has an effect on the spending habit of people. A sociologist may be interested in measuring the effect of domestic violence on children's education or the effect of a special curricular activity on their class performances. A public health researcher may want to find out whether giving incentives for not smoking in terms of reduction in insurance premium helps people quit smoking.
Causal inference lays the foundation of sound and robust policy making by providing a means to estimate the impact of a certain \emph{intervention} to the world. For instance, if non-smokers pay reduced insurance premium anyway, and introducing the plan of reduced premium does not help smokers quit smoking, then a simple correlation analysis between people who pay less premium and who do not smoke may not be sufficient to convince policy makers in the government or in insurance companies that the new policy should be introduced -- as cited widely in statistical studies, \emph{correlation does not imply causation}.
\par
The formal study of causality was initiated in statistical science, back in 1920s and 30s by Neyman \cite{neyman1923} and Fisher~\cite{Fisher1935design}, and later investigated by Rubin~\cite{Rubin1974,Rubin2005} and
Holland~\cite{Holland1986} among others. The gold standard in causal analysis is performing \emph{controlled experiments}  or \emph{randomized trials} with the following basic idea: given a \emph{population} consisting of individual \emph{units} or subjects (patients, students, people, plots of land, etc.), randomly divide them into \emph{treatment} (or, the \emph{active treatment}) and \emph{control} (or, the \emph{control treatment}) groups. The units in treatment group receives the treatment whose effect we desire to measure (the actual drug, special training program, discount on the premium, a fertilizer), whereas the units in the control group do not receive it. At the end of the experiment, the difference in the \emph{outcome} (on the status of the disease, grade in class, smoking status, production of crops) is measured as the \emph{causal effect} of the treatment. Of course, additional assumptions and considerations are needed in experiment designs to make the results reflect the true causal effect of the treatment \cite{william1992experimental}.
\par
On the other hand, often the causality questions under consideration, including some of the questions mentioned earlier, are difficult or even infeasible to answer by controlled experiments due to ethical or logistical reasons (time, monetary cost, unavailability of instruments to enforce the treatment, etc.). Some extreme examples studied in the past in sociology, psychology, and health sciences include \cite{Rosenbaum2005} studying effects of criminal violence of laws limiting access to handguns, effects on children from occupational exposures of parents to lead, or long term psychological effects of the death of a close relative, which are not possible to analyze by controlled experiments. Nevertheless, in many such cases we have an \emph{observational dataset} recording the units, their treatment assignment, and their outcomes, possibly recorded by a research agency or by the government through surveys. Using such observational data,  it is still possible to infer causal relationships between the treatment and the outcome variable under certain assumptions, which is known as the \emph{observational study} for causal analysis.
\par
In observational studies,  however, when units are not assigned treatment or control at random, or when their `environment' selects their treatment (\eg, people in the rich neighborhoods received the special training program as treatment, whereas people in poorer neighborhoods formed the control group), differences in their outcomes may exhibit effects due to these initial \emph{selection biases}, and \emph{not due to the treatment}.  Some of the sources of such biases may have been measured (called \emph{observed covariates}, or \emph{overt biases} \cite{Rosenbaum2005}, \eg, age, gender, neighborhood, etc), whereas some of these biases may remain unmeasured in the observed data (called \emph{unobserved covariates}, or \emph{hidden biases}, \eg, some unrecorded health conditions). Observed covariates are taken care of by adjustment  in observational studies, \eg, by \emph{matching} treated and control units with the same or similar values of such covariates (further discussed in Section~\ref{sec:background}).

\vspace{-1mm}
\subsection{A Framework for Causal Analysis on Multi-Relational Observational Data using Conditional Independence}
In the database literature, the study of causality has so far focused on problems like finding tuples or summaries of tuples that affect the query answers \cite{MeliouGMS2011, RoyS14, bertossi2017causes,SalimiTaPP16}, abductive reasoning and view updates \cite{bertossi2017abcauses}, data provenance \cite{GKT07-semirings}, why-not analysis \cite{DBLP:conf/sigmod/ChapmanJ09}, and mining causal graphs \cite{Silverstein+2000}. Until very recently \cite{FLAME2017, salimi2017zaliql}, the topic of causal analysis as done in statistical studies in practice has not been studied in database research. On the other hand,  observational causal studies, even as studied nowadays, can significantly benefit from database techniques.  For instance, the popular \emph{potential outcome model} by Neyman-Rubin \cite{Rubin2005} (discussed in Section~\ref{sec:background}) can be captured using the relational database model, and one of the most common methods for observational studies --  \emph{exact matching} of treated and control units using their observed covariates -- can be efficiently implemented using the \emph{group-by} operators in SQL queries in any standard database management system, thereby improving the scalability of matching methods \cite{FLAME2017,salimi2017zaliql}.

\par
This paper proposes a framework that goes beyond employing database queries to efficiently implement existing techniques for observational studies. The standard observational studies are performed on a \emph{single table}, storing information about treatment, outcome, and covariates of all units in the population. On the other hand, many available datasets are naturally stored in \emph{multi-relational format}, where the data is divided into multiple related tables (\eg, \emph{author-authorship-publications, student-enrollment-course-background, customer-order-products, restaurant-customer-reviewed-reviews} etc.). These datasets are large (in contrast to relatively smaller datasets mostly used in observational studies recorded by research agencies through surveys), readily available (\eg, DBLP \cite{dblpdata}, Yelp \cite{yelpdata}, government \cite{datagov}, or other online data repository  used for research \cite{uci-Lichman:2013}), and pose interesting causal questions that can help  design policies in schools, businesses, or health.
\par
In addition, causal analysis on observational data inherently depends on the  \emph{strong ignorability} condition (Section~\ref{sec:background}), \ie, \emph{the treatment assignment and the potential outcomes are conditionally independent given a set of observed covariates}. The common practice is to try to include as many covariates as possible to match treated and control units in order to ensure this conditional independence. However, one dataset stored as a single relation may have a limited number of possible covariates, whereas integrating with other datasets may extend the set of available confounding covariates for matching (we discuss examples in Section~\ref{sec:framework-multi}).
\par
In some other scenarios, additional interesting causal questions may arise by integrating multiple datasets or combining multiple relations. For instance, in a \emph{restaurant-customer-reviewed-review} dataset like Yelp, one may be interested in finding whether awarding special status to a customer makes her write more favorable reviews with higher ratings. Here the preferred status belongs to the \emph{customer} table, whereas the ratings of the reviews belong to the \emph{review} table. Clearly, the set of possible causal questions will be far more limited if only one relation is considered for observational studies.
\par
Extending the scope of observational studies to multi-relational data, however, requires additional challenges to be addressed, and links observational studies to understanding \emph{conditional independences} in relational databases. Causal analysis on observational data crucially depends on a number of assumptions like \emph{SUTVA} and \emph{strong ignorability} involving independences and conditional independences (Section~\ref{sec:background}).
In fact, causal inference can only be done under some causal assumptions that
are not testable from data \cite{PearlBook2000}, and the results are only valid under these assumptions.
Therefore, to perform sound causal analysis on datasets in multiple relational tables, one important task is to understand the notion of conditional independences in the given tables (\emph{base relations}) and also in the \emph{joined table}.
\par
The natural approach is to define conditional independence using the probabilistic interpretation: if variables $A, B$ are independent, then  $P(AB) = P(A) \times P(B)$, and if $A, B$ are conditionally independent given $C$, then $P(A B|C) = P(A | C) \times P(B | C)$, where in the context of relational databases $A, B, C$ denote subsets of attributes from one or multiple base relations. However, measuring probability values from an exponential space on variable combinations to deduce conditional independences in the joined relation is not only impractical for high-dimensional data, but may also be incorrect when the available data represents only a \emph{sample} from the actual dataset (\eg, data on 50 patients vs. the data on all people in the world). On the other hand, as explained in the book by Pearl \cite{Pearl-PR-book}, humans are more likely to be able to make probabilistic judgment on a small set of variables based on domain knowledge, \eg, whether a patient suffering from disease A is conditionally independent on the patient suffering from disease B given her age, which can be obtained much more efficiently and correctly without looking at the data.
\par
Hence one fundamental problem in the framework of multi-relational observational data for observational causal studies is understanding which conditional independences in the joined relation can be inferred from a set of conditional independences given on the base relations, which is the focus of this paper. Unfortunately, testing whether a conditional independence holds given a set of other conditional independences in general is undecidable  \cite{DBLP:journals/tods/Beeri80} (the conditional independence has to hold in an infinite number of possible distributions), although  in certain scenarios, conditional independences are decidable, \eg, when the underlying distribution is \emph{graph-isomorph} (has an equivalent undirected or directed graphical representations) \cite{Pearl-PR-book, PearlPaz86} (discussed further in Section~\ref{sec:background}). Nevertheless, if we still derive certain conditional independences in the joined relation using sound inference rules, we know that a correct conclusion has been made, and can use such conditional independences for covariate selection in observational studies on the joined relation. 
\vspace{-2mm}
\subsection{Our Contributions} 
We propose a framework for performing sound causal analysis on \emph{multi-relational} observational data with multiple base relations. Toward this goal, we review the concepts of potential outcome model and observational studies for causal analysis as studied in statistics over many decades,
and make connections with the relational database model.
\par

Performing sound causal analysis on observational data requires conditional independence assumptions. Using the standard probabilistic interpretation, we study conditional independences that hold in the joined relation for natural joins between two relations in general and  special cases (foreign key joins and one-one joins). We show that, for any conditional independence in a base relation of the form  ``$X$ is independent of $Y$ given $Z$'', if the set of join attributes is a subset of $X, Y,$ or $Z$, then this conditional independence also holds in the joined relation. As  applications, we show that (i) conditioning on join attributes (or a superset) satisfies strong ignorability but is not useful since the estimated causal effect will be zero, and (ii) in some cases the join can be avoided to achieve the same estimated causal effect. 
\par
We show that, in general, a conditional independence that holds in the base relation may not hold in the joined relation. However, if the conditional independences that hold in the base relation are \emph{graph-isomorph} \cite{Pearl-PR-book, PearlPaz86}, \ie, if an undirected graph exists (called a \emph{perfect map or P-map}, see Section~\ref{sec:gm}) such that \emph{(i) $X$ and $Y$ are separated by a vertex set $Z$ ($Z$ forms a cutset between $X, Y$)},  if and only if \emph{(ii) $X$ and $Y$ are conditionally independent given $Z$}, and if the join is on a single attribute, then all the conditional independences from the base relation propagate to the joined relation. We also show that, for join between two relations on a single join attribute, the \emph{union graph} of the P-maps of the base relations, if the given P-maps are connected, gives an \emph{independency map or I-map} of the joined relation, where separation in the I-map implies a valid conditional independence in the joined relation (although some conditional independences 
may not be captured in an I-map unlike a P-map). We also review the notions of graphoid axioms and undirected graphical model or Markov networks from the work by Pearl and Paz \cite{PearlPaz86, Pearl-PR-book} that give a toolkit to infer sound conditional independences in the joined relation.
\par
In addition to understanding conditional independences and use of undirected graphical models to their full generality, we discuss four fundamental research directions using our framework:  \emph{First,} we discuss using \emph{directed graphical models}, or \emph{causal Bayesian networks} for inferring conditional independences. In contrast to vertex separation in undirected graphs, conditional independences in directed Bayesian networks is given by \emph{d-separation} (Pearl and Verma, \cite{PearlBook1998,DBLP:conf/aaai/PearlV87}), which may capture additional conditional independences than the undirected model, but introduces new challenges in inferring  conditional independence for joined relation. 
\emph{Second,} we discuss extension of inferring conditional independences in joined relation to achieving strong ignorability, which involved \emph{potential outcomes} and missing data (instead of observed outcome).
 \emph{Third,} we discuss challenges that may arise in many-to-many joins, which may violate basic causal assumptions, and need to be taken care of in observational studies. \emph{Fourth}, we discuss whether other weaker concepts of conditional independences like \emph{embedded multi-valued dependency} (Fagin,  \cite{DBLP:journals/tods/Fagin77}) is more suitable for causal analysis for multi-relational data, since using the natural probabilistic interpretation, some conditional independences may not propagate to the joined relation, which is contrary to the intuitive idea that the independences are inherent property of different relations and should be unchanged whether or not the relation is integrated with other relations.

\vspace{-2mm}
\subsection{Our Vision for the Framework}
We envision several potential impact of our framework in both databases and causality research. (1) It extends the well-studied Neyman-Rubin potential outcome model to multiple relations, enriching the possible set of covariates that can be included or possible set of causal questions that can be asked. Further, using database techniques like group-by queries makes the popular techniques for observational studies like matching more scalable. (2) It presents the rich causality research in statistics to databases, which has significant practical implications in avoiding `incorrect causal claims' in the study of big data and in policy making.
(3) It brings together techniques from causality in statistics \cite{Rubin2005} and artificial intelligence \cite{PearlBook2000}, database theory (\eg, embedded multi-valued dependency \cite{DBLP:journals/tods/Fagin77}), database queries (for matching), and undirected and directed graphical models (for conditional independences) \cite{Pearl-PR-book, KollerF-PGM-book}. Thereby, it creates new research problems spanning multiple domains, and  creates scope of collaboration among database theoreticians and practitioners, statisticians, researchers in artificial intelligence, and domain scientists who are interested in solving causal questions for specific applications.
%

 \textbf{Roadmap.~} In Section~\ref{sec:background}, we review some concepts from the causality and graphical models literature. In Section~\ref{sec:prelim}, we present our framework and define the notion of conditional independence. Section~\ref{sec:condl_indep} explores conditional independences in the joined relation, while Section~\ref{sec:undirected} discusses inference using undirected graphs. We conclude by discussing further research directions in Section~\ref{sec:applications}. All proofs and further discussions on related work appear in the appendix. 
\section{Background}\label{sec:background}

In this section, we review some material on causality, observational studies, graphoid axioms, and graphical models that we will use in our framework. We use $X \indep Y$ to denote that (sets of) variables $X, Y$ are marginally independent, and $X \indep Y | Z$ or $\I(X, Z, Y)$ to denote that $X$ and $Y$ are conditionally independent given $Z$.

\subsection{Potential Outcome Model}\label{sec:pom}
The commonly used model for causal analysis in statistics is known as the \emph{potential outcome model} (or \emph{Neyman-Rubin potential outcome model}), which was first proposed in Neyman's work \cite{neyman1923}, and was later popularized by Rubin \cite{Rubin1974, Rubin2005}. 
In causal analysis, the goal is to estimate the causal effect of a \emph{treatment} $T$  (\eg, a drug to treat fever) on an \emph{outcome} $Y$ (\eg, the body temperature). The treatment variable $T$ assumes binary value, where $T = 1$ means that the \emph{treatment (or, active treatment)} has been applied, and $T = 0$ means that the \emph{control treatment (or control)} has been applied to the unit.  
For unit $i$, we denote its outcome and treatment by $Y_i$ and $T_i$.
In contrast to predictive analysis, where one computes the distribution (probability or expectation) of $Y$ given $T = t$, in causal analysis, selected units (the treatment group) are \emph{administered} the treatment, \ie, the value of $Y_i$ is observed by (intuitively) \emph{forcing} unit $i$ to assume $T_i = 1$; this is called the \emph{intervention} mechanism. Denoting the value of the outcome as $Y_i(0)$ when $T_i = 0$, and  $Y_i(1)$ when $T_i = 1$, the goal is to estimate the causal effect by computing the \emph{average treatment effect (ATE):}
\begin{equation}\label{equn:ATE}
ATE = E[Y(1) - Y(0)]
\end{equation}
The variables $Y(1), Y(0) = \{Y_i(1), Y_i(0)\}$ are called the \emph{potential outcomes}. To estimate the ATE, ideally, we want to estimate the difference in effects of administering both $T_i = 1$ and $T_i = 0$ to the same unit $i$, \ie, we want to compute both $Y_i(1), Y_i(0)$. But the \emph{fundamental problem of causal inference} is that
for each unit we only know either $Y_i(1)$ or $Y_i(0)$ but not both, reducing the causal inference to a missing data problem \cite{Holland1986, RosenbaumRubin1983}. The potential outcome model on $n$ units 
can be represented in a tabular form as shown in Table~\ref{tab:potential-outcome}; we will explain the set of \emph{covariates} $X$ in Section~\ref{sec:obs-data}.

\begin{table}[ht]
{ \small
\centering
\begin{center}
  \begin{tabular}{|c|c|c|c|c|c|} \hline
   \textbf{Unit} & $\mathbf{X}$ & $\mathbf{T}$ & $\mathbf{Y(1)}$ & $\mathbf{Y(0)}$ & $\mathbf{Y(1)-Y(0)}$ \\
    \hline
    1 & $X_1$ & $T_1$ & $Y_1(1)$ & $Y_1(0)$ & $Y_1(1) - Y_1(0)$ \\
    2 & $X_2$ & $T_2$ & $Y_2(1)$ & $Y_2(0)$ & $Y_2(1) - Y_2(0)$ \\
    $\ldots$ & $\ldots$ & $\ldots$ & $\ldots$ & $\ldots$ & $\ldots$ \\
    n & $X_n$ & $T_n$ & $Y_n(1)$ & $Y_n(0)$ & $Y_n(1) - Y_n(0)$ \\ \hline
  \end{tabular}
\end{center}
}
\vspace{-2mm}
\caption{Neyman-Rubin's potential outcome framework \cite{Rubin2005}}
\vspace{-2mm}
\label{tab:potential-outcome}
\end{table}

In randomized controlled experiments, however, randomly assigning treatments to units (each patient is randomly given either the drug for fever or a placebo) 
gives an unbiased estimate of ATE. In this case, \emph{the treatment assignment $T$ is independent of the potential outcomes $Y(1), Y(0)$}, \ie,
\begin{equation}\label{equn:T-indep-Y1-Y0}
T \indep Y(1), Y(0).
\end{equation}
Therefore, $E[Y(1)] = E[Y(1) | T = 1]$ and $E[Y(0)] = E[Y(0) | T = 0]$, and from (\ref{equn:ATE}) we get,
\begin{eqnarray}
\nonumber
ATE & = & E[Y(1)] - E[Y(0)] \\
& = & E[Y(1) | T = 1] - E[Y(0) | T = 0].\label{equn:ATE-avg}
\end{eqnarray}
Now the ATE can be estimated 
by taking the difference of average outcomes of the treated and control units under an 
additional assumption called SUTVA  \cite{Rubin2005, cox1958planning}:
\begin{definition}\label{def:SUTVA}
\textbf{Stable Unit Treatment Value Assumption or SUTVA} (Rubin \cite{Rubin2005}, Cox \cite{cox1958planning}):
\begin{enumerate}[leftmargin=*]
\itemsep0em
\item There is no \emph{interference} among units, \ie, both $Y_i(1), Y_i(0)$ of a unit $i$ are unaffected by what action $T_j$ any other unit $j$ receives.
\item There are no hidden versions of treatments, \ie, no matter how unit $i$ received treatment $T_i = 1$ (resp. 0), the outcome will be observed $Y_i(1)$ (resp. $Y_i(0)$).
\end{enumerate}
\end{definition}

\subsection{Causality for Observational Data}\label{sec:obs-data}
The potential outcome model gives a formal method to reason about `potential outcomes' and estimate ATE in controlled experiments. However, once we attempt to do causal analysis on \emph{observational data} -- a dataset containing $Y_i(1), Y_i(0), T_i$ 
for each unit $i$ -- the independence assumption in (\ref{equn:T-indep-Y1-Y0}) typically fails. As mentioned in the introduction, this happens due to \emph{selection biases}  \cite{Rosenbaum2005}, when the treatment assignment depends on the environment of the unit (\eg, it may happen that male patients between age 20-30 received the drug and the other patients received placebo, students already performing well in a class enrolled in a special training program, etc.). In observational studies, some of these potential factors are also recorded in the dataset as variables $X$ (called \emph{confounding covariates}), while the others may remain unobserved. Table~\ref{tab:potential-outcome} for the potential outcome model also shows $X_i$ for each unit $i$, which is a vector containing possible confounding covariates (\eg, in the study of the drug for fever, $X$ may include age, gender, medical history and conditions, and ethnicity of the patient).
A controlled experiment takes care of biases due to both observed and unobserved covariates by randomization. For observational studies, one can still adjust for selection biases due to observed covariates with additional \emph{unconfoundedness assumption} as follows:
\begin{definition}\label{def:SITA}
\textbf{Strong ignorability assumption} (Rosenbaum and Rubin, \cite{RosenbaumRubin1983}):
\begin{enumerate}[leftmargin=*]
\itemsep0em
\item each individual has a positive probability
of being assigned to treatment, and
\item potential outcomes
$(Y (0),Y (1))$ and $T$ are conditionally independent given relevant covariates $X$:
\begin{equation}\label{equn:T-Y1-Y0-X}
T \indep Y(0), Y(1) | X
\end{equation}
\end{enumerate}
\end{definition}
Using (\ref{equn:T-Y1-Y0-X}), we can write ATE (\ref{equn:ATE}) as

\begin{equation} \small \label{equn:ae}
  ATE = E_X[E[Y(1) | T = 1, X]] - E_X[E[Y(0) | T = 0, X]] \\
\end{equation}
 where  $E_X[E[Y(t) | T = t, X]]=\sum_{X=x} P(X = x) E[Y|T = t , X = x]$, $t = \{0, 1\}$,
 can be estimated from the observational data, of course if SUTVA (Definition~\ref{def:SUTVA}) holds too. The RHS of Equation \ref{equn:ae} is known as the {\em adjusted estimand} and is denoted by $A(T,Y,X)$. 
 
 \begin{definition}\label{def:c-eq}
 \textbf{$c$-equivalence} (Pearl and Paz,   \cite{pearl2014confounding}):
 Two set of variables $X$ and $X'$ are called {\em c-equivalent} if $A(T,Y,X)=A(T,Y,X')$, \ie, adjusting based on both $X$ and $X'$ would yield a same result. 
\end{definition} 
 Thus, if adjusting for one of them, say $X$, is sufficient for computing the causal effect of $T$ on $Y$, adjusting for the
$X'$ is also sufficient.  The following is shown in  \cite{pearl2014confounding}.

\begin{theorem} \label{th:ceq}
  (Pearl and Paz,   \cite{pearl2014confounding}): A sufficient condition for $X$ and $X'$ to be c-equivalent is that they satisfy one of the following two
  conditions:
   \begin{eqnarray*}
     (i)  \ T \indep X'|X & and & Y \indep X|X',T \\
     (ii) \ T \indep X|X' & and & Y \indep X'|X,T
   \end{eqnarray*}

\end{theorem}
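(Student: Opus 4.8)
The plan is to prove a statement slightly stronger than $c$-equivalence: that under either hypothesis the two adjusted outcome terms agree separately for each treatment level $t\in\{0,1\}$,
\[
E_X\bigl[E[Y\mid T=t,X]\bigr]\;=\;E_{X'}\bigl[E[Y\mid T=t,X']\bigr],
\]
since subtracting the $t=1$ and $t=0$ identities immediately yields $A(T,Y,X)=A(T,Y,X')$, which is exactly $c$-equivalence. Because conditions (i) and (ii) are obtained from one another by swapping the roles of $X$ and $X'$, it suffices to carry out the argument assuming (i), namely $T\indep X'\mid X$ and $Y\indep X\mid X',T$.

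The heart of the proof is a tower-rule expansion that injects $X'$ into the expression built from $X$. Fix $t$ and a value $x$ with $P(X=x)>0$ and positive overlap $P(T=t\mid X=x)>0$ (guaranteed by the positivity clause of strong ignorability); conditioning additionally on $X'$ gives
\[
E[Y\mid T=t,X=x]\;=\;\sum_{x'} E[Y\mid T=t,X=x,X'=x']\;P(X'=x'\mid T=t,X=x).
\]
Now I would apply the two independences of (i) in order: $Y\indep X\mid X',T$ makes the inner expectation free of $x$, i.e. $E[Y\mid T=t,X=x,X'=x']=E[Y\mid T=t,X'=x']$, and $T\indep X'\mid X$ rewrites the mixing weight as $P(X'=x'\mid T=t,X=x)=P(X'=x'\mid X=x)$, yielding $E[Y\mid T=t,X=x]=\sum_{x'} E[Y\mid T=t,X'=x']\,P(X'=x'\mid X=x)$.

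Finally I would average this identity over $X$ and exchange the order of summation, using $\sum_x P(X'=x'\mid X=x)\,P(X=x)=P(X'=x')$:
\[
E_X\bigl[E[Y\mid T=t,X]\bigr]\;=\;\sum_x P(X=x)\sum_{x'} E[Y\mid T=t,X'=x']\,P(X'=x'\mid X=x)\;=\;\sum_{x'} E[Y\mid T=t,X'=x']\,P(X'=x'),
\]
and the right-hand side is precisely $E_{X'}\bigl[E[Y\mid T=t,X']\bigr]$. This proves the displayed equality for case (i); case (ii) follows verbatim after interchanging $X$ and $X'$, and subtracting over $t$ gives $A(T,Y,X)=A(T,Y,X')$. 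The only real difficulty here is bookkeeping rather than ideas: one must apply the two conditional independences in the correct order (first eliminate $X$ from the outcome expectation, then simplify the mixing weights), and every conditioning event appearing above must have positive probability, which is exactly where the positivity half of strong ignorability (or an explicit overlap assumption) is used; for continuous $Y$ or $X$ the sums become integrals with no change to the argument.
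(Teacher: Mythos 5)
The paper does not actually prove this statement: it is imported verbatim from Pearl and Paz's work on confounding equivalence \cite{pearl2014confounding} and used as a black box, so there is no in-paper proof to compare against. Your argument is correct and is essentially the standard derivation of that result: the tower-rule expansion $E[Y\mid T=t,X=x]=\sum_{x'}E[Y\mid T=t,X=x,X'=x']\,P(X'=x'\mid T=t,X=x)$, followed by using $Y\indep X\mid X',T$ to drop $x$ from the inner expectation and $T\indep X'\mid X$ to replace the mixing weights by $P(X'=x'\mid X=x)$, then marginalizing over $X$, is exactly how one shows $E_X[E[Y\mid T=t,X]]=E_{X'}[E[Y\mid T=t,X']]$ for each $t$, with case (ii) following by the $X\leftrightarrow X'$ symmetry; your caveat that every conditioning event must have positive probability (the overlap half of strong ignorability) is the only technical hypothesis needed and you state it correctly.
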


\textbf{Matching methods for observational studies.~}
If strong ignorability holds for a set of covariates $X$, one can estimate ATE using Equation \ref{equn:ae} by (a) dividing  units with the same value of all covariates in $X$ into groups (called \emph{exact matching}), (b) 
taking the difference of average $Y$ values of treated and control units for each group. 
In practice, however, a direct application of this method is
impossible, because the data is typically very sparse: for any value
$X=x$ we either have no data values at all, or very few such values, or only treated or only control units for some groups, thereby 
estimating ATE becomes infeasible. 
In general, matching methods used in statistics group treated and control units based on the same or similar values of covariates (exact or approximate matchings) to create a balance between treated and control units in each matched group. Popular approximate matching methods are \emph{propensity score matching}  \cite{RosenbaumRubin1983} and {\em coarsen exact matching (CEM)} \cite{iacus2009cem}.  In the former, units with the same value of $B(X_i) = b$ are matched together, $B(X) = P[T = 1 | X = x]$ being the propensity score, \cite{rubin2006matched}.
In the latter, the vector of covariates is coarsened according to a set of user-defined cutpoints and then exact matching used to match units with similar value of the coarsened covariates \cite{iacus2009cem}.
\par
Exact matching and CEM on observational data bears high resemblance with the \emph{group-by} operator in SQL queries, leading to recent applications of database techniques for matching in causal inference \cite{FLAME2017,salimi2017zaliql}. Further discussion on matching techniques can be found in Appendix~\ref{sec:related}.

\cut{
A natural approach to observational studies is \emph{matching} treated and control units based on the same or similar values of covariates (exact and approximate matchings) to create a balance between treated and control units in each matched group.
Another popular matching approach is \emph{propensity score matching}, where units with the same value of $B(X_i) = b$ are matched together, $B(X) = P[T = 1 | X = x]$ being the propensity score, which is a special case of \emph{subclassification} \cite{rubin2006matched}.
}
\cut{
; Rosenbaum and Rubin  \cite{RosenbaumRubin1983} showed that propensity score is a \emph{balancing score} (within each matched group, treated and control units are independent), and is the most coarsened balancing score, thereby allowing valid matching of many units. Propensity score matching also belongs to another set of techniques for matching called \emph{subclassification} \cite{rubin2006matched}.
}
\cut{Exact matching on observational data bears high resemblance with the \emph{group-by} operator in SQL queries, leading to recent applications of database techniques for matching in causal inference \cite{FLAME2017,salimi2017zaliql}. Further discussion on matching techniques can be found in Appendix~\ref{sec:related}.}

\textbf{Covariate selection for observational studies.~} The process of covariate selection chooses a good subset of variables from available variables $X$ to be used in a matching method for observational studies. Covariate selection is a challenging problem: to maintain strong ignorability, we require that (i) each valid matched group has to have at least one treated and one control unit, favoring selection of as few variables as possible, as well as (ii) the treatment and potential outcomes should be independent given the selected covariates, which may require choosing many covariates.  The efficiency of matching in the presence of a large number of covariates is another practical concern. The popular matching techniques aim to select as many variables as possible with the intuition that collecting more information will make the treatment and potential outcomes conditionally independent, whereas a more methodical approach using the underlying graphical causal structure of the covariates has been studied by Pearl and others \cite{PearlBook2000}.

\subsection{Graphoid Axioms and Graphoids}\label{sec:graphoid}

Given a \emph{probabilistic model} $P$ on a finite set of variables $U$ with discrete values, and three subsets $X, Y,  Z \subseteq U$, the variables $X, Y$  are said to be \emph{conditionally independent} given $Z$ (denoted by $X \indep Y |_P Z$ or $\I(X, Z, Y)$) if for all values $X = x, Y = y, Z = z$, $P(x|y, z) = P(x, z)$ whenever $P(y, z) > 0$.
The seminal work on \emph{graphoid axoims} by Pearl and Paz \cite{Pearl-PR-book, PearlPaz86} gives a set of logical conditions for constraining the set of triplets $(X, Y, Z)$ such that $X \indep Y |_P Z$ holds in $P$.
\begin{theorem}
\label{thm:graphoid}
\textbf{Graphoid axioms \cite{Pearl-PR-book, PearlPaz86}~}
For three disjoint subsets of variables $X, Y, Z$, if $\I(X, Z, Y)$ 
holds in some probabilistic model $P$, then $\I$ must satisfy the following four independent conditions:
\begin{itemize}[leftmargin=*]
\itemsep-1em
\item
\textbf{Symmetry:}
\vspace{-2mm}
\begin{eqnarray}
\I(X, Z, Y) \Leftrightarrow \I(Y, Z,  X) \label{equn:sym}
\end{eqnarray}
\vspace{-1mm}
\item \textbf{Decomposition:}
\vspace{-2mm}
\begin{eqnarray}
 \I(X, Z, Y\cup W) \Rightarrow \I(X,Z, Y)~\&~\I(X, Z, W) \label{equn:dec}
\end{eqnarray}
\vspace{-1mm}
\item \textbf{Weak Union:}
\vspace{-2mm}
\begin{eqnarray}
  \I(X, Z, Y\cup W) \Rightarrow \I(X, Z\cup W, Y) \label{equn:wu}
\end{eqnarray}
\vspace{-1mm}
\item  \textbf{Contraction:}
\vspace{-2mm}
\begin{eqnarray}
 \I(X, Z, Y)~\&~\I(X, Z\cup Y, W)\Rightarrow \I(X, Y\cup W, Z) \label{equn:con}
\end{eqnarray}
\vspace{-1mm}
\end{itemize}
\vspace{-1mm}
 If $P$ is strictly positive, \ie, $P(x) > 0$ for all combination of variables $x$, then a fifth condition holds:
\begin{itemize}[leftmargin=*]
\itemsep-1em
\item \textbf{Intersection:}
\vspace{-2mm}
\begin{eqnarray}
 \I(X, Z \cup W, Y)~\&~\I(X, Z \cup Y, W)\Rightarrow \I(X, Z, Y \cup W) \label{equn:int}
\end{eqnarray}
\end{itemize}
\vspace{-1mm}
\end{theorem}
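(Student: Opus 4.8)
The plan is to argue directly from the definition of conditional independence in terms of conditional probabilities: $\I(X,Z,Y)$ holds in $P$ exactly when $P(x\mid y,z)=P(x\mid z)$ for every value assignment with $P(y,z)>0$, or equivalently (multiplying through by $P(y\mid z)$) when $P(x,y\mid z)=P(x\mid z)\,P(y\mid z)$ for every $z$ with $P(z)>0$. Each of the five conditions then reduces to a short manipulation of these identities, and I would verify them in the stated order, since the proofs of the later ones reuse the earlier ones.

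\textbf{Symmetry} is immediate from the product form $P(x,y\mid z)=P(x\mid z)\,P(y\mid z)$, which is unchanged when the roles of $x$ and $y$ are swapped. For \textbf{Decomposition}, I would start from $P(x,y,w\mid z)=P(x\mid z)\,P(y,w\mid z)$ and sum over all values of $w$, using $\sum_w P(y,w\mid z)=P(y\mid z)$, to obtain $\I(X,Z,Y)$, and symmetrically sum over $y$ to obtain $\I(X,Z,W)$. For \textbf{Weak Union}, I would take $\I(X,Z,Y\cup W)$, write $P(x\mid y,z,w)=P(x,y,w\mid z)/P(y,w\mid z)=P(x\mid z)$, then apply Decomposition (already available) to get $P(x\mid z,w)=P(x\mid z)$, and conclude $P(x\mid y,z,w)=P(x\mid z,w)$, \ie\ $\I(X,Z\cup W,Y)$. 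For \textbf{Contraction}, from $\I(X,Z,Y)$ I have $P(x\mid y,z)=P(x\mid z)$ and from $\I(X,Z\cup Y,W)$ I have $P(x\mid y,z,w)=P(x\mid y,z)$; chaining these gives $P(x\mid y,z,w)=P(x\mid z)$, which is $\I(X,Z,Y\cup W)$. In every case the only bookkeeping is the support conditions: $P(y,z,w)>0$ forces $P(y,z)>0$ and $P(z)>0$, so each hypothesis applies where it is invoked.

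The step I expect to require the most care, and the only one that uses the extra hypothesis, is \textbf{Intersection} for strictly positive $P$. Here I would combine $\I(X,Z\cup W,Y)$, \ie\ $P(x\mid z,w,y)=P(x\mid z,w)$, with $\I(X,Z\cup Y,W)$, \ie\ $P(x\mid z,y,w)=P(x\mid z,y)$, to deduce $P(x\mid z,w)=P(x\mid z,y)$ for all values of $y$ and $w$; strict positivity is precisely what guarantees that every one of these conditionals is well defined, so the left-hand side cannot depend on $w$ and the right-hand side cannot depend on $y$, forcing $P(x\mid z,w)$ to be a function of $(x,z)$ alone. Averaging that function over $w$ against $P(w\mid z)$ identifies it with $P(x\mid z)$, which gives $\I(X,Z,W)$ and, symmetrically, $\I(X,Z,Y)$; then $P(x\mid z,y,w)=P(x\mid z,w)=P(x\mid z)$ yields $\I(X,Z,Y\cup W)$. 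I would also record why positivity is unavoidable: without it, three distinct but almost-surely-equal variables $X$, $Y$, $W$ (with $Z$ empty) satisfy both premises yet violate the conclusion, so the argument genuinely needs a distributional assumption rather than pure symbol manipulation.
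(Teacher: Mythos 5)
Your proof is correct, but there is no paper proof to compare it against: the paper states this theorem as a classical result of Pearl and Paz \cite{Pearl-PR-book, PearlPaz86} and gives no argument for it in the body or the appendix. What you supply is the standard first-principles verification from the product form $P(x,y\mid z)=P(x\mid z)\,P(y\mid z)$, and each step checks out — the marginalization over $w$ for Decomposition, the reuse of Decomposition inside Weak Union, the chaining of the two conditionals for Contraction, and for Intersection the observation that $P(x\mid z,w)=P(x\mid z,y)$ for all $y,w$ forces both sides to depend on $(x,z)$ alone, together with the almost-surely-equal counterexample showing that strict positivity is genuinely needed. Your bookkeeping of support conditions (that $P(y,w,z)>0$ implies $P(y,z)>0$ and $P(z)>0$, so each hypothesis is invoked only where it applies) is the one place such a proof can silently break, and you handle it correctly. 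One discrepancy worth flagging: the conclusion of Contraction as displayed in (\ref{equn:con}), namely $\I(X, Y\cup W, Z)$, has its last two arguments permuted relative to the standard axiom — in the paper's convention the middle slot is the conditioning set, so the printed form would assert $X \indep Z \mid Y\cup W$, which does not follow. The intended (and standard) conclusion is $\I(X, Z, Y\cup W)$, \ie\ $X \indep Y\cup W \mid Z$, which is exactly what you prove and is also the form the paper actually uses whenever it invokes contraction in its later proofs; so your argument establishes the correct statement rather than the literal one displayed.
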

Here $Y \cup W$ denotes the union of variable sets $Y, W$.
The axiomatic characterization of probabilistic independences give a powerful tool to discuss conditional independences that hold in the model but are not obvious from the numeric values of the probabilities. It also allows us to derive new  conditional independences starting with a small set of initial independences, possibly handcrafted by an expert. Although the \emph{membership problem} -- testing whether a triplet $(X, Z, Y)$ satisfies $\I(X, Z, Y)$ given a set of conditional independences --  is undecidable \cite{DBLP:journals/tods/Beeri80},
still if a conditional independence can be derived using the graphoid axioms, we know that it holds in the model.

The graphoid axioms have a correspondence with \emph{vertex separation} in undirected graphs  \cite{Pearl-PR-book}. If $\I(X, Z, Y)$ denotes $Z$ separates $X$ from $Y$ in an undirected graph $G$ (\ie, removing $Z$ destroys all paths between $X, Y$), then it can be easily checked that $\I$ satisfies the graphoid axioms. 
In general, other models in addition to conditional independence relations in probability models can also satisfy the graphoid axioms, and in that case, they are called graphoids:
\begin{definition}\label{def:graphoid}
\textbf{(Graphoids \cite{PearlPaz86}):~}
Let $\I$ denote an independence relation consisting of a set of triplets $(X, Z, Y)$ where $\I(X, Z, Y)$ denotes that $X$ and $Y$ are independent given $Z$. If $\I$ satisfies inference rules (\ref{equn:sym}) to (\ref{equn:con}), it is called a \emph{semi-graphoid}. If $\I$ also satisfies rule (\ref{equn:int}), it is called a \emph{graphoid}.
\end{definition}
Other than conditional independences and vertex separation in undirected graphs, vertex separation in directed graphs, \emph{embedded multi-valued dependencies} in relational databases \cite{DBLP:journals/tods/Fagin77} (see Section~\ref{sec:weaker-CIs}), and qualitative constraints \cite{DBLP:conf/uai/ShenoyS88} are semi-graphoids.


\cut{
\paragraph{ \bf Graph Interpretation.}
Consider an undirected $G = (U, E)$. For $X$,$Y$ and $Z$, three disjoint subsets of $V$, let $\I(X;Y|Z)_G$ stand for
 the phrase ``$Z$ {\em separates} $X$ and $Y$", i.e., $X$ and $Y$
 are disconnected in $G_{\underline{Z}}$, the graph obtained by
removing $Z$ and the associated edges from $G$. It is easy to verify that $\I(.)_G$
satisfies the Graphoid axioms. Thus, graph separation forms a Graphoid \cite{pearl1985Graphoids}.

\paragraph{ \bf Probabilistic Interpretation.} If $\I(X;Y|Z)_p$ stands for ``$X$ is independent of
$Y$ given $Z$ in some probability distribution $p$", then $\I(.)_p$ is a semi-Graphoids. If $p$ is strictly positive then $\I(.)_p$ forms a Graphoid \cite{pearl1985Graphoids}.
}

\subsection{Dependencies by Undirected Graphs}\label{sec:gm}
Let $\I$ be an arbitrary \emph{dependency model} $M$ encoding an independence relation $\I = \I(M)$ with a subset of triples $(X, Z, Y)$ where $\I(X, Z, Y)_M$ denotes $X$ and $Y$ are indepdendent given $Z$. As discussed above, graphoid axioms have a correspondence with vertex separation in graphs. However, it does not say whether a set of given independences $\I$ can be captured using vertex separation. Nevertheless, for some dependency models $M$, it is possible to find a graph representation on the variable set such that the conditional independence corresponds to vertex separation.
\par
We denote by $X \indep Y |_G Z$ if two subsets of vertices $X, Y$ are separated by $Z$, or $Z$ forms a \emph{cutset} between $X, Y$.
Note that the independence in graph by vertex separation has no relation with independences in a probability space or in a dependency model in general.

\begin{definition}\label{def:idp-maps}
\textbf{(D-map, I-map, P-map) \cite{Pearl-PR-book}} An undirected graph $G$ on the variable set  is a \emph{dependency map} or \emph{D-map} of $M$ if for all disjoint subsets $X, Y, Z$, we have
\begin{equation}\label{equn:dmap}
\I(X, Z, Y)_M \Rightarrow X \indep Y |_{G} Z
\end{equation}
$G$ is an \emph{independency map} or \emph{I-map} if
\begin{equation}\label{equn:imap}
\I(X, Z, Y)_M \Leftarrow X \indep Y |_{G} Z
\end{equation}
$G$ is said to be a \emph{perfect map} or \emph{P-map} of $M$ if it is both a D-map and an I-map.
\end{definition}

A D-map guarantees that vertices that are connected in $G$ are indeed dependent in $M$, but a pair of separated vertex sets in $G$ may be dependent in $M$. An I-map guarantees that separated vertices in $G$ are independent in $M$, although if they are not separated in $G$, they may still be independent in $M$. Empty graphs are trivial D-maps and complete graphs are trivial I-maps, although they are not useful in practice. In particular, obtaining an I-map encoding as many independences as possible is useful for causal analysis on observational data since it helps in covariate selection satisfying strong ignorability (Definition~\ref{def:SITA}).
In Section~\ref{sec:undirected} we will obtain I-maps for the joined relation using P-maps of base relations.
\par
Ideally, we want to obtain an I-map that is also a P-map, but there are simple dependency models that do not have any P-maps (\eg, the ones where $\I(X, Z, Y)$ but $\neg \I(X, Z\cup W, Y)$, since a superset of a cutset is also a cutset). 
Some such dependencies can be captured using directed graphical models that we discuss in Section~\ref{sec:future-directed} as a research direction using this framework.
\par
A dependency model $M$ is \emph{graph-isomorph} if there exists an undirected graph that is a P-map of $M$. The following theorem gives a necessary and sufficient conditions for dependency models to be graph-isomorph.
\vspace{-2mm}
\begin{theorem}\label{thm:PP-undirected}
(Pearl and Paz \cite{PearlPaz86, Pearl-PR-book})
A necessary and sufficient conditions for a dependency model $M$ to be graph-isomorph is that $\I(X, Z, Y) = \I(X, Y, Z)_M$ satisfies the following five independent axioms on disjoint set of variables:
\begin{itemize}[leftmargin=*]
\itemsep-1em
\item
\textbf{Symmetry:}
\vspace{-2mm}
\begin{eqnarray}
\I(X, Z, Y) \Leftrightarrow \I(Y, Z,  X) \label{equn:sym-1}
\end{eqnarray}
\vspace{-1mm}
\item \textbf{Decomposition:}
\begin{eqnarray}
 \I(X, Z, Y\cup W) \Rightarrow \I(X,Z, Y)~\&~\I(X, Z, W) \label{equn:dec-1}
\end{eqnarray}
\vspace{-1mm}
\item \textbf{Intersection:}
\begin{eqnarray}
 \I(X, Z \cup W, Y)~\&~\I(X, Z \cup Y, W)\Rightarrow \I(X, Z, Y \cup W) \label{equn:int-1}
\end{eqnarray}
\vspace{-1mm}
\item \textbf{Strong Union:}
\begin{eqnarray}
  \I(X, Z, Y) \Rightarrow \I(X, Z\cup W, Y) \label{equn:su-1}
\end{eqnarray}
\vspace{-1mm}
\item  \textbf{Transitivity:}~~ For all other variables $\gamma$
\begin{eqnarray}
 \I(X, Z, Y) \Rightarrow \I(X, Z, \gamma) ~\textrm{or}~ \I(\gamma, Z, Y) \label{equn:tran-1}
\end{eqnarray}
\end{itemize}
\vspace{-1mm}
\end{theorem}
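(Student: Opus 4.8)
The plan is to prove the biconditional in two directions, with \emph{necessity} essentially mechanical and \emph{sufficiency} carrying all the weight.

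\textbf{Necessity.} Suppose $M$ is graph-isomorph, witnessed by an undirected P-map $G$, so that $\I(X,Z,Y)_M$ holds exactly when $Z$ is a cutset separating $X$ from $Y$ in $G$. It then suffices to check that vertex separation in an arbitrary undirected graph satisfies the five axioms \eqref{equn:sym-1}--\eqref{equn:tran-1} (restricted to disjoint triples). Symmetry is immediate since $G$ is undirected. Decomposition follows because a set separated from $X$ stays separated after dropping vertices, and Strong Union because a superset of a cutset is a cutset (disjointness of the added set $W$ from $X,Y$ is what keeps the statement well-formed). Intersection and Transitivity are the only ones needing a short argument on paths: for Transitivity, if neither $\I(X,Z,\gamma)_M$ nor $\I(\gamma,Z,Y)_M$ held, we could concatenate a $Z$-avoiding $X$--$\gamma$ path with a $Z$-avoiding $\gamma$--$Y$ path and obtain a $Z$-avoiding $X$--$Y$ path, contradicting $\I(X,Z,Y)_M$; Intersection is handled by a similar ``shortest offending path'' analysis. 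I expect no difficulty here.

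\textbf{Sufficiency: construction.} Assume $\I(\cdot,\cdot,\cdot)_M$ obeys all five axioms. I would use the standard minimal-I-map construction: define the undirected graph $G$ on vertex set $U$ by placing an edge between $\alpha$ and $\beta$ precisely when $\neg\,\I(\{\alpha\},\, U\setminus\{\alpha,\beta\},\, \{\beta\})_M$, i.e.\ when $\alpha$ and $\beta$ remain dependent conditioned on everything else. The entire task then reduces to proving that this $G$ is a P-map, that is, for all pairwise-disjoint $X,Y,Z$,
$\I(X,Z,Y)_M \Leftrightarrow X \indep Y \mid_G Z$.

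\textbf{Sufficiency: the two inclusions.} For the I-map direction ($X\indep Y\mid_G Z \Rightarrow \I(X,Z,Y)_M$), I would argue by induction that starts from the pairwise statements guaranteed by the edge definition and progressively eliminates the ``extra'' variables in $U\setminus(X\cup Y\cup Z)$: each such variable can be absorbed into the conditioning set or split off using Intersection together with Decomposition, Strong Union, and Symmetry, while the separation hypothesis in $G$ ensures the relevant sub-triples are the ones the axioms allow us to combine; the induction terminates at $\I(X,Z,Y)_M$. For the D-map direction ($\I(X,Z,Y)_M \Rightarrow X\indep Y\mid_G Z$), I would take the contrapositive: if $Z$ does not separate $X$ from $Y$, pick a path in $G$ from some $x\in X$ to some $y\in Y$ avoiding $Z$; every edge on it is by construction a genuine dependency, and chaining these along the path via Transitivity (with Strong Union and Symmetry to align conditioning sets) yields $\neg\,\I(X,Z,Y)_M$.

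\textbf{Main obstacle.} The delicate part is the I-map direction of sufficiency: getting the induction invariant right for eliminating variables from the conditioning set, and ensuring every elimination step stays within the regime of \emph{disjoint} triples that the axioms require. This is exactly where the Intersection axiom is essential --- the analogue that, in a purely probabilistic model, is available only under strict positivity, and which is here promoted to a hypothesis precisely so that the minimal-I-map graph becomes a P-map. The bookkeeping of which vertices go into $Z$ versus which are split between $X$ and $Y$, and verifying that the separation structure in $G$ licenses each merge, is where the proof should be written with care.
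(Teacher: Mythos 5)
This statement is quoted background: the paper attributes it to Pearl and Paz \cite{PearlPaz86, Pearl-PR-book} and gives no proof of its own (the appendix only proves results from Sections~4--6), so there is nothing in the paper to compare your argument against line by line. Measured against the source it cites, your outline is the standard Pearl--Paz argument: necessity by checking that vertex separation satisfies the five axioms (your shortest-offending-path treatment of Intersection and the path-concatenation for Transitivity are exactly right), and sufficiency via the graph $G$ with an edge $\alpha$--$\beta$ iff $\neg\,\I(\alpha, U\setminus\{\alpha,\beta\},\beta)_M$, followed by showing $G$ is a P-map. Your D-map half of sufficiency is essentially complete once you note (as you implicitly do) that Strong Union in contrapositive lets you pass from the edge definition to $\neg\,\I(\alpha_i,Z,\alpha_{i+1})$ for the actual conditioning set $Z$, after which Transitivity chains along the $Z$-avoiding path and Decomposition lifts the singleton dependency to $\neg\,\I(X,Z,Y)_M$.

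The one place where your proposal is a plan rather than a proof is the I-map half of sufficiency ($X\indep Y\mid_G Z \Rightarrow \I(X,Z,Y)_M$), which you correctly identify as the crux. ``Absorbed into the conditioning set or split off using Intersection'' is the right slogan, but the induction needs to be set up concretely: one typically inducts on $|U\setminus(X\cup Y\cup Z)|$, uses Strong Union to enlarge conditioning sets so that the pairwise independences supplied by the non-edges become comparable, derives Weak Union from Decomposition plus Strong Union, and then uses Intersection to merge the two conditional statements obtained by placing an outside vertex on either side of the cut. Since $G$ separates $X$ from $Y$, each outside vertex is separated from at least one of the two sides, which is what licenses the merge; without spelling this out the argument is not yet a proof. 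Nothing in your plan is wrong, and the tools you name are the ones the literature proof actually uses, but this step is where the real content of the theorem lives and it remains to be written out.
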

It is possible to construct a \emph{minimal I-map} of a probability distribution (removing any edge would destroy the I-map) and to check if a given a graph is an I-map of a given  probability  distribution by constructing the \emph{Markov Network} of the probability distribution; when we quantify the links of the undirected graph, the theory of Markov fields allows constructing a complete and consistent quantitative model preserving the dependency structure of an arbitrary graph $G$ \cite{Pearl-PR-book}. Conditional independences can also be captured using directed acyclic graphs (Bayesian networks and Causal graphs) that we discuss in Section~\ref{sec:applications}.


\section{A Formal Framework for Causal Analysis on Multi-Relational Data}\label{sec:prelim}
Using the concepts in the previous section, we now describe a framework for causal analysis on observational data given in multiple relations.
First we describe some notations used in the rest of the paper.
\par
Let $D$ be a database schema with $k$ relations $R_1, \cdots, R_k$ (called \emph{base relations}). We use $R_i$, where $i \in [1, k]$ both as the name of a relation and the subset of attributes contained in the relation.
$\attr = \cup_i R_i$ denote the set of all attributes. 
\par
Any join without a join condition in the subscript denotes the \emph{natural join} with equality condition on the common set of attributes, \ie, $R \Join S = R \Join_{R.A_1 = S.A_1 \cdots R.A_p = S.A_p} S$, where $A_1, \cdots, A_p = R \cap S$ denotes the common set of attributes in $R$ and $S$.
We use $A, B, C, \cdots \in \attr$ to denote individual attributes, and unless mentioned otherwise, $X, Y, Z, W, \cdots \subseteq \attr$ to denote subsets of attributes.
For two subsets of attributes $X, Y \subseteq \attr$, $XY$ denotes $X \cup Y$. We will use $A \in \attr$ both as an attribute and as a singleton subset $\{A\}$.  For a tuple $t \in R$, and  attribute(s) $A \in R$, we use $t[A]$ to denote the value of attribute(s) $A$ of $R$.
\par
For an attribute $A \in \attr$ and its value $a$, if $A \in R$, then  $N_{R, (A, a)}$ denotes the number of tuples in $R$ with $A = a$, \ie, $N_{R, (A, a)} = |\{t \in R~ : t[A] = a\}|$. Unless mentioned otherwise, we assume the \emph{bag semantics}, \ie, the relations can have duplicates, and projections are duplicate-preserving. For multiple attributes $A_1, \cdots, A_p$ and their respective values $a_1, \cdots, a_p$, $N_{R, (A_1, a_1), \cdots, (A_p, a_p)} =  |\{t \in R~ : t[A_1] = a_1, \cdots, t[A_p] = a_p\}|$ denotes the number of all tuples matching all the values of all the attributes. When the attributes are clear from the context, we will use $N_{R, a}$ instead of $N_{R, (A, a)}$, and $N_{R, a_1, \cdots, a_p}$ instead of $N_{R, (A_1, a_1), \cdots, (A_p, a_p)}$ for simplicity.
$N_R$ denotes the number of tuples in $R$.\\

\subsection{Conditional Probability and Independence in a Relation}
Given an instance of a relation $R$, the probability distribution of an attribute $A$ is given by
\begin{equation}
\Pr_R[A = a] = \frac{N_{R, (A, a)}}{N_R}
\end{equation}
This notion of probability has also been used in \cite{DBLP:conf/pods/DalkilicR00} to define information dependencies.
The joint probability distribution of two attributes $A, B$ is given by (similarly, for two subsets of attributes)
\begin{equation}
\Pr_R[A = a, B = b] = \frac{N_{R, (A, a), (B, b)}}{N_R}
\end{equation}


Note that for an attribute $A \in R \cap S$ belonging to two relations $R, S$, the distribution of $A$ in $R$ and $S$ may be different, \ie, in general, $\Pr_R[A = a] \neq \Pr_S[A = a]$.
\par
Given two attributes $A$ and $B$, the conditional probability of $A$ given $B$ is given by
\begin{equation}
\Pr_R[A = a | B = b] = \frac{\Pr_R[A = a, B = b]}{\Pr_R[B = b]} = \frac{N_{R, (A, a), (B, b)}}{N_{R, (B = b)}}
\end{equation}

For every relation $R$ in $D$, we also have a set of conditional independence statements $\ci_R$ defined as follows:
\begin{definition}\textbf{(Conditional independence)~} Let $X, Y, Z$ be three mutually disjoint subset of attributes in $R$. We say that $X$ and $Y$ are \emph{conditionally independent} given $Z$, denoted by $X \indep Y |_R Z$, if for all values $x, y, z$ of $X, Y, Z$ (all $X, Y, Z, x, y, z$ are subsets of attributes or values),
\begin{equation}
\Pr_R[X = x| Z = z] \times \Pr_R[Y = y | Z = z] = \Pr_R[X = x, Y = y | Z = z]
\end{equation}
If $X \indep Y |_R Z$ does not hold in $R$, we write $\neg(X \indep Y |_R Z)$.
\end{definition}
Similarly, $X$ and $Y$ are (marginally) independent if $\Pr_R[X = x] \times \Pr_R[Y = y] = \Pr_R[X = x, Y = y]$.
Note that for multi-relational databases, the subscript $R$ is important, since even if all of $X, Y, Z$ belong to two relations $R, S$, it may hold that $X \indep Y |_R Z$ whereas
$\neg (X \indep Y |_S Z)$.
We will use \emph{CI} as an abbreviation of \emph{Conditional Independence}. For a CI $X \indep Y |_R Z$, we say that $X, Y$ belongs to the \emph{LHS} (left hand side) of the CI, and $Z$ belongs to the \emph{RHS} (right hand side).

\textbf{Entropy.~}
Under the defined distribution each subset $X \subseteq \attr$ 
defines a marginal distribution $P(X)$ with 
\emph{entropy} $H(X)=-\sum_{x} P(X=x) \log P(X=x)$, where $x$ denotes a combination of values of attributes in $X$.
Given $X, Y, Z \subseteq \attr$, the other information measures that we will use in the paper are (\eg, see \cite{InfoTheoryBookCoverThomas}):
(i)  \emph{conditional entropy:} $H(X|Y)=H(XY)-H(Y) $; (ii) \emph{mutual information} 
$I(X,Y)=H(X)+H(Y)-H(XY)$,  and (iii) \emph{conditional mutual information:}
	$I(X,Y|Z)=H(X|Z)-H(X|YZ)$. 
	\cut{For any $X$, $Y$ and $Z$, it is known that 
	the following basic information inequalities hold: 
  $H(X) \geq 0$, $H(X|Y) \geq 0$, $I(X;Y) \geq 0$
	and  $I(X;Y|Z) \geq 0$ \cite{yeung2008information}.  }
	Note that we are using $\I$ for independence and $I$ for mutual information.

\par
\textbf{Independence in schema vs. instance.} Independence is typically considered as a property of a schema, \ie, a conditional or unconditional independence should hold on all possible instances of a relation (which may not be true in practice, since we mostly get a \emph{sample} of the real world). For instance, if we are looking at  students database, the number of courses taken by a student and whether the student has done an internship may not be independent (senior students are likely to take more courses and also do an internship). However, given a student or given the seniority of the students, these two attributes are conditionally independent. This conditional independence follows from the domain knowledge and not necessarily from a specific instance of the database. 
In this paper, we focus on conditional independence statements that can be inferred from the schema-level information.

\cut{
\babak{From here on you confused the independence of the potential outcomes and the treatment with the inductance of the outcome and treatment. If treatment and outcome are independent then there is no causal relationship between them. }

\babak{Not sure id the example scenario is the right one. this is because we only join the tables based on ID. Concourse ID is not participate in the causal model of any tables.  }
}
\subsection{Why Multi-Relational Data}\label{sec:framework-multi}
Note that the potential outcome framework as shown in Table~\ref{tab:potential-outcome} resembles a single relation or table in relational database model. If all required information, \ie, treatment, outcome, and confounding covariates, is available as a single table, the standard potential outcome model with a single table suffices.
However, to do a sound and robust causal analysis on observational data, it is desirable to collect as much information about the units (\ie, possible confounding covariates) as possible. The first example below motivates why it is useful to include additional attributes as covariates by combining multiple relations.
\begin{example}\label{eg:1}
\textbf{(Extension of covariate set by joining multiple relations) }
Suppose we have a Students dataset of the form
$$\mathtt{Students(\underline{sid}, major, parents\_income, gpa)},$$
which stores the \emph{id} of the student, major, annual income of the parents, and the gpa.  The goal is to estimate the causal effect of income of the parents on the performance or gpa of the students, in particular, how much having an annual income of $> 100k$  ($T = 1$ if and only if $\mathtt{parents\_income} > 100k$) affects the gpas of the students.
If we do causal analysis using this dataset only, the only available covariate is the major of the student. Conditioning on $id$ does not help, since only one tuple will satisfy a given $id$ values, thereby violating strong ignorability condition (Definition~\ref{def:SITA}) and making matching unusable, since a group will not have one treated and one control units to estimate the causal effect within a group. Now if we only use major of the student as a covariate, (i) it may not satisfy the strong ignorability condition that $\mathtt{parents\_income' \indep gpa | major}$ (where $\mathtt{parents\_income'}$ denotes the potential outcomes for income for low and high gpas), so considering this covariate will give wrong estimate of ATE (\ref{equn:ATE}) and (ii) ignoring this covariate will lead to considering all students in a single giant group, thereby again giving an incorrect estimate, since it may not hold that   $\mathtt{parents\_income' \indep gpa}$.

\par
On the other hand, there may be several other datasets available that include additional information about this causal analysis. There may exist a \emph{course} relation and an \emph{enrollment} relation storing the courses the students took: $\mathtt{Enroll}(\mathtt{sid, cid})$, and $\mathtt{Course}(\mathtt{\underline{cid}, year, title, dept, instructor})$.
There also may exist relations storing the names of the parents, and their jobs, educational background, whether they own a house, their ethnicity, and age:
$\mathtt{Parents}(\mathtt{sid, pid}),$ and
$\mathtt{ParentInfo}(\mathtt{\underline{pid}}$, $\mathtt{name}$, $\mathtt{job}$, $\mathtt{edu}$, $\mathtt{owns\_house}$, $\mathtt{ethnicity, age})$.
For simplicity, we assume that information about only one parent is stored, and revisit this assumption in Section~\ref{sec:app-mult-rows}.
Now in the causal analysis, we can include (some of the) attributes $\mathtt{year, title, dept, instructor}$ from the Course relation, and $\mathtt{job, edu, owns\_house, ethnicity, age}$ as additional covariates. The ethnicity, age, education may affect the income of the parent, as well as can also affect the gpa of the student, whereas the information about the courses can affect the gpa. Including these additional as covariates has the potential to make the causal analysis better.
\end{example}
Not only using multiple relations extend the set of available covariates, it also extends the scope of causal analysis where the treatment $T$ belongs to one relation and the outcome $Y$ belongs to another relation, allowing additional causal questions that can be asked.
\begin{example}\label{eg:2}
\textbf{(Extension of causal questions by joining multiple relations) }
In Example~\ref{eg:1}, one may be interested in estimating the causal effect of the profession ($\mathtt{job}$) or education level ($\mathtt{edu}$) of the parents (which comes from the $\mathtt{ParentInfo}$ relation) in the gpa or major of the student (which comes from the $\mathtt{Student}$ relation). Clearly, this question cannot be answered if we look at only single relation, but can be answered if we consider the join of $\mathtt{Students, Parents},$ and $\mathtt{ParentsInfo}$ relations.
\end{example}
In summary, sometimes the data iteself is naturally stored in multiple relations to reduce redundancy, whereas in other cases integrating datasets to gather new information may extend the set of covariates that can be used, or the set of causal questions that can be asked. Therefore, allowing multiple relations in observational data significantly extends the potential outcome framework by Neyman-Rubin (Sections~\ref{sec:pom} and \ref{sec:obs-data}) to be further useful for practical purposes.

\subsection{Framework: Causal Analysis with Multiple Relations}
Suppose the data is stored in $k$ relations $R_1, \cdots, R_k$, and we want to perform the causal analysis on the joined relation $J = R_1 \Join \cdots \Join R_k$. If the intended \emph{treatment} variable $T$ is not already in binary form, we consider a derived attribute $T$, which is 1 if and only if a given predicate $\phi$ on one of the chosen columns $A_T \in \attr$ evaluates to true, \ie,  $T = 1 \Leftrightarrow \phi(A_T) = true$. In Example~\ref{eg:1}, $\phi =$ whether $\mathtt{parents\_income} > 100k$.
Another chosen column $Y \in \attr$ serves as the outcome variable, and may assume real values.
\begin{itemize}
\itemsep0em
\item \emph{The final goal  is to estimate the causal effect (ATE, (\ref{equn:ATE})) of $T$ on $Y$.}
\end{itemize}
To meet the above goal using techniques from observational studies, we need to solve several sub-goals. The first and foremost sub-goal is the following:
\begin{itemize}
\itemsep0em
\item \emph{Select a set of covariates $X \subseteq \attr \setminus \{TY\}$  such that the strong ignorability (Definition~\ref{def:SITA}) holds, \ie, }
\begin{equation}
T \indep Y(0), Y(1) |_{J} X
\end{equation}
\end{itemize}
Unfortunately, due to the fundamental problem of causal analysis \cite{Holland1986, RosenbaumRubin1983}, for any unit only one of $Y(0)$ and $Y(1)$ is observed. 
If a \emph{directed causal graph} on the variables is available,  Pearl \cite{PearlBook2000} gives sufficient conditions called \emph{backdoor criteria} (defined in Section~\ref{sec:applications}) to check for \emph{admissible} covariates $X$ satisfying the above condition. There is a long-standing debate among causality researchers whether the causal graphical model is a practical assumption. 

\subsection{Valid Units for Causal Analysis from Joins}\label{sec:valid-units-joins}
For doing causal analysis on joined relation, one of the first tasks is to define the units. We illustrate the challenge in this task in the following example.
\begin{example}\label{eg:units-join}
Consider three relations:\\
 $\mathtt{P(\underline{iid}, training, seniority)}$,  $\mathtt{S(\underline{sid}, gpa, major)}$,
 $\mathtt{R(\underline{iid, sid}, class, sem, year, grade)}$, respectively denoting relations for \emph{Professor, Student}, and \emph{Teaches}. Here $\mathtt{iid}$ and $\mathtt{sid}$ denote the id-s of instructors and students (also foreign keys from $R$ to $P, S$), and other variables are $\mathtt{training}$ (Boolean variable denoting whether the professor had a special training or went to a top-10 school), $\mathtt{seniority}$ (of the professor: senior, mid-level, or junior), $\mathtt{class, sem, year, grade}$ (the class taught by the professor that the student took, semester,  year, and grade obtained by the student), $\mathtt{gpa}$ (average grade of the student), $\mathtt{major}$ (major of the student). The attributes $\mathtt{iid, sid}$ are foreign keys in $R$ referring to $P$ and $S$ respectively. 
\par
Suppose one asks the question
\begin{itemize}[leftmargin=*]
\itemsep0em
\item Estimate the causal effect of the training received by instructors on the grades of the students. 
\end{itemize}
There are several tasks to be solved to answer this question: (1) what should be the units, treatment $T$, and outcome $Y$, (2) will they satisfy the basic assumption SUTVA, (3) what would be a good choice of confounding covariates $X$ to satisfy the strong ignorability condition.
\par
From the question, intuitively $\mathtt{training}$ should be the treatment $T$ (we revisit this below), but for $Y$, we have two choices: $\mathtt{grade}$ from $P$ and $\mathtt{gpa}$ from $S$. 
\end{example}

We make the following observation:
\begin{observation}\label{obs:units-unique-outcome} 
Let $R$ be the table containing a specified outcome $Y$ and let $U$ be the population table containing the units. 
For SUTVA to hold, each $Y$-value from the rows in $R$ can appear at most once in $P$, \ie, each tuple in $R$ can contribute to at most one tuple in $U$\footnote{If each relation has unique identifiers for tuples as in Example~\ref{eg:units-join}, we can say that $R_i \rightarrow U$ holds in $U$ is necessary, but if the identifiers do not exist, two different tuples may have the same value of $Y$ and other covariates in $R$. If each tuple in $R$ contributes to at most  one tuple in $U$, $R_i \rightarrow U$ is not necessary.}.
\end{observation}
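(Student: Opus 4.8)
The plan is to prove it by contradiction: suppose SUTVA holds yet some tuple $r \in R$ contributes to two distinct units $u_1 \ne u_2$ of the population table $U$, and derive a violation of the no-interference clause of SUTVA (Definition~\ref{def:SUTVA}). The first step is to pin down ``contributes to'': in a natural join (and under any subsequent duplicate-preserving projection) a unit $u \in U$ is built from $r$ precisely when $u$ agrees with $r$ on every attribute of $R$; since the outcome attribute $Y$ lies in $R$, this forces $u_1[Y] = u_2[Y] = r[Y]$. Thus both units carry one and the same recorded outcome, inherited from the single real-world entity or event that the row $r$ represents (in Example~\ref{eg:units-join}, a fixed student--class--instructor record together with the grade actually observed for it).

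Next I would lift this equality to the potential-outcome level. Each unit $u$ of $U$ carries a pair $(Y_u(0), Y_u(1))$ with observed value $Y_u = Y_u(T_u)$ read off from $u[Y]$, under consistency and the ``no hidden versions of treatment'' clause of SUTVA. Because $u_1$ and $u_2$ draw their outcome from the very same physical record $r$, observing $Y_{u_1}(t)$ and observing $Y_{u_2}(t)$ are literally the same manipulation -- set that entity's treatment to $t$ and read its $Y$-value -- hence $Y_{u_1}(t) = Y_{u_2}(t)$ for $t \in \{0,1\}$, so $Y_{u_1}$ and $Y_{u_2}$ are bound to a single quantity in the world.

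Then comes the contradiction. For $u_1$ and $u_2$ to be genuinely distinct units, the design must at least permit them opposite assignments, say $T_{u_1} = 1$ and $T_{u_2} = 0$; but then $r[Y] = Y_{u_1}(1)$ and simultaneously $r[Y] = Y_{u_2}(0) = Y_{u_1}(0)$, so the entity's treated and control outcomes coincide as a structural fact of the table. The only way to reconcile this with a nontrivial causal effect is to admit that flipping ``$u_2$'s treatment'' would move $r[Y]$ and hence $u_1$'s realized outcome -- precisely interference between units, contradicting SUTVA part~1. (The degenerate escape, never allowing $T_{u_1} \ne T_{u_2}$, merely says $u_1$ and $u_2$ are two copies of one entity, so the population table is malformed from the start.) Hence $r$ contributes to at most one tuple of $U$. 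I would close by connecting this to the dependency phrasing in the footnote: when every base relation has a tuple identifier, ``$r$ contributes to at most one $U$-tuple'' is equivalent to the key of $R$ determining the containing tuple, i.e. $R \rightarrow U$ holds in $U$, which is then necessary; absent identifiers, two $R$-tuples with identical $Y$ and covariates are operationally indistinguishable, so only the weaker multiplicity condition is really needed.

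The main obstacle is that this is a modeling argument rather than a formal derivation from the graphoid or probability axioms, so the delicate step is justifying $Y_{u_1}(t) = Y_{u_2}(t)$ without tacitly assuming the conclusion. I expect to anchor it on SUTVA part~2 together with consistency ($Y_u = Y_u(T_u)$): these make $r[Y]$ the unambiguous outcome of that entity under whatever treatment it received, so two units sharing $r$ must share both potential outcomes, and everything after that is bookkeeping about how joins copy attributes.
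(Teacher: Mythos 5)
Your argument is correct and matches the paper's own justification: the paper supports this observation not with a formal proof but with the discussion immediately following it in Section~\ref{sec:valid-units-joins}, where choosing $Y=\mathtt{gpa}$ in Example~\ref{eg:units-join} makes a student taught by a treated professor $p_1$ and a control professor $p_2$ appear as two tuples $u_1,u_2$ of $U$ sharing one and the same outcome value, so the treatment of $u_1$ affects the (identical) outcome of $u_2$ and SUTVA's no-interference clause fails. Your proposal abstracts that concrete example into a general contradiction argument --- same shared-physical-outcome mechanism, same appeal to opposite treatment assignments --- so it is essentially the paper's reasoning made explicit.
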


Consider Example~\ref{eg:units-join}. If $Y = \mathtt{grade}$ and $T = training$, $U = P \Join R$, due to the foreign keys from $R$ to $P$, the $Y$ values are not repeated in $U$. The pairs \emph{(student, professor)} constitute the units with unique outcome (not value-wise, two students may receive the same grade). Now the standard techniques (\eg, matching on some covariates $X$ in $U$ that satisfy ignorability) can be applied to do the causal analysis in $U$. The same holds if $U = P \Join R \Join S$, then also the outcomes are unique.
\par
Now suppose we choose $Y = \mathtt{gpa}$, $T = \mathtt{training}$, and population table $U = P \Join R \Join S$. Suppose a student $s$ has been taught by two professors $p_1, p_2$, where $p_1$ has $T = 1$ and $p_2$ has $T = 0$. Then in $U$, there will be two tuples $u_1, u_2$ for $s$, one with $p_1$ the other with $p_2$, both with the same $\mathtt{gpa}$ value say $g_s$. Now the units still are \emph{(student, professor)} pairs, but the treatment of $u_1$ affects the (same) outcome of $u_2$, thereby violating SUTVA. Hence in this case the units are not valid.
\par
Note that Observation~\ref{obs:units-unique-outcome} gives a necessary condition for defining valid units satisfying SUTVA making the joined relation amenable to observational causal analysis. 
For instance, if the treatment $T$ was in $R$, and if SUTVA was originally violated in $R$ ($T$ of a tuple $r_1 \in R$ affects the outcome $Y$ of another tuple $r_2 \in R$), then even if each tuple in $R$ contributes at most once to $U$, SUTVA will be violated in $U$ (so this is not a sufficient condition).
\par
On the other hand, there can be another unit table constructed after join satisfying SUTVA:
If we choose $Y = gpa$, there might be a plausible option to collapse  $P \Join R \Join S$ to define valid units with unique $\mathtt{gpa}$ (\ie, the (\emph{$student$}) becomes the unit), \eg, by aggregating over different $\mathtt{training}$ values to define $T$ (at least one instructor had training or the majority of the instructors had the training) as well as aggregating different values of covariates from $P$ or $R$. We leave this as a direction of future research (Section~\ref{sec:app-mult-rows}) and assume Observation~\ref{obs:units-unique-outcome} holds in this paper.

\subsection{Inferring CIs in a Joined Relation}
As a stepping stone toward understanding strong ignorability for potential outcomes in the presence of multiple relations, we need to understand how CIs propagate from base relations to the joined relation in the standard relational database model, which is the main focus of this paper. 
The problem of inferring $X \indep Y |_{J} Z$ in a joined relation  $J = R_1 \Join \cdots \Join R_k$., even if we ignore the missing data problem arising in the application of causal inference, is non-trivial. 
As discussed earlier, given the dataset, it may be either inefficient or incorrect to validate this CI in $J$ by computing the numeric probabilities: (i) $Z$ may be large, and the independence has to be checked for exponentially many combinations of values of variables in $Z$, and (ii) the available data itself may not be complete, \ie, the dataset may be a sample from the actual world.  On the other hand, prior knowledge or expert knowledge may result in some conditional independences in individual base relations involving a small set of variables or attributes. With this intuition, we define 
the \emph{problem of inferring CIs in a joined relation}, which has other potential applications as discussed in Section~\ref{sec:QO}.
\begin{itemize}
\itemsep0em
\item \emph{Given conditional independences $\ci_i$ in base relations $R_i$-s, and three disjoint subset of attributes $X, Y, Z$, infer whether $X \indep Y |_{J} Z$ in the joined relation $J = R_1 \Join \cdots \Join R_k$.}
\end{itemize}
In other words, instead of directly inferring the CIs in the joined relation, if we have knowledge about CIs that hold in the base relations and the nature of join, how we can infer CIs in the joined relation. 
Unfortunately, testing CI is undecidable in general \cite{DBLP:journals/tods/Beeri80}. Nevertheless, using properties of join, we can still infer some CIs in the joined relation and use them for observational causal analysis (Section~\ref{sec:app-ci}). Further, when the base relations are graph-isomorphs, or at least have any non-trivial I-maps (Section~\ref{sec:gm}), we can infer a larger classes of CIs in the joined relation, as we illustrate in Section~\ref{sec:undirected} for a special case when the join is on single attribute.
We will discuss other sub-goals in Section~\ref{sec:applications} as further directions of research.
The  preliminary results on this framework presented in the following two sections primarily use a binary join between two relations $R \Join S$, and we discuss extensions to multiple relations as problems to study in the future. 

\subsection{Application of Inferring CIs in Query Optimization}\label{sec:QO}
Other than helping understand CIs for causal inference with multiple relations, inferring CIs in a joined relation is useful in fine-grained \emph{selectivity estimation} for query optimization as used in modern query optimizers \cite{getoor2001selectivity,Tzoumas2013,deshpande2001independence}
instead of simply considering textbook assumptions (like uniform distribution or independence among attributes). 
Knowing additional CIs in base relations reduce the number of different combinations of attribute values that has to be maintained. For instance, if in $R(A, B, C, D)$, we know that $A \indep B | C$, then by chain rule, $\Pr_R(A, B, C) = \Pr_R(AB|C)\Pr_R(C) = \Pr_R(A|C) \Pr_R(B | C) \Pr_R(C)$, and instead of keeping frequencies for all possible combinations of $A, B, C$, we can have a table for $C$, and two other tables for $A | C$ and $B | C$. These tables are likely to be much smaller, thereby making the computation of joint distribution more efficient.  
\par
Selectivity estimation using graphical models has been studied in different contexts in the literature (probabilistic graphical models for select and foreign key join queries using probabilistic relational model \cite{getoor2001selectivity}, undirected graphical model-based compact synopses for a single table \cite{deshpande2001independence}, and model-based approach using directed and undirected gaphical models for multiple relations \cite{Tzoumas2013}); 
efficiently learning undirected and directed graphical models has also been extensively studied \cite{KollerF-PGM-book, PearlBook2000} which unfortunately is computationally expensive \cite{neapolitan2004learning,pearl2014probabilistic}. 
On the other hand, in this paper we study inferring CIs in the joined relation structurally given CIs and graphical models on the base relations (that are either constructed by existing algorithms or obtained using domain knowledge), as well as the properties of the join,  without looking again at the data, which can reduce the complexity significantly as well as help in selectivity estimation of subsequent steps in a query optimizer. For instance, we show in Section~\ref{sec:undirected}, for a special case of join, an undirected I-map of the joined relation can be obtained by taking the union of P-maps of two base relations. 

\cut{Graphical models use information  about CIs in the table to capture a compact synopsis of the table that provides a high-quality and efficient estimates of the joint distribution over the attributes.
In a query plan with multiple operators, inferring CIs in the joined relation is useful for estimating intermediate frequency distribution for query optimization. Without the knowledge of CIs in the joined relation, the use of CIs will stop after the base relations are processed in a join.}



\cut{
(2) If we have more information about the CIs in the base relations, \eg, undirected or directed graphical models capturing all the CIs in the base relations, we have more information about efficient selectivity estimation in query optimization as done in \cite{Tzoumas2013}. Efficiently learning undirected and directed graphical models for CIs is an extensively studied problem in the literature \cite{KollerF-PGM-book, PearlBook2000}.  However, if we want to use the same idea for subsequent steps in query optimization, it requires to learn a graphical model for the joined relation, which may have many more attributes than the base relations, and therefore rapidly increasing the number of combinations to consider for learning an edge in the graphical models. However, if the graphical model for the joined relation can be inferred from those of base relations (we show in Section~\ref{sec:undirected} that under a special case, the union of undirected graphical models in base relations gives one for the joined relation), the efficiency for inferring the graphical model for the joined relation may increase significantly.}

\section{Conditional Independence in the Joined Relation}\label{sec:condl_indep}
In this section we study the problem stated in the previous section for the special case of binary joins:
\emph{Given two relations $R, S$, with conditional independences $\ci_R, \ci_S$ respectively,
(1) Which of the conditional independences in $\ci_R, \ci_S$  hold in $R \Join S$ for arbitrary $R$ and $S$?
(2) What other conditional independences hold in $R \Join S$?
}

\subsection{CI in Joined Relation for Binary Joins}\label{sec:join-ci-general}
The main result in this section is that, if the join attributes (the common attributes of $R, S$) belong to one of the three subsets participating in a CI  that holds in a base relation, then the CI holds in the joined relation. Further, any pairs of subsets of attributes from the two relations participating in the join are independent in the joined relation given the join attributes:

\begin{theorem}\label{thm:ci-join}
Given two relations $R, S$, with CIs $\ci_R, \ci_S$ respectively, the joined relation $R \Join S$ satisfies the following CIs:
\begin{itemize}
\item For all $R' \subseteq R \setminus S$ and $S' \subseteq S \setminus R$, it holds that $R' \indep S' |_{R \Join S} (R \cap S)$,
\item For all CI $X \indep Y|_R Z \in \ci_R$, if $(R \cap S) \subseteq Z$, then $X \indep Y|_{R \Join S} Z$.

\item For all CI $X \indep Y|_R Z \in \ci_R$, if $(R \cap S) \subseteq X$ (or $Y$), then $X \indep Y|_{R \Join S} Z$.
\end{itemize}
\end{theorem}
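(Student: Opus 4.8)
The plan is to reduce the whole statement to one combinatorial identity about the natural join under bag semantics, and then transport conditional probabilities through it. Write $K = R\cap S$ for the join attributes. \textbf{Step 1 (multiplicativity of join counts).} First I would record that for any $W\subseteq R\setminus K$ with value $w$ and any value $k$ of $K$,
\[
N_{R\Join S,\,(W,w),(K,k)} \;=\; N_{R,\,(W,w),(K,k)}\cdot N_{S,\,(K,k)},
\]
and more generally, for disjoint $W_R\subseteq R\setminus K$, $W_S\subseteq S\setminus K$ with values $w_R,w_S$, that $N_{R\Join S,(W_R,w_R),(W_S,w_S),(K,k)} = N_{R,(W_R,w_R),(K,k)}\cdot N_{S,(W_S,w_S),(K,k)}$. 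This is immediate from the definition of the natural join: a tuple of $R\Join S$ with $K=k$ is a pair consisting of a tuple of $R$ with $K=k$ and a tuple of $S$ with $K=k$, and the extra selections only constrain the respective component. Summing over the other $K$-values gives $N_{R\Join S,(W,w)}=\sum_{k'}N_{R,(W,w),(K,k')}\,N_{S,(K,k')}$, the weighted form I will need. Throughout, values $z$ with $\Pr[Z{=}z]=0$ make the conditionals undefined and the defining equation of a CI is read as vacuous there, so I only argue for values with positive probability.

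\textbf{Step 2 (first and second bullets).} Both are direct computations off Step 1. For the first bullet, with $R'\subseteq R\setminus K$ and $S'\subseteq S\setminus K$, Step 1 gives $\Pr_{R\Join S}[R'{=}r'\mid K{=}k]=N_{R,(R',r'),(K,k)}/N_{R,(K,k)}$, similarly for $S'$, and multiplying the two yields exactly $\Pr_{R\Join S}[R'{=}r',S'{=}s'\mid K{=}k]$; hence $R'\indep S'|_{R\Join S}(R\cap S)$. For the second bullet, take $X\indep Y|_R Z\in\ci_R$ with $K\subseteq Z$; write $Z=Z_0\cup K$ with $Z_0\subseteq R\setminus K$, and note $X,Y\subseteq R\setminus K$. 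For a value $z$ of $Z$ (which fixes the join value $k$) with $\Pr_{R\Join S}[Z{=}z]>0$, Step 1 shows $\Pr_{R\Join S}[Z{=}z]=N_{R,(Z,z)}\,N_{S,(K,k)}/N_{R\Join S}$, so $N_{R,(Z,z)}>0$ and the factor $N_{S,(K,k)}$ cancels from numerators and denominators, giving $\Pr_{R\Join S}[X{=}x\mid Z{=}z]=\Pr_R[X{=}x\mid Z{=}z]$ and likewise for $Y$ and for the joint event; the CI in $R$ then transfers verbatim.

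\textbf{Step 3 (third bullet --- the main obstacle).} Suppose $X\indep Y|_R Z\in\ci_R$ with $K\subseteq X$ (the case $K\subseteq Y$ is identical by symmetry of CIs). Write $X=X_0\cup K$ with $X_0\subseteq R\setminus K$, and $Y,Z\subseteq R\setminus K$. The difficulty is that conditioning on $Z=z$ no longer fixes the join value, so by Step 1, $\Pr_{R\Join S}[Z{=}z]=\tfrac1{N_{R\Join S}}\sum_{k'}N_{R,(Z,z),(K,k')}\,\mu(k')$ is an $S$-weighted mixture over join values, with weights $\mu(k'):=N_{S,(K,k')}$ that $S$ may skew arbitrarily; naively there is no reason the $R$-CI survives this reweighting. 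The key observation that rescues the argument is that $X\indep Y|_R Z$ implies $K\indep Y|_R Z$: since $\ci_R$ consists of CIs holding in the probability model $\Pr_R$, it obeys the decomposition axiom of Theorem~\ref{thm:graphoid} (alternatively, sum the defining equation over the values of $X_0$ with $K$ held fixed). Consequently $\Pr_R[Y{=}y\mid Z{=}z,K{=}k']=\Pr_R[Y{=}y\mid Z{=}z]=:\lambda$ does not depend on $k'$, i.e. $N_{R,(Y,y),(Z,z),(K,k')}=\lambda\,N_{R,(Z,z),(K,k')}$ for every $k'$, so the $\mu$-weighted sums obey $\sum_{k'}N_{R,(Y,y),(Z,z),(K,k')}\mu(k')=\lambda\sum_{k'}N_{R,(Z,z),(K,k')}\mu(k')$ --- the $S$-reweighting leaves the conditional probability of $Y$ given $Z$ untouched. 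On the $X$-side, since $X\supseteq K$ the event $X=x$ already fixes $K=k$, so Step 1 gives $N_{R\Join S,(X,x),(Z,z)}=N_{R,(X,x),(Z,z)}\mu(k)$ and $N_{R\Join S,(X,x),(Y,y),(Z,z)}=N_{R,(X,x),(Y,y),(Z,z)}\mu(k)$, and $X\indep Y|_R Z$ gives $N_{R,(X,x),(Y,y),(Z,z)}=\lambda\,N_{R,(X,x),(Z,z)}$. Plugging these into the equivalent form $\Pr_{R\Join S}[X{=}x,Z{=}z]\cdot\Pr_{R\Join S}[Y{=}y,Z{=}z]=\Pr_{R\Join S}[X{=}x,Y{=}y,Z{=}z]\cdot\Pr_{R\Join S}[Z{=}z]$ of the claimed CI, the factor $\mu(k)$ cancels, the $\mu$-weighted sums collapse through $\lambda$ on both sides, and the identity holds; when $\mu(k)=0$ (or $\Pr_{R\Join S}[Z{=}z]=0$) both sides vanish.

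\textbf{Alternative and closing.} Step 3 can instead be done modularly: from $X\indep Y|_R Z$ obtain $K\indep Y|_R Z$ by decomposition and $X_0\indep Y|_R Z\cup K$ by weak union; the latter lifts to $R\Join S$ by the (already established) second bullet since its conditioning set contains $K$; prove the special case $K\indep Y|_{R\Join S}Z$ from the former by the weighted-sum computation above with $X_0=\emptyset$; and combine the two lifted CIs by the contraction axiom to recover $X\indep Y|_{R\Join S}Z$. Either way, the one nontrivial point --- the hard part --- is controlling the $S$-induced reweighting of join values, which is exactly why the decomposition step (that $K$ is independent of $Y$ given $Z$) is essential and why the bullet requires $K$ to sit entirely inside one side of the CI. The analogous statements for $\ci_S$ follow by exchanging the roles of $R$ and $S$.
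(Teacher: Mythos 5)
Your proposal is correct, and for the first two bullets it coincides with the paper's own argument (Lemma~\ref{lem:cl-join-0}/Corollary~\ref{cor:cl-join-0} and Lemma~\ref{lem:ci-join-1}): the same multiplicativity of join counts under bag semantics, and the same cancellation of the factor $N_{S,(K,k)}$ when the join attributes sit inside $Z$. The difference is in the third bullet. The paper's Lemma~\ref{lem:ci-join-2} proves the count identity $N_{T,wvyz}\,N_{T,z}=N_{T,wvz}\,N_{T,yz}$ by expanding all four joined counts as $S$-weighted sums of $R$-counts and applying the hypothesis term by term inside those sums; it is a monolithic algebraic verification that never names the structural reason the $S$-reweighting is harmless. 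You isolate that reason explicitly: decomposition applied to $X\indep Y|_R Z$ with $K\subseteq X$ yields $K\indep Y|_R Z$, hence $N_{R,(Y,y),(Z,z),(K,k')}=\lambda\,N_{R,(Z,z),(K,k')}$ uniformly in $k'$, so the $\mu$-weighted sums collapse and the rest is a one-line cancellation of $\mu(k)$. This shortens the computation and explains precisely why the bullet requires $K$ to lie entirely within one block of the CI. Your alternative modular route (decomposition and weak union to get $X_0\indep Y|_R ZK$, lift it by the second bullet, prove the special case $K\indep Y|_{R\Join S}Z$, and recombine by contraction in $R\Join S$) is genuinely different from the paper's proof of this lemma, though it is exactly the style the paper adopts later, e.g.\ in the proof of Theorem~\ref{thm:pmap-ci-propagates}. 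Either version is sound; the paper's direct expansion avoids invoking the graphoid axioms on $R\Join S$ at this stage, while yours is cleaner, and your explicit handling of the zero-probability conditioning values is a point the paper leaves implicit.
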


The proof of the theorem follows from Corollary~\ref{cor:cl-join-0}, Lemma~\ref{lem:ci-join-1}, and Lemma~\ref{lem:ci-join-2} below, all proofs are given in the appendix.
Note that the above theorem does not say that no other CIs hold in the joined relation. In fact, all CIs that can be obtained by applying graphoid axioms (Theorem~\ref{thm:graphoid}) on the CIs stated in the theorem will hold in the joined relation.

\textbf{(I) CIs conditioning on the join attributes:}
The join introduces new CIs in the joined relation $R \Join S$:
\begin{lemma}\label{lem:cl-join-0}
In the joined relation, $(R \setminus S) \indep (S \setminus R) |_{R \Join S} (R \cap S)$.
\end{lemma}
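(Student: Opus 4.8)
The plan is to verify the defining equation of conditional independence directly from the elementary counting identity for natural joins under bag semantics. Write $C = R \cap S$ for the join attributes, $R' = R \setminus S$ and $S' = S \setminus R$; these three attribute sets are mutually disjoint and their union is exactly the attribute set of $J = R \Join S$. The first step is to record the multiplicativity of tuple multiplicities across the join: a tuple of $J$ agreeing with a fixed assignment $R' = r'$, $C = c$, $S' = s'$ arises precisely by pairing a tuple of $R$ with $R' = r', C = c$ with a tuple of $S$ with $C = c, S' = s'$, so $N_{J, r', c, s'} = N_{R, r', c}\cdot N_{S, c, s'}$. Summing this identity over $s'$, over $r'$, and over both yields $N_{J, r', c} = N_{R, r', c}\cdot N_{S,c}$, $N_{J, c, s'} = N_{R,c}\cdot N_{S, c, s'}$, and $N_{J,c} = N_{R,c}\cdot N_{S,c}$.

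Next I would simply form the conditional probabilities. Fix a value $c$ with $\Pr_J[C = c] > 0$, which is equivalent to $N_{R,c} > 0$ and $N_{S,c} > 0$; for any other $c$ there is nothing to check, by the definition of conditional independence. Dividing the identities above, the common normalizer $N_J$ cancels and we obtain $\Pr_J[R' = r' \mid C = c] = N_{R, r', c}/N_{R,c}$, $\Pr_J[S' = s' \mid C = c] = N_{S, c, s'}/N_{S,c}$, and $\Pr_J[R' = r', S' = s' \mid C = c] = (N_{R, r', c}\cdot N_{S, c, s'})/(N_{R,c}\cdot N_{S,c})$. The product of the first two equals the third for every choice of $r', s', c$, which is exactly $R' \indep S' |_{R \Join S} C$, i.e. $(R \setminus S) \indep (S \setminus R) |_{R \Join S} (R \cap S)$.

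There is no genuine obstacle here; the proof is a one-line division once the count identity is in place. The only points that deserve a sentence of care are that the argument relies on bag semantics with duplicate-preserving projection — under set semantics the multiplicativity $N_{J, r', c, s'} = N_{R, r', c}\cdot N_{S, c, s'}$ can fail — and that the conditional-independence equation is only required for conditioning values $c$ of positive probability in $J$, so the cases $N_{R,c} = 0$ or $N_{S,c} = 0$ are vacuous. This also explains why we never need a closed form for $N_J$ (it equals $\sum_c N_{R,c}\, N_{S,c}$, but it cancels in every ratio above).
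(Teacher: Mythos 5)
Your proof is correct and follows essentially the same route as the paper's: establish the multiplicativity of tuple counts across the natural join ($N_{J,r',c,s'} = N_{R,r',c}\cdot N_{S,c,s'}$ and its marginals), then divide to verify the product form of the conditional probabilities. The extra remarks on bag semantics and on conditioning values of zero probability are sensible hygiene but do not change the argument.
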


\cut{
\begin{proof}
Fix any value $z$ (as a set) of the attributes $Z = R \cap S$. Given $Z = z$, all tuples $r \in R$ with $r.Z = z$ join with all tuples $s \in S$ with $s. Z = z$, and with no other tuples.
Let $X = R \setminus S$, $Y = S \setminus R$, $T = R \Join S$. For any values $X = x, Y = y$, $N_{T, xz} = N_{R, xz} \times N_{S, z}$, $N_{T, yz} = N_{S, yz} \times N_{R, z}$,
$N_{T, xyz} = N_{R, xz} \times N_{S, yz}$, and $N_{T, z} = N_{R, z} \times N_{S, z}$. Hence it follows that
$$\frac{N_{T, xyz}}{N_{T, z}} = \frac{N_{T, xz}}{N_{T, z}} \times \frac{N_{T, yz}}{N_{T, z}}$$
\ie, $X \indep Y |_{T} Z$.
\end{proof}
}

Proof is in the appendix. The following corollary follows from the graphoid axioms using the decomposition rule (\ref{equn:dec}) multiple times:
\begin{corollary}\label{cor:decomp-subset}
For any probability space $P$, if $X \indep Y |_P Z)$, then for all subsets $X' \subseteq X$ and $Y' \subseteq Y$, $X' \indep Y' |_P Z$.
\end{corollary}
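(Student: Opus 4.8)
The plan is to obtain the statement directly from the graphoid axioms of Theorem~\ref{thm:graphoid}, which hold for conditional independence in any probabilistic model $P$; only the decomposition rule~(\ref{equn:dec}) and symmetry~(\ref{equn:sym}) are needed, and the idea is simply to shrink the two sides of the independence one at a time.

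First I would dispose of the degenerate cases: if $X' = \emptyset$ or $Y' = \emptyset$, then $X' \indep Y' |_P Z$ holds trivially, so assume both are nonempty. Under the standing assumption that $X, Y, Z$ are mutually disjoint, the sets $X$, $Z$, $Y'$, and $Y \setminus Y'$ are pairwise disjoint and $Y = Y' \cup (Y \setminus Y')$. Applying decomposition~(\ref{equn:dec}) to the triple $\I(X, Z, Y' \cup (Y \setminus Y'))$ --- which is exactly $X \indep Y |_P Z$ --- yields $\I(X, Z, Y')$, \ie $X \indep Y' |_P Z$.

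Next I would repeat the argument on the other side. By symmetry~(\ref{equn:sym}), $\I(Y', Z, X)$ holds; writing $X = X' \cup (X \setminus X')$ as a disjoint decomposition, decomposition~(\ref{equn:dec}) applied to $\I(Y', Z, X' \cup (X \setminus X'))$ gives $\I(Y', Z, X')$. A final application of symmetry~(\ref{equn:sym}) gives $\I(X', Z, Y')$, that is $X' \indep Y' |_P Z$, as required. (Equivalently one may iterate decomposition along a chain that strips off one attribute of $Y \setminus Y'$ at a time and then one attribute of $X \setminus X'$ at a time, which is the ``apply decomposition multiple times'' phrasing of the remark preceding the statement.)

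There is essentially no obstacle here: the only points needing care are that all the sets involved stay pairwise disjoint so the axioms apply in the form stated in Theorem~\ref{thm:graphoid}, and the empty-set corner cases, both handled above. As an alternative one could also prove the corollary in one line straight from the definition, by marginalizing the identity $\Pr_R[X = x, Y = y \mid Z = z] = \Pr_R[X = x \mid Z = z]\,\Pr_R[Y = y \mid Z = z]$ over the values of the attributes in $X \setminus X'$ and $Y \setminus Y'$; but the graphoid route is cleaner and self-contained given Theorem~\ref{thm:graphoid}.
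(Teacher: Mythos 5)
Your proof is correct and follows exactly the route the paper indicates: the paper simply remarks that the corollary "follows from the graphoid axioms using the decomposition rule multiple times," and your argument (decomposition to shrink $Y$ to $Y'$, symmetry, decomposition again to shrink $X$ to $X'$, symmetry) is the natural spelling-out of that, with the empty-set corner cases handled correctly.
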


Using Corollary~\ref{cor:decomp-subset} and Lemma~\ref{lem:cl-join-0}, the following corollary holds:
\begin{corollary}\label{cor:cl-join-0}
For all subsets of attributes $R' \subseteq R \setminus S$ and $S' \subseteq S \setminus R$ (including singleton attributes), $R' \indep S' |_{R \Join S} (R \cap S)$.
\end{corollary}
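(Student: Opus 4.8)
The plan is to obtain this as an immediate consequence of the ``full'' independence established in Lemma~\ref{lem:cl-join-0} together with the decomposition principle recorded in Corollary~\ref{cor:decomp-subset}. First I would invoke Lemma~\ref{lem:cl-join-0} to get $(R \setminus S) \indep (S \setminus R) |_{R \Join S} (R \cap S)$, viewing $R \Join S$ together with the distribution $\Pr_{R \Join S}$ of Section~\ref{sec:prelim} as a probability space $P$ in the sense of Corollary~\ref{cor:decomp-subset}. Then I would apply that corollary with $X = R \setminus S$, $Y = S \setminus R$, $Z = R \cap S$, and with the chosen subsets $X' = R' \subseteq X$ and $Y' = S' \subseteq Y$; this yields exactly $R' \indep S' |_{R \Join S} (R \cap S)$. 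Since singleton attributes are a special case of subsets, the parenthetical ``(including singleton attributes)'' requires no separate argument.

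Before applying Corollary~\ref{cor:decomp-subset} I would verify its hypotheses. That corollary is itself obtained by iterating the decomposition rule (\ref{equn:dec}) of the graphoid axioms, which is stated only for mutually disjoint sets. Here $R' \subseteq R \setminus S$, $S' \subseteq S \setminus R$, and $R \cap S$ are pairwise disjoint by construction, and the same remains true at each intermediate step (peel the unused attributes off the left-hand side using symmetry (\ref{equn:sym}), then off the right-hand side, using (\ref{equn:dec})), so the disjointness requirement is always met and the rule applies. Hence the derivation goes through with no extra side conditions.

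I do not expect any genuinely hard step: all the combinatorial content of the statement is already carried by Lemma~\ref{lem:cl-join-0}, whose proof conditions on a value $z$ of $R \cap S$, observes that the join degenerates to the Cartesian product of the $z$-slices of $R$ and of $S$, and reads off the factorizations $N_{T,xyz} = N_{R,xz}\,N_{S,yz}$ and $N_{T,z} = N_{R,z}\,N_{S,z}$ (with $T = R \Join S$) and the two analogous identities for $N_{T,xz}$ and $N_{T,yz}$. If one wanted a self-contained argument bypassing Corollary~\ref{cor:decomp-subset}, the identical counting works verbatim with $R',S'$ in place of $R \setminus S, S \setminus R$: for each value $z$ of $R \cap S$ one gets $\Pr_{T}[R' = r', S' = s' \mid R \cap S = z] = \tfrac{N_{R,r'z}}{N_{R,z}}\cdot\tfrac{N_{S,s'z}}{N_{S,z}}$, which is the product of the corresponding conditional marginals. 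I would nonetheless present the decomposition-based proof, since it reuses Lemma~\ref{lem:cl-join-0} and the graphoid machinery already introduced rather than repeating the counting.
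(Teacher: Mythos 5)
Your proposal is correct and matches the paper's own derivation: the paper likewise obtains Corollary~\ref{cor:cl-join-0} by applying the decomposition-based Corollary~\ref{cor:decomp-subset} to the full independence $(R \setminus S) \indep (S \setminus R) |_{R \Join S} (R \cap S)$ from Lemma~\ref{lem:cl-join-0}. Your added checks on disjointness and the optional direct counting argument are fine but not needed beyond what the paper states.
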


\textbf{(II) CIs with join attributes on the RHS:} 
Here we show that if the joined attributes belong to the RHS of a CI in a base relation, then the CI propagates to the joined relation.
\begin{lemma}\label{lem:ci-join-1}
For any $X \indep Y |_R Z$ in $\ci_R$,  in the joined relation $X \indep Y |_{R \Join S} Z$ if $R \cap S \subseteq  Z$, \ie, all join attributes $R \cap S$ belongs to $Z$.
\end{lemma}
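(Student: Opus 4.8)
The plan is a direct bag-cardinality computation that extends the argument behind Lemma~\ref{lem:cl-join-0}. Write $W = R \cap S$ for the set of join attributes. Since the CI $X \indep Y |_R Z$ lives in $\ci_R$, the sets $X, Y, Z$ are mutually disjoint subsets of $R$, and by hypothesis $W \subseteq Z$; hence $X \cup Y \cup Z \subseteq R$, and the only attributes the join adds on top of $R$ -- those in $S \setminus R$ -- are disjoint from $X, Y, Z$. The key point is that once we fix a value $z$ of $Z$ we have also fixed the value $w = z[W]$ of the join attributes, so every tuple of $R$ that matches a partial assignment to (subsets of) $X, Y$ together with $Z = z$ joins with exactly the same $N_{S, w}$ tuples of $S$ -- those with $W = w$ -- and each such pair yields a distinct tuple of $T = R \Join S$ whose restriction to $X, Y, Z$ equals that of its $R$-component.

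First I would record the resulting cardinality identities: for any values $x, y, z$ of $X, Y, Z$ with $w = z[W]$,
\begin{eqnarray*}
N_{T, z} &=& N_{R, z}\cdot N_{S, w},\qquad N_{T, xz} \;=\; N_{R, xz}\cdot N_{S, w},\\
N_{T, yz} &=& N_{R, yz}\cdot N_{S, w},\qquad N_{T, xyz} \;=\; N_{R, xyz}\cdot N_{S, w}.
\end{eqnarray*}
Each follows from the observation above: the left side counts pairs $(r,s)$ with $r \in R$ matching the given values (so $r[W] = w$) and $s \in S$ with $s[W] = w$, which is the product on the right. Second, assuming $\Pr_T[Z = z] > 0$ -- equivalently $N_{T, z} > 0$, which forces $N_{R, z} > 0$ and $N_{S, w} > 0$ -- I divide these identities, so the common factor $N_{S, w}$ cancels and
\[
\Pr_T[X = x, Y = y \mid Z = z] = \frac{N_{R, xyz}}{N_{R, z}} = \Pr_R[X = x, Y = y \mid Z = z],
\]
with the same statement for $\Pr_T[X = x \mid Z = z]$ and $\Pr_T[Y = y \mid Z = z]$ against their $R$-counterparts. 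Finally, since $X \indep Y |_R Z$ gives $\Pr_R[X = x, Y = y \mid Z = z] = \Pr_R[X = x \mid Z = z]\cdot\Pr_R[Y = y \mid Z = z]$ for every $z$ with $\Pr_R[Z = z] > 0$, and $\Pr_T[Z = z] > 0$ implies $\Pr_R[Z = z] > 0$, the identical product equation holds for all relevant $z$ in $T$, i.e., $X \indep Y |_{R \Join S} Z$.

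The only delicate point -- and the only place the hypothesis $R \cap S \subseteq Z$ is really used -- is the cardinality step: the factor multiplying each of $N_{R, z}, N_{R, xz}, N_{R, yz}, N_{R, xyz}$ must be the \emph{same} number $N_{S, w}$. This is exactly what $W \subseteq Z$ buys us; if some join attribute lay outside $Z$, the number of $S$-tuples eligible to join would depend on the $X$- or $Y$-value (or become a sum over the $W$-values compatible with it), the cancellation would break, and the CI would in general fail to propagate -- consistent with the remark, made earlier, that a CI in a base relation need not survive the join. Everything else is routine, so I expect no further obstacle; the analogous statement with $R$ and $S$ interchanged follows by symmetry of the natural join.
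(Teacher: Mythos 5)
Your proof is correct and follows essentially the same route as the paper's: both arguments observe that, because the join attributes lie in $Z$, every $R$-tuple with $Z=z$ joins with exactly the same number $N_{S,w}$ of $S$-tuples, so all four counts $N_{R,z},N_{R,xz},N_{R,yz},N_{R,xyz}$ scale by a common factor that cancels in the conditional-probability ratios. Your explicit statement of the four cardinality identities and of the positivity bookkeeping is a slightly more careful write-up of the same computation, not a different argument.
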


Proof is in the appendix.
\cut{
\begin{proof}[of Lemma~\ref{lem:ci-join-1}
By the definition of conditional independence, and since $X \indep Y |_{R \Join S} Z$, for all values $x, y, z$ of $X, Y, Z$ we have
\begin{equation}\label{equn:6}
\frac{N_{R, xyz}}{N_{R, z}} = \frac{N_{R, xz}}{N_{R, z}} \times \frac{N_{R, yz}}{N_{R, z}}
\end{equation}
Fix arbitrary $x, y, z$.
Let $Z = Z_1 \cup Z_2$, where $Z_1 = R \cap S$ and $Z_2  = Z \setminus Z_1$.
Let $Z_1 = z_1$ and $Z_2 = z_2$ in $z$.
Note that $N_{S, (Z_1, z_1)}$ is the number of tuples in $S$ with the value of the join attributes as $z_1$. Each tuple $r$ in $R$ with $Z = z$, \ie, with $Z_1 = z_1$, joins with all tuples in $S$ with $Z_1 = z_1$, and joins with no other tuples in  $S$. Hence multiplying the numerator and denominator all terms in equation (\ref{equn:6}) above, we get
$$\frac{N_{R \Join S, xyz}}{N_{R \Join S, z}} = \frac{N_{R \Join S, xz}}{N_{R \Join S, z}} \times \frac{N_{R \Join S, yz}}{N_{R \Join S, z}}$$
In other words, $X \indep Y |_{R \Join S} Z$.
\end{proof}
}




\textbf{(III) CIs with join attributes on the LHS:}
Here we show that if the join attributes belong to one of the subsets of attributes on the LHS, then also the CI propagates to the joined relation.
\begin{lemma}\label{lem:ci-join-2}
For any $X \indep Y |_R Z$ in $\ci_R$,  in the joined relation $X \indep Y |_{R \Join S} Z$ if $(R \cap S) \subseteq X$, \ie, all join attributes $R \cap S$ belongs to $X$ (similarly $Y$).
\end{lemma}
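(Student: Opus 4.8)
The plan is to mimic the counting argument used for Lemma~\ref{lem:ci-join-1}, but this time tracking how tuples multiply when the join attributes sit inside $X$ rather than inside $Z$. Fix arbitrary values $x,y,z$ of $X,Y,Z$, and write $X = X_1 \cup X_2$ with $X_1 = R\cap S$ and $X_2 = X \setminus X_1$; say $X_1 = x_1$, $X_2 = x_2$ in $x$. The first observation is that, since $X_1 = R\cap S \subseteq X$, the value of the join attributes is completely determined once we condition on $X = x$: every tuple of $R$ contributing to the count $N_{R,xz}$, or to $N_{R,xyz}$, has join-attribute value exactly $x_1$. Consequently each such tuple of $R$ joins with precisely the $N_{S,(X_1,x_1)}$ tuples of $S$ whose join-attribute value is $x_1$, and with no others. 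Let $m := N_{S,(X_1,x_1)}$.

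The second step is to compute the relevant counts in $R\Join S$. Because $X_1$ is pinned to $x_1$ in all three of $N_{R,xyz}$, $N_{R,xz}$, and $N_{R,xyz}$'s ingredients, multiplying each contributing $R$-tuple by its $m$ matching $S$-tuples gives $N_{R\Join S, xyz} = m\cdot N_{R,xyz}$ and $N_{R\Join S, xz} = m\cdot N_{R,xz}$. For the denominator $N_{R\Join S, z}$ and for $N_{R\Join S, yz}$ we must be a little more careful, since conditioning only on $Z=z$ (or on $Y=y,Z=z$) does not pin down the join attributes. However, we only ever need the ratio; the cleanest route is to establish the conditional independence directly by summing over $X_2$. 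Using the law of total probability inside the conditional distribution given $Z=z$ (equivalently, Corollary~\ref{cor:decomp-subset}-style manipulations), it suffices to verify
\[
\Pr_{R\Join S}[X = x, Y = y \mid Z = z] = \Pr_{R\Join S}[X = x \mid Z = z]\cdot \Pr_{R\Join S}[Y = y \mid Z = z]
\]
for every $x$, and this reduces, after the substitutions $N_{R\Join S, xyz} = m\, N_{R,xyz}$ and $N_{R\Join S, xz} = m\, N_{R,xz}$ (with $m$ depending only on $x_1$, hence only on $x$), to the identity $N_{R,xyz}/N_{R,z'} = (N_{R,xz}/N_{R,z'})\cdot(\text{factor independent of }x)$, which follows from $X\indep Y\mid_R Z$ together with the fact that the "$S$-side multiplicities" attached to a fixed $Y=y$ are the same in numerator and denominator. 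Finally, the "$Y$" version of the statement is immediate by the Symmetry axiom (\ref{equn:sym}) of Theorem~\ref{thm:graphoid}, swapping the roles of $X$ and $Y$.

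The main obstacle I anticipate is bookkeeping the $S$-side multiplicities in the terms where the join attributes are \emph{not} fixed by the conditioning — namely $N_{R\Join S, z}$ and $N_{R\Join S, yz}$ — since there a single $R$-tuple with join value $v$ contributes $N_{S,(X_1,v)}$ copies, and $v$ ranges over all values of $X_1$ compatible with $Z=z$ (resp. $Y=y,Z=z$). The key point that makes everything cancel is that the set of admissible join values $v$, and their $S$-multiplicities, depend only on the conditioning event and on $Y$, not on which particular $X$-value we are testing; combined with $X\indep Y\mid_R Z$ in $R$, this forces the product factorization in $R\Join S$. I would isolate this as the one genuine computation and present it by grouping tuples of $R$ according to their join-attribute value before multiplying through by the corresponding $S$-counts.
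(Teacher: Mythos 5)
Your proposal is correct and follows essentially the same route as the paper's proof: decompose $X$ into the join attributes plus the rest, observe that each $R$-tuple contributing to $N_{R\Join S,xyz}$ or $N_{R\Join S,xz}$ is multiplied by the same $S$-side count $N_{S,(R\cap S,x_1)}$, express $N_{R\Join S,z}$ and $N_{R\Join S,yz}$ as sums over join values weighted by their $S$-multiplicities, and let the CI identity in $R$ force the cross-product equality. The only difference is presentational — the paper writes out the double sums explicitly while you defer that bookkeeping to a final remark — but the cancellation you describe is exactly the one the paper carries out.
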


\cut{
\begin{proof}[of Lemma~\ref{lem:ci-join-2}]
Since $X \supseteq (R	 \cap S)$,  suppose $X = W \cup V$ where $V = (R \cap S)$ and $W = Z \setminus W$, \ie, $V$ denotes the set of joined attributes.
By the definition of conditional independence, and since $WV \indep Y |_{R \Join S} Z$, for all values $y, w, v, z$ of $Y, W, V, Z$, from (\ref{equn:6}) we have:
\begin{equation}\label{equn:7}
N_{R, wvz} \times N_{R, yz} = N_{R, wvyz} \times N_{R, z}
\end{equation}
In the joined relation $T = R \Join S$, every tuple with a value $V = v$ in $R$ will join with all $N_{S, v}$ tuples in $S$ and with no other tuples. Hence,
{\small
\begin{eqnarray}
\nonumber
&&N_{T, wvyz} \times N_{T, z}\\\nonumber
& = & N_{T, wvyz} \times (\sum_{w', v', y'} N_{T, w'v''y'z})\\\nonumber
& = & (N_{R, wvyz} \times N_{S, v}) \times \sum_{w', v', y'} (N_{R, w' v' y' z} \times N_{S, v'})\\\nonumber
& = & (N_{R, wvz} \times N_{R, yz}) \times N_{S, z} \times \sum_{w', v', y'} (N_{R, w'v'z} \times N_{R, y'z} \times N_{S, v'})\\\nonumber
&&\phantom{(N_{R, wvz} \times N_{R, yz}) \times N_{S, z} \times }~~~\textrm{(by (\ref{equn:7}))}\\\nonumber
& = & N_{R, wvz} \times N_{R, yz} \times N_{S, z} \times \sum_{w', v'} (N_{R, w'v'z} \times N_{S, v'} \times (\sum_{y'}N_{R, y'z}) )\\\nonumber
& = & N_{R, wvz} \times N_{R, yz} \times N_{S, z} \times \sum_{w', v'} (N_{R, w'v'z} \times N_{S, v'} \times N_{R, z}) \\
& = & N_{R, wvz} \times N_{R, yz} \times N_{S, z} \times N_{R, z} \times \sum_{w', v'} (N_{R, w'v'z} \times N_{S, v'})\label{equn:8}
\end{eqnarray}
}
And,

{\small
\begin{eqnarray}
\nonumber
&&N_{T, wvz} \times N_{T, yz}\\\nonumber
& = & (\sum_{y'} N_{T, wvy'z}) \times (\sum_{w', v'} N_{T, w'v''yz})\\\nonumber
& = & \sum_{y'} N_{R, wvy'z} \times N_{S, z}) \times \sum_{w', v'} (N_{R, w' v' y z} \times N_{S, v'})\\\nonumber
& = & N_{S, z} \times (\sum_{y'}N_{R, wvz} \times N_{R, y'z})  \times \sum_{w', v'} (N_{R, w'v'z} \times N_{R, yz} \times N_{S, v'})\\\nonumber
&&\phantom{(N_{R, wvz} \times N_{R, yz}) \times N_{S, z} \times }~~~\textrm{(by (\ref{equn:7}))}\\\nonumber
& = & N_{R, wvz} \times N_{R, yz} \times N_{S, z} \times (\sum_{y'} N_{R, y'z})  \times  \sum_{w', v'} (N_{R, w'v'z} \times N_{S, v'})\\\nonumber
& = & N_{R, wvz} \times N_{R, yz} \times N_{S, z} \times N_{R, z}\sum_{w', v'} (N_{R, w'v'z} \times N_{S, v'} \times N_{R, z}) \\
& = & N_{R, wvz} \times N_{R, yz} \times N_{S, z} \times N_{R, z} \times \sum_{w', v'} (N_{R, w'v'z} \times N_{S, v'})\label{equn:9}
\end{eqnarray}
}
From (\ref{equn:8}) and (\ref{equn:9}), for all $x = (w, v), y, z$
\begin{eqnarray}
\nonumber
&& N_{T, wvyz} \times N_{T, z} = N_{T, wvz} \times N_{T, yz}\\\nonumber
& \Rightarrow & \frac{N_{T, xyz}}{N_{T, z}} = \frac{N_{T, xz}}{N_{T, z}} \times \frac{N_{T, yz}}{N_{T, z}}
\end{eqnarray}
\ie, in $T = R \Join S$, $X \indep Y |_{R \Join S} Z$.
\end{proof}
}

The proof is in the appendix. The following example shows that the conditions in Lemmas~\ref{lem:ci-join-1} and \ref{lem:ci-join-2} are necessary in the sense that if the join attribute do not participate in the a CI, it may not extend to the joined relation.
\begin{proposition}\label{prop:join-1-tight}
There exist relation instances for $R, S$ and a CI such that violating the conditions in both Lemmas~\ref{lem:ci-join-1} and \ref{lem:ci-join-2}  prohibits the propagation of the CI to the joined relation $R \Join S$.
\end{proposition}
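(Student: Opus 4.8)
The plan is to exhibit an explicit pair of small relation instances $R, S$ and a single conditional independence $X \indep Y \mid_R Z$ that holds in $R$, where the join attributes $R \cap S$ are disjoint from each of $X$, $Y$, and $Z$ (so that neither Lemma~\ref{lem:ci-join-1} nor Lemma~\ref{lem:ci-join-2} applies), and then to verify by direct computation that $\neg(X \indep Y \mid_{R \Join S} Z)$. The simplest setting is to take $R$ with attributes $\{A, B, K\}$ and $S$ with attributes $\{K, C\}$, so $R \cap S = \{K\}$; we pick $X = A$, $Y = B$, and $Z = \emptyset$ (marginal independence), so that $K$ participates in none of $X, Y, Z$, violating the hypotheses of both lemmas. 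We want $A \indep B$ in $R$, but the multiplicities $N_{S,(K,k)}$ to reweight the rows of $R$ in the join in a way that breaks the product rule for $A$ and $B$.

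The key steps, in order, are: (i) choose an instance of $R$ on $\{A, B, K\}$ in which $A$ and $B$ are marginally independent but $A$ and $K$ (or $B$ and $K$) are \emph{not} independent --- e.g. $R$ containing the four tuples $(a_1,b_1,k_1)$, $(a_1,b_2,k_2)$, $(a_2,b_1,k_2)$, $(a_2,b_2,k_1)$, for which each of $A=a_i$, $B=b_j$ has probability $1/2$ and each pair $(a_i,b_j)$ has probability $1/4$, so $A \indep B$ in $R$, yet the joint distribution of $(A,K)$ is correlated; (ii) choose $S$ on $\{K, C\}$ so that $k_1$ and $k_2$ appear with \emph{different} multiplicities, say $N_{S,(K,k_1)} = 1$ and $N_{S,(K,k_2)} = 2$ (the attribute $C$ is just a placeholder, e.g. a constant); (iii) form $R \Join S$, which replaces each $R$-tuple with $K=k_1$ by one copy and each with $K=k_2$ by two copies, yielding the six tuples with $(A,B)$-labels $(a_1,b_1)\times 1$, $(a_1,b_2)\times 2$, $(a_2,b_1)\times 2$, $(a_2,b_2)\times 1$; (iv) compute the marginals in $R \Join S$: $\Pr[A=a_1] = 3/6$, $\Pr[B=b_1] = 3/6$, but $\Pr[A=a_1,B=b_1] = 1/6 \neq (1/2)(1/2)$, so $A \indep B$ fails in $R \Join S$. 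Finally, note that $K$ belongs to neither $X=A$, $Y=B$, nor $Z=\emptyset$, so indeed both Lemma~\ref{lem:ci-join-1} (which requires $R\cap S \subseteq Z$) and Lemma~\ref{lem:ci-join-2} (which requires $R\cap S \subseteq X$ or $Y$) have their hypotheses violated, and the conclusion of neither is available --- matching the claim.

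There is essentially no deep obstacle here; the only thing requiring a little care is arranging that $A \indep B$ genuinely holds in $R$ while the correlation between $K$ and $A$ (and between $K$ and $B$) is strong enough that the multiplicity-reweighting induced by $S$ actually destroys the product rule rather than merely perturbing counts proportionally. This is why the "XOR-style" instance above is convenient: $A$ and $B$ are independent by symmetry, but conditioning on $K$ makes them perfectly anti-correlated, so any nonuniform reweighting by $K$ necessarily introduces dependence between $A$ and $B$. One should also double-check that the same instance simultaneously witnesses that the two lemmas' hypotheses are both violated (not just one), which is immediate since $Z = \emptyset$ and $K \notin \{A,B\}$. If a nonempty $Z$ is preferred for a cleaner statement, one can add a dummy attribute to $R$ that is constant across all tuples and put it in $Z$; the computation is unchanged.
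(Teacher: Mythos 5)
Your proposal is correct and takes essentially the same approach as the paper: the paper's own counterexample uses $R(A,B,C,D)$ with a constant attribute $C$ playing the role of your dummy $Z$, a join attribute $D$ correlated with $(A,B)$, and an $S$ with nonuniform multiplicities over the $D$-values ($2,3,1,1$) so that the join reweights the $(A,B)$ joint distribution nonproportionally and destroys $A\indep B\mid_{R\Join S} C$. Your XOR-style instance with multiplicities $1,2$ is a slightly smaller witness of the same mechanism, and your remark about adding a constant conditioning attribute exactly reproduces the paper's setup.
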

\begin{proof}
Consider the relations $R, S, $ and $R \Join S$ below:\\

{\small
\begin{tabular}{|c|c|c|c|}
\multicolumn{4}{c}{$\mathbf{R}$}\\
\hline
$A$ & $B$ & $C$ & $D$\\
\hline
$a_1$ & $b_1$ & $c$ & $d_1$ \\
$a_1$ & $b_2$ & $c$ & $d_2$ \\
$a_2$ & $b_1$ & $c$ & $d_3$ \\
$a_2$ & $b_2$ & $c$ & $d_4$ \\\hline
\end{tabular}
\begin{tabular}{|c|c|}
\multicolumn{2}{c}{$\mathbf{S}$}\\
\hline
$D$ & $E$\\
\hline
$d_1$ & $e_1$ \\
$d_1$ & $e_2$ \\
$d_2$ & $e_1$ \\
$d_2$ & $e_2$ \\
$d_2$ & $e_3$ \\
$d_3$ & $e_1$ \\
$d_4$ & $e_1$ \\\hline
\end{tabular}
\begin{tabular}{|c|c|c|c|c|}
\multicolumn{5}{c}{$\mathbf{R} \Join \mathbf{S}$}\\
\hline
$A$ & $B$ & $C$ & $D$ & $E$\\
\hline
$a_1$ & $b_1$ & $c$ & $d_1$ & $e_1$ \\
$a_1$ & $b_1$ & $c$ & $d_1$ & $e_2$ \\
$a_1$ & $b_2$ & $c$ & $d_2$ & $e_1$  \\
$a_1$ & $b_2$ & $c$ & $d_2$ & $e_2$  \\
$a_1$ & $b_2$ & $c$ & $d_2$ & $e_3$  \\
$a_2$ & $b_1$ & $c$ & $d_3$ & $e_1$  \\
$a_2$ & $b_2$ & $c$ & $d_4$ & $e_1$  \\\hline
\end{tabular}\\
}
\smallskip

In $R$, $A \indep B |_R C$, but in $R \Join S$, $\Pr[A = a_1, B = b_1 | C = c] = \frac{2}{7}$, whereas $\Pr[A = a_1| C = c] = \frac{5}{7}$
$\Pr[B = b_1 | C = c] = \frac{3}{7}$. Hence $\neg(A \indep B |_{R \Join S} C)$ (note that the join attribute $D$ is not a subset of the RHS of the CI $A \indep B |_R C$. 
\par
On the other hand, $A \indep B |_R CD$, which propagates to $R \Join S$ as  $A \indep B |_{R \Join S} CD$ since $D$ belongs to the RHS of this CI.
\end{proof}
However, as we will see in the next section (Theorem~\ref{thm:pmap-ci-propagates}), if the CIs are generated by an undirected graphical model, then \emph{all} CIs in the base relation propagate to the joined relation.

\subsection{CI propagation for Foreign-Key Joins and One-one Joins}\label{sec:join-ci-fk}
\textbf{Foreign key joins:~} Proposition~\ref{prop:join-1-tight} shows that not all CIs from a base relation propagates to the joined relation. However, if the joined attributes form a \emph{foreign key} in $R$ referring to the primary key in $S$, then all CIs in $R$ propagate to $R \Join S$.
\begin{proposition}\label{prop:join-fk}
If the join attributes $(R \cap S)$ in a foreign key in $R$ referring to the primary key in $S$, then for all CI $X \indep Y |_R Z$ in $\ci_R$, it holds that $X \indep Y|_{R \Join S} Z$.
\end{proposition}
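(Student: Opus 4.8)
The plan is to observe that a foreign-key/primary-key join adds nothing new to the $R$-part of the data: it merely appends to each tuple of $R$ a block of extra columns that are functionally determined by the join attributes, so every conditional independence among $R$-attributes is transported verbatim. Write $V = R \cap S$ and $W = S \setminus R$. Since $V$ is the primary key of $S$, each value $v$ of $V$ occurs in at most one tuple of $S$; since $V$ is a foreign key of $R$ referring to that primary key, referential integrity guarantees that every $V$-value occurring in $R$ also occurs in $S$. Hence every tuple $t \in R$ joins with exactly one tuple of $S$, and the map $\pi$ sending $t$ to its extension by that unique matching $S$-tuple is a bijection $\pi : R \to R \Join S$ with $\pi(t)[A] = t[A]$ for every attribute $A \in R$. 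In particular $N_{R \Join S} = N_R$, and more generally, for every $U \subseteq R$ and every value assignment $u$ to $U$, $\pi$ restricts to a bijection between $\{t \in R : t[U] = u\}$ and $\{t' \in R \Join S : t'[U] = u\}$, so $N_{R \Join S, u} = N_{R, u}$.

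With this in hand the proof is immediate. Let $X \indep Y |_R Z$ be any CI in $\ci_R$; by definition $X, Y, Z \subseteq R$, so all value combinations below range over attributes of $R$. For all values $x, y, z$ of $X, Y, Z$, the definition of conditional independence in $R$ gives
\[
\frac{N_{R, xyz}}{N_{R, z}} \;=\; \frac{N_{R, xz}}{N_{R, z}} \cdot \frac{N_{R, yz}}{N_{R, z}} .
\]
Applying the count-preservation identity $N_{R \Join S, u} = N_{R, u}$ with $u$ ranging over $\{xyz, xz, yz, z\}$ turns this into the same identity with $R$ replaced by $R \Join S$, which is exactly $X \indep Y |_{R \Join S} Z$.

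The one delicate point — and the part I would state explicitly in the write-up — is the use of referential integrity: it is what rules out "dangling" tuples of $R$ that fail to join and would be silently dropped by the (inner) natural join. Without it, $R \Join S$ would be in bijection only with the sub-relation $R' \subseteq R$ of joining tuples, and a CI that holds in $R$ need not survive restriction to $R'$, since conditioning on membership in $R'$ can create dependence. Everything else is routine bookkeeping. I would also note that this argument subsumes the one-one-join case of Section~\ref{sec:join-ci-fk} (where $V$ is a key on both sides), and that it can alternatively be routed through the earlier machinery — the appended block $W$ satisfies the functional dependency $V \to W$ in $R \Join S$, hence $W$ is conditionally independent of everything given $V$ — but the direct counting argument above is the shortest route.
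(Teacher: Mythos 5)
Your proof is correct and follows essentially the same route as the paper's: both rest on the observation that under a foreign-key/primary-key join every tuple of $R$ joins with exactly one tuple of $S$, so all the frequencies $N_{R,\cdot}$ appearing in the definition of the CI carry over unchanged (the paper phrases this as each count being ``multiplied by 1''), and the identity defining $X \indep Y |_R Z$ transfers verbatim to $R \Join S$. Your explicit remark that referential integrity is what rules out dangling tuples is a worthwhile clarification of an assumption the paper leaves implicit in the phrase ``foreign key,'' but it does not change the argument.
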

The proof is in the appendix.
\cut{
\begin{proof}[of Proposition~\ref{prop:join-fk}]
With a primary key join, every tuple in $R$ joins with exactly one tuple in $S$. Hence for $T = R \Join S$, and for all $x, y, z$ values of $X, Y, Z$, if
$\frac{N_{R, xyz}}{N_{R, z}} = \frac{N_{R, xz}}{N_{R, z}} \times \frac{N_{R, yz}}{N_{R, z}}$, then
$\frac{N_{T, xyz}}{N_{T, z}} = \frac{N_{T, xz}}{N_{T, z}} \times \frac{N_{T, yz}}{N_{T, z}}$, since all frequencies in all numerators and denominators in $R$ is multiplied by 1 to obtain the frequencies in $T$.
\end{proof}
}


\textbf{One-one joins:~} Note that the propagation rule in Proposition~\ref{prop:join-fk} is not symmetric, since some CI in $S$ not satisfying the conditions in Theorem~\ref{thm:ci-join} may not propagate to $R \Join S$. However, if $(R \cap S)$ is a key (superkey) of both $R$ and $S$, and if $R, S$ have the same set of keys, then the CIs from both relations propagate to $R \Join S$.

\begin{proposition}\label{prop:join-one-one}
If $\pi_{R \cap S} R= \pi_{R \cap S}S$ and if $(R \cap S)$ is a key in both $R$ and $S$, then all the CIs from both $\ci_R$ and $\ci_S$ propagate to $R \Join S$.
\end{proposition}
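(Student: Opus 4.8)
The plan is to reduce Proposition~\ref{prop:join-one-one} to two applications of Proposition~\ref{prop:join-fk}, one in each direction. First I would observe that the two hypotheses together force the join to be a bijection on tuples. Since $(R\cap S)$ is a key of $S$, every value of the join attributes occurs in at most one tuple of $S$; since $\pi_{R\cap S}R = \pi_{R\cap S}S$, every join-attribute value occurring in $R$ occurs (hence occurs exactly once) in $S$; therefore every tuple $r\in R$ joins with exactly one tuple of $S$. Exchanging the roles of $R$ and $S$ — using that $(R\cap S)$ is also a key of $R$ and the reverse containment $\pi_{R\cap S}S\subseteq\pi_{R\cap S}R$ — every tuple of $S$ joins with exactly one tuple of $R$. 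Consequently $N_{R\Join S}=N_R=N_S$, and there is a natural bijection between the tuples of $R$ and those of $R\Join S$ (and likewise between $S$ and $R\Join S$) that preserves the values of all attributes of $R$ (resp.\ $S$).

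Next I would use this bijection to transfer frequencies. Fix a CI $X\indep Y|_R Z\in\ci_R$, so that $X,Y,Z\subseteq R$. Because the bijection between $R$ and $R\Join S$ preserves all $R$-attribute values, $N_{R\Join S,a}=N_{R,a}$ for every assignment $a$ to any subset of attributes of $R$ — in particular for the four quantities $N_{\cdot,xyz}$, $N_{\cdot,xz}$, $N_{\cdot,yz}$, $N_{\cdot,z}$ occurring in the definition of conditional independence. Hence from
\[
\frac{N_{R,xyz}}{N_{R,z}} = \frac{N_{R,xz}}{N_{R,z}} \times \frac{N_{R,yz}}{N_{R,z}}
\]
holding for all values $x,y,z$, the identical equation holds in $R\Join S$, i.e.\ $X\indep Y|_{R\Join S}Z$. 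Running the same argument with $R$ and $S$ interchanged shows every CI in $\ci_S$ propagates as well. (Equivalently, one can simply note that the hypotheses make $(R\cap S)$ act simultaneously as a ``foreign key of $R$ into $S$'' and as a ``foreign key of $S$ into $R$'' in the sense required by Proposition~\ref{prop:join-fk}, and invoke that proposition twice.)

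The only real care needed — and where a naive argument could slip — is in the bag-semantics bookkeeping: one must check that $(R\cap S)$ being a key (superkey) of $R$ rules out duplicate tuples in $R$, so that the map $R\to R\Join S$ is a genuine bijection rather than a surjection with hidden multiplicities, and one must be sure to use the projection equality in the correct direction for each of the two halves (the containment $\pi_{R\cap S}R\subseteq\pi_{R\cap S}S$ is what is needed for $\ci_R$, and $\pi_{R\cap S}S\subseteq\pi_{R\cap S}R$ for $\ci_S$, which is precisely why the statement assumes full equality rather than one-sided containment). Once these points are pinned down, the frequency-preservation step above is immediate and no further calculation is required.
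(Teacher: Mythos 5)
Your proposal is correct and follows essentially the same route as the paper: the two hypotheses make the join one-to-one in both directions, so it reduces to two applications of Proposition~\ref{prop:join-fk} (one per relation), with the frequency-preservation argument being exactly the one used in that proposition's proof. The extra care you take with bag semantics and with which containment direction serves which half is a reasonable elaboration of the paper's one-line justification, not a departure from it.
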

The proposition directly follows from Proposition~\ref{prop:join-fk} -- since$\pi_{R \cap S} R= \pi_{R \cap S}S$, $R, S$ must have the same set of primary keys, and therefore $|R| = |S| = |R \Join S|$.
Also note that the condition $\pi_{R \cap S} R= \pi_{R \cap S}S$ is necessary, otherwise some tuples may be lost in the join destroying the CI.
\cut{
Consider the $R$ relation instance in Proposition~\ref{prop:join-1-tight}, and consider an instance of $S$ containing three tuples $S(d_1, e_1)$, $S(d_2, e_2)$, $S(d_3, e_3)$. Since $d_4$ does not join with any tuple, the CI $A \indep B |_R C$ does not hold any more in $R \Join S$. The CI propagates in this particular example is solved if we take an outer join between $R, S$ instead of an inner join. However, if we add the tuple $S(d_5, e_5)$ in $S$, the CI does not propagate again.
}
Theorem~\ref{thm:pmap-ci-propagates} in Section~\ref{sec:undirected} states that for relations that are graph-isomorph, the CIs propagate. The above proposition states that even if the relation is not graph-isomorph, but if the join is one-one, then also the CIs propagate.

\subsection{Application to Observational Studies}\label{sec:app-ci}
Whether a subset of variables $X \subseteq \attr$ satisfies strong ignorability, in general, is untestable even for a single relation since the test involves missing data in the form of potential outcomes $Y(0), Y(1)$ (in contrast to observed outcome $Y$), although there are sufficient conditions assuming a causal graphical model (\cite{PearlBook2000}, Section~\ref{sec:app-causal-nw}). In this section, first we show a negative result -- if the join attributes belong to $X$, then $X$ satisfies strong ignorability, but is not useful since the estimated average causal effect of $T$ on $Y$ will be zero (Section~\ref{sec:join-var-useless}). Then we give a positive result (Section~\ref{sec:avoid-join}) that if we are given a set of variables $X$  satisfying ignorability spanning multiple relations participating in a join with primary key-foreign keys, it suffices to condition on subset of $X$ restricted to the relation(s) containing $T$ and $Y$, which increases efficiency and reduces variance since the matched groups based on the same values of the covariates will be bigger. 
\subsubsection{Conditioning on join variables is not useful}\label{sec:join-var-useless}
\cut{
In practice, often  matching is used on observational data with all attributes in the relation as covariates $X$ to estimate the treatment effect of column $T$
 on outcome $Y$. This implicitly assumes the \emph{strong ignorability assumption} (Definition~\ref{def:SITA}) that \red{the columns $T$ and $Y(0), Y(1)$} are conditionally independent given $X$. 
 \par
 The following theorem gives a soundness property, that even if the attributes are collected from two relations $R, S$ 
 with arbitrary schema that participate in an arbitrary natural
  join (not necessarily foreign key or one-one joins), if the treatment $T$ and outcome $Y$ \red{(and therefore the potential outcomes $Y(0, Y(1))$)} belong to different relations, conditioned on all other attributes in $R \cup S$, or conditioned on a subset of attributes in $R \cup S$ that contains the join attributes $R \cap S$, the strong ignorability condition holds, and therefore, these covariates can be used in matching. This is independent of all CIs that hold in $R$ or $S$.
\babak{Pleas revise your theorem in the light of our discussion}
}
Here we show the following proposition, which states that conditioning on any subset of variables that includes the join variables is not useful for causal analysis despite satisfying strong ignorability.
\begin{proposition}\label{prop:app-causality-all}
If the treatment and the outcome (\ie, also the potential outcome) variables come from two different relations, \ie, $T \in R$ and $Y \in S$ (\ie, $Y(0), Y(1) \in S$), then given any subset of attributes $X \subseteq (RS \setminus TY)$ such that $X \supseteq (R \cap S)$, (i) $T$ and $Y(0), Y(1)$ are independent (thereby satisfying strong ignorability). However, (ii) $T$ and $Y$ are also independent, and therefore (iii) the average treatment effect of $T$ on $Y$ using $X$ is zero. 
\end{proposition}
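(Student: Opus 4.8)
\textbf{Proof plan for Proposition~\ref{prop:app-causality-all}.}
The plan is to obtain all three parts from the single structural fact of Lemma~\ref{lem:cl-join-0} (equivalently Corollary~\ref{cor:cl-join-0}) together with the graphoid axioms of Theorem~\ref{thm:graphoid}. First I would fix notation. Since $T \in R$, $Y, Y(0), Y(1) \in S$, and $T, Y \notin X$ while $X \supseteq R \cap S$, we get $T \notin R \cap S$ and $Y \notin R \cap S$, hence $T \in R \setminus S$ and $Y, Y(0), Y(1) \in S \setminus R$. Write $W = X \setminus (R \cap S)$; every attribute of $W$ lies in $R \cup S$ but not in $R \cap S$, so $W$ is a disjoint union $W = W_R \cup W_S$ with $W_R \subseteq R \setminus S$ and $W_S \subseteq S \setminus R$. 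Consequently $\{T\} \cup W_R \subseteq R \setminus S$ and $\{Y(0), Y(1)\} \cup W_S \subseteq S \setminus R$ are disjoint sets, and both are disjoint from $R \cap S$.

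For part (i): by Corollary~\ref{cor:cl-join-0} applied to these two subsets, $(\{T\} \cup W_R) \indep (\{Y(0), Y(1)\} \cup W_S) \mid_{R \Join S} (R \cap S)$. Now I would push $W_S$ and $W_R$ into the conditioning set using the graphoid axioms of Theorem~\ref{thm:graphoid}: one application of weak union (\ref{equn:wu}) gives $(\{T\} \cup W_R) \indep \{Y(0), Y(1)\} \mid_{R \Join S} (R \cap S) \cup W_S$; then symmetry (\ref{equn:sym}) followed by weak union again gives $\{Y(0), Y(1)\} \indep \{T\} \mid_{R \Join S} (R \cap S) \cup W_S \cup W_R$. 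Since $(R \cap S) \cup W_S \cup W_R = X$, a final use of symmetry yields $T \indep Y(0), Y(1) \mid_{R \Join S} X$, which is exactly the conditional-independence clause of strong ignorability (Definition~\ref{def:SITA}).

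For part (ii): run the identical chain with the single observed attribute $Y \in S \setminus R$ in place of $\{Y(0), Y(1)\}$, obtaining $T \indep Y \mid_{R \Join S} X$. For part (iii): unfolding the definition of conditional independence, $T \indep Y \mid X$ forces $\Pr_{R \Join S}[Y = y \mid T = t, X = x] = \Pr_{R \Join S}[Y = y \mid X = x]$ for $t \in \{0,1\}$ (whenever the conditioning event has positive probability), hence $E[Y \mid T = 1, X = x] = E[Y \mid T = 0, X = x] = E[Y \mid X = x]$ for every $x$; substituting into the adjusted estimand of Equation~(\ref{equn:ae}) gives $A(T, Y, X) = E_X[E[Y \mid T = 1, X]] - E_X[E[Y \mid T = 0, X]] = E_X[E[Y \mid X]] - E_X[E[Y \mid X]] = 0$.

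The argument is essentially routine once Lemma~\ref{lem:cl-join-0} is available, so I do not expect a deep obstacle. The only points requiring care are the bookkeeping of disjointness of the variable sets through each application of weak union and symmetry (the graphoid axioms in Theorem~\ref{thm:graphoid} are stated only for disjoint triples, and I verified above that disjointness is preserved), and the positivity caveat ``whenever $\Pr > 0$'' that also appears as clause~1 of strong ignorability and in the definition of the adjusted estimand.
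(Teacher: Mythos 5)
Your proposal is correct and matches the paper's own proof essentially step for step: the paper likewise splits $X$ into $(R\cap S)$ plus the parts $U\subseteq R$ and $V\subseteq S$ (your $W_R, W_S$), applies Corollary~\ref{cor:cl-join-0} to $UT$ and $VY'$ with $Y'$ a placeholder for either $Y(0),Y(1)$ or $Y$, absorbs $V$ and $U$ into the conditioning set by weak union (with symmetry left implicit), and derives (iii) by noting $E[Y\mid T=1,X]=E[Y\mid T=0,X]$ in the adjusted estimand of Equation~(\ref{equn:ae}). Your explicit bookkeeping of disjointness and of the intermediate symmetry applications is a minor refinement, not a different argument.
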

The proof uses graphoid axioms, and is given in the appendix. Note that the columns $Y(0), Y(1)$ are hypothetical, since only $Y = Y(0) (1 - T) + Y(1) T$ is observed due to the fundamental problem of causal analysis from missing data. 
As a special case, the proposition shows that conditioning on $X = $ all other attributes in $R \Join S$ except $T, Y$ is not useful (which is often done in statistical causal analysis for applied problems involving a single relation), since the estimated average treatment effect will be zero. This further motivates the study of CI and understanding ignorability for joined relations (further discussed in Section~\ref{sec:app-causal-nw}).

\cut{
\babak{Speculations: 1) Assume T and Y are in two tables R and S. Assume there exists a set Z=WUG (W in R, U in R $\cap$ S and G in S)
such that strong ignorability hold for Z. Then string ignorability also hold for Z=WU or Z=GU. This is we don't need to joint at all!
2) If Y and T belong to tow tables with k-k relationship the above still holds. Not sure yet about arbitrary join. But this seems like the right way to follow.}
}

\cut{
\begin{proof}[of Theorem~\ref{prop:app-causality-all}]
Let
$U, V$ denote the subset of attributes in $R$ and $S$ respectively in $X$ that do not belong to $T, Y, $ or $R \cap S$.
Then,
\begin{eqnarray*}
&&UT \indep VY |_{R\Join S} (R \cap S)~~~~\textrm{(Corollary~\ref{cor:cl-join-0})}\\
& \Rightarrow & UT \indep Y |_{R\Join S} (R \cap S) V~~~~\textrm{(weak union (\ref{equn:wu})}\\
& \Rightarrow & T \indep Y |_{R\Join S} (R \cap S) VU~~~~\textrm{(weak union (\ref{equn:wu})}\\
& \equiv & T \indep Y |_{R\Join S} X
\end{eqnarray*}
since $X = \{U\} \cup \{V\} \cup (R \cap S)$.
\end{proof}
}

\subsubsection{Avoiding joins}\label{sec:avoid-join}
Assume we are given $k$ relations $R_1, \ldots, R_k$, where the outcome $Y \in R_i$. 
Suppose $U$ is the universal table obtained by joining these relations.
Further, assume that we are given a set of covariates $X \subseteq U$ that satisfies strong ignorability, 
\ie, $Y(1),Y(0) \indep T |_U X$. 
Suppose $X_j = X \cap R_j$, $j \in [1, k]$. The following proposition says that, if $X_i$ contains foreign keys to all other relations, then it suffices to replace $X$ with $X_i$ for covariate adjustment, since $X$ and $X_i$ are c-equivalent (Definition~\ref{def:c-eq}).

\cut{

	We will use the following properties of entropy \cite{InfoTheoryBookCoverThomas, DBLP:conf/pods/DalkilicR00}:

\begin{proposition} \label{obs:entropy_prop} 
 For $X, Y, Z \subseteq \attr$
		\begin{itemize}
		\itemsep0em
		\item[(a)] $H(X) \geq 0$, $H(X|Y) \geq 0$, $I(X, Y) \geq 0$
	and  $I(X, Y|Z) \geq 0$
		\item[(b)]  $X \indep Y$ if and only if $I(X,Y)=0$, and $X \indep Y |Z$  if and only if $I(X,Y|Z)=0$.
	\item[(c)] If $X$ functionally determines $Y$, \ie, if $X \rightarrow Y$, then $H(Y|X)=0$.
	\item[(d)] If $H(Y|X)=0$ then for any $Z$, $H(Y|XZ)=0$
	\item[(e)] If $X \rightarrow Y$ (or, if $H(Y|X) = H(Y|XZ) = 0$) then for any $Z$, $I(Y,Z|X)$ = 0. 
		\end{itemize}

\end{proposition}

\begin{proof}
 Proofs of (a, b) can be found in \cite{InfoTheoryBookCoverThomas} (some observations are obvious). To see (c) (also shown in \cite{DBLP:conf/pods/DalkilicR00}), note that if $X$ functionally determines $Y$
	then for any $x \in X$ and $y \in Y$ $P(Y=y,X=x)=P(X=x)$ (follows from the definition of a functional dependency).
	Thus $H(X,Y)=H(X)$ and therefore, $H(Y|X)=H(X,Y)-H(X)=0$. (d) is obtained from the 
	non-negativity of mutual information. Since $I(Y, Z|X) \geq 0$ then $H(Y|X)-H(Y|XZ) \geq 0$, i.e., 
	$H(Y|X) \geq H(Y|XZ)$. Now if
	$H(Y|X)=0$ then $H(Y|XZ)\leq 0$. Since $H(Y|XZ) \geq 0$, $H(Y|XZ) = 0$. (e) follows from (c) and (d), since $I(Y,Z|X) = H(Y|X) - H(Y|XZ)$. 
\end{proof}	
}

\begin{proposition} \label{prop:cequi}
If $U = R_1 \Join \cdots \Join R_k$, $X \subseteq U$, $X_j = X \cap R_j$ for $j \in [1, k]$, $R_i$ contains $Y$, and 
$X_i$ contains foreign keys to all relations $R_j$, $j \in [1, k]$, $j \neq i$,
then 
$X$ and $X_i$ are c-equivalent.
\end{proposition}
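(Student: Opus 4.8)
The plan is to reduce the statement to Theorem~\ref{th:ceq} (Pearl and Paz's sufficient condition for $c$-equivalence), applied to the covariate sets $X$ and $X_i$. Since $X_i \subseteq X$, one of the two conditional independences demanded by each branch of that theorem is vacuous, so the whole argument will come down to a single structural fact about the join: conditioning on $X_i$ already pins down all of $X$ inside $U$.

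The first and main step is to prove that \emph{$X$ is functionally determined by $X_i$ in $U = R_1 \Join \cdots \Join R_k$}. Here I use the hypothesis that $X_i = X \cap R_i$ contains, for every $j \neq i$, a foreign key of $R_i$ referencing the primary key of $R_j$. By referential integrity, in the natural join each tuple of $R_i$ joins with exactly the one tuple of each $R_j$ selected by that foreign-key value; hence the tuples of $U$ are in bijection with those of $R_i$, and every attribute of $R_j$ (in particular every attribute of $X \setminus X_i \subseteq \bigcup_{j \neq i} R_j$) is, within $U$, a function of the foreign-key attributes, all of which lie in $X_i$. So $X_i \to X$ holds in $U$: each value $X_i = x_i$ extends to a unique $x^\ast$ with $\Pr_U[X = x^\ast \mid X_i = x_i] = 1$.

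The remaining steps are routine. I would first record the elementary fact that whenever a set $B$ is functionally determined by a set $C$ in a relation, then $A \indep B \mid C$ holds there for any disjoint $A$, since given $C = c$ the value of $B$ is fixed and the conditional joint probability of $(A,B)$ factorizes trivially. Then I apply Theorem~\ref{th:ceq}, branch (i), to $X$ and $X_i$: the condition $T \indep X_i |_U X$ is immediate because $X_i \subseteq X$, while $Y \indep X |_U X_i, T$ is, after discarding the part of $X$ already conditioned on, just $Y \indep (X \setminus X_i) |_U X_i, T$, which follows from the elementary fact with $C = X_i \cup \{T\}$ and $B = X \setminus X_i$ (using $X_i \to X$, hence $X_i T \to X$, from the previous step). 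Therefore $X$ and $X_i$ are $c$-equivalent.

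The only delicate point is the functional-dependency step: one must argue carefully that the multi-way natural join really does collapse to a bijection with $R_i$ --- i.e.\ that each $R_i$-tuple contributes exactly one tuple of $U$ and that the other relations' attributes are pinned by the foreign keys in $X_i$ --- which relies on the join being along precisely the foreign-key/primary-key attributes, so that no tuples are lost or spuriously duplicated. Once that is in hand, the rest is bookkeeping with Theorem~\ref{th:ceq} and the definition of conditional independence.
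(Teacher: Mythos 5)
Your proposal is correct and follows essentially the same route as the paper: both verify branch (i) of Theorem~\ref{th:ceq} for the pair $(X, X_i)$, obtaining $T \indep X_i |_U X$ from the trivial functional dependency $X \to X_i$ and $Y \indep X |_U X_i T$ from the fact that the foreign keys contained in $X_i$ functionally determine $X \setminus X_i$ in the join. The only difference is presentational — the paper routes the "functional dependency implies conditional independence" step through entropy and conditional mutual information (its Proposition on entropy properties), whereas you state it directly as an elementary fact about the conditional distribution.
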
 

Intuitively, if $X_i$ contains the foreign keys to all other relations, then adjusting for $X_i$ is sufficient, since any two tuple in $U$ that have the same value of $X$, will have the same value of $X_i$. 
The proof uses entropy and mutual information (Section~\ref{sec:prelim}), and uses the fact that $I(X, Y | Z) = 0$ if and only if $X \indep Y | Z$ \cite{InfoTheoryBookCoverThomas}; the proof is given in the appendix. 
However, from Proposition~\ref{prop:app-causality-all}, if $T$ and $Y$ belong to two different relations, then conditioning on the join attributes (foreign keys) will result in a zero causal effect. Nevertheless, the above proposition implies that if $T$ and $Y$ belong to the same relation $R_i$, and if $\neg (T \indep Y_{R_i} X_i)$, then it is enough to condition on the foreign keys and the covariates in $R_i$, and the join can be avoided.

\cut{
\begin{proof}
To prove the claim, we show that one of the conditions in Theorem \ref{th:ceq} 
is satisfied by $X$  and $X_i$. In particular, we show that both 
\begin{equation}\label{equn:equiv-1}
T \indep X_i|_U  X
\end{equation} and 
\begin{equation}\label{equn:equiv-2}
Y \indep X|_U X_iT
\end{equation} 
 hold. 

First we prove (\ref{equn:equiv-1}). Since $X_i \subseteq  X$, $X$ functionally
determines $X_i$ (this is a trivial functional dependency in $U$). 
Now, from  Proposition~\ref{obs:entropy_prop}(e) it follows that, $I(X_i, T|X)$.  Therefore, $T \indep X_i|_U  X$ (Proposition~ \ref{obs:entropy_prop}(c)), \ie,   (\ref{equn:equiv-1}) holds. 

Next we show (\ref{equn:equiv-2}). Since $H(A|B) = H(AB) - H(B)$, 
\begin{equation}\label{equn:equiv-3}
H(X \setminus X_i | X_i) = H(X) - H(X_i) = H(X|X_i)
\end{equation}
Let $FK_{-i}=\{FK_1, \ldots, FK_n\} \setminus FK_i$. Note that $FK_{-i}$ functionally determines $X-X_i$ in $U$. Therefore,  

\begin{eqnarray*} \tiny
&& H(X \setminus X_i|FK_{-i})=0 ~~~~~~~~~~\textrm{(Proposition \ref{obs:entropy_prop}(c))} \label{equn:given} \nonumber\\ 
& \Rightarrow & H(X \setminus X_i|FK_{-i}X_i)=0 ~~~~~~ \textrm{(Proposition \ref{obs:entropy_prop}(d))} \nonumber \\  
& \Rightarrow & H(X \setminus X_i|X_i)=0  ~~~~~~~~~~~~~~ \textrm{(Since } {FK_{-i} \subset X_i)} \nonumber \\  
& \Rightarrow & H(X|X_i)=0   ~~~~~~~~~~~~~~ \textrm{{(From } (\ref{equn:equiv-3}))} \nonumber \\  
& \Rightarrow & H(X|X_i,T)=0  ~~~~~~~~~~~~~~~~~ \textrm{(Proposition \ref{obs:entropy_prop}(d))} \nonumber \\  
& \Rightarrow & I(X,Y|X_iT)=0 ~~~~~~~~~~~~~~~ \textrm{(Proposition \ref{obs:entropy_prop}(e))} \nonumber \\  
& \Rightarrow & X\indep Y| X_i, T   ~~~~~~~~~~~~~~~~~~~~~ \textrm{(Proposition \ref{obs:entropy_prop}(b))}\nonumber   
\end{eqnarray*}
\end{proof}
}

\smallskip
The implications of the results established in this section to causal inference is two-fold.  \emph{First,} they provide a principled
way to reduce the number of covariates needed for estimating causal effect. This can be done by starting with a set of CIs on base relations that
propagate to the joined relation (using Theorem \ref{thm:ci-join}, more CIs propagate if the relations are graph-isomorphic as discussed in the next section). Then Proposition~\ref{th:ceq} can be employed to infer $X'$, a smaller set of covariates that is c-equivalent with $X$. A special case is given in Proposition~\ref{prop:cequi}.  It is known that the quality of the matching estimators for causal effect
decreases with the number of covariates \cite{de2011covariate}. Hence, reducing the set of covariates is important for 
inferring robust causal conclusions. \emph{Second,} Proposition~\ref{prop:cequi} shows that under some circumstances it is not required to materialize the joined table involving all relations for collecting more covariates, and it suffices to focus on the subset of the given covariates $X$ in the relation containing $Y$.  This is not only useful from the efficiency point of view (the matching groups have to be performed on smaller covariates, and more matched groups are likely to be `valid' with at least one treatment and one control units), but is also interesting because it reveals that it is still possible to make causal inferences, when the values of the covariates in some of the base relations are not recorded or noisy (\eg, when the chema and a valid $X$ are given by a domain expert but the values in some of the other relations are unavailable).

\cut{
\sudeepa{I don't follow the last line in red. $X$ is given in the relations. So $X$ is observed. What is unobserved here.}
\babak{For instance we know that the flight data joined with the wetaher data on time/data/location attributes. however it might be the vase that the weather data is not available. Now we can still adjust for the weather by adjusting for data/time/location.}
}


\textbf{Further Questions:}
In this section we investigated join of two relations. Understanding the CIs for multiple relations may require investigating the query hypergraph. 
 If the join is on multiple relations where  different relations share a subset of attributes from other relations, such complex interaction may prohibit certain CIs to hold on the joined relation. On the other hand, special structure of the \emph{query hypergraph} (a hypergraph on all attributes where the relations form the hyperedges), like the \emph{ayclicity property} \cite{Beeri+83}, may allow some CIs in the joined relation.

\section{Using Undirected Graphical Models on the Base Relations}\label{sec:undirected}
In the previous section, we discussed sufficient conditions to infer CIs in the joined relation when the  dependency models satisfied by the two base relations are arbitrary. However, suppose the CIs in the base relations are graph-isomorph, \ie, the base relations $R, S$ have P-maps $G_1, G_2$. The question we study in this relation is whether $G_1, G_2$ help generate an I-map $G$ of $R \Join S$. Since our key motivation is causal analysis, then we will be able to infer correct CIs on $R \Join S$ using $G$. The other question is whether the information that $R$ (or $S$) is graph-isomorph, helps propagate additional CIs from $R$ to the joined relation. In this section, we answer these two questions affirmatively when there is only one join attribute: \emph{all CIs from the base relation $R$ propagate to the joined relation $R \Join S$ if $R$ is graph-isomorph} (Theorem~\ref{thm:pmap-ci-propagates}), and 
\emph{in addition, when both $G_1, G_2$ are connected, then union of $G_1, G_2$ produces an I-map of $R \Join S$} (Theorem~\ref{thm:union-imap}). The extension to multiple base relations and other research questions are discussed as future directions at the end of this section.


Recall that
$$X \indep Y |_{G} Z$$
denotes vertex separation in an undirected graph $G(V, E)$, \ie, if $X, Y, Z \subseteq V$ are disjoint subsets of vertices, then removing $Z$ and all the incident edges on $Z$ from $G$ (denoted by $G_{-Z}$) disconnects all paths between all vertices in $X$ and all vertices in $Y$ in $G$ (or $Z$ is a cutset in $G$ between $X, Y$). Below we state a property used in our proofs:

\begin{observation}\label{obs:cutset-sup}
In an undirected graph $G(V, E)$, if $X \indep Y |_{G} Z$, then for all supersets $Z'$ of $Z$, $X \indep Y |_{G} Z'$.
\end{observation}

Although we overload the notation for independence $\indep$ to also denote vertex separation in graphs, note that, $X \indep Y |_{G} Z$ by itself does not say \emph{anything} about CIs of $X, Y$ given $Z$. In fact, the results in this section aim to prove that if $X \indep Y |_{G} Z$, then $X \indep Y |_{R \Join S} Z$.

\subsection{CIs from a Relation with a P-Map Propagates to the Joined Relation}
Proposition~\ref{prop:join-1-tight} shows that not all CIs in $R \Join S$  propagate to the joined relation $R \Join S$. Here we show that if $R$ has a P-map $G_1$ (Definition~\ref{def:idp-maps}), \ie, if $X \indep Y |_{R} Z \Leftrightarrow X \indep Y |_{G_1} Z$ for all disjoint subsets $X, Y, Z \subseteq R$, then all CIs in $R$ propagate to $R \Join S$ in arbitrary natural joins.
\begin{theorem}\label{thm:pmap-ci-propagates}
If $R$ has a P-map $G_1$, $D = R \cap S$ is the single join attribute, and $X \indep Y |_{R} Z$ for disjoint $X, Y, Z \subseteq R$, then $X \indep Y |_{R \Join S} Z$.
\end{theorem}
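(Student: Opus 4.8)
The plan is to reduce to the propagation lemmas already proved (Lemma~\ref{lem:ci-join-1} and Lemma~\ref{lem:ci-join-2}) by a case analysis on where the single join attribute $D$ sits relative to the disjoint triple $(X,Y,Z)$. There are three cases: (i) $D \in Z$; (ii) $D \in X$ or $D \in Y$; (iii) $D \notin X \cup Y \cup Z$. In case (i), $R \cap S = \{D\} \subseteq Z$, so Lemma~\ref{lem:ci-join-1} immediately gives $X \indep Y |_{R \Join S} Z$. In case (ii), $R \cap S = \{D\}$ is contained in one of the two sides of the left-hand side of the CI, so Lemma~\ref{lem:ci-join-2} applies directly. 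The P-map hypothesis and the single-attribute restriction are needed only for case (iii).

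For case (iii) I would argue through the graph $G_1$. Since $G_1$ is a D-map of $R$, the CI $X \indep Y|_R Z$ gives the separation $X \indep Y|_{G_1} Z$, \ie\ $Z$ is a cutset between $X$ and $Y$ in $G_1$. Delete $Z$ from $G_1$ and let $K$ be the connected component of $G_1 \setminus Z$ that contains the single vertex $D$. Because $Z$ separates $X$ from $Y$, no component of $G_1 \setminus Z$ can contain both a vertex of $X$ and a vertex of $Y$; hence $K$ meets at most one of $X$ and $Y$. If $K \cap Y = \emptyset$, then no path in $G_1 \setminus Z$ joins $D$ to $Y$, so $Z$ still separates $X \cup \{D\}$ from $Y$, \ie\ $X \cup \{D\} \indep Y|_{G_1} Z$. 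If instead $K \cap X = \emptyset$, symmetrically $X \indep Y \cup \{D\}|_{G_1} Z$ (and if $K$ meets neither, either one holds).

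To conclude, I would push the augmented separation back through the P-map and then through the join. Take the sub-case $X \cup \{D\} \indep Y|_{G_1} Z$ (the other is symmetric). Since $X \cup \{D\}$, $Y$, $Z$ are mutually disjoint subsets of $R$ ($D \notin Y \cup Z$ in this case and $D = R \cap S \subseteq R$) and $G_1$ is also an I-map of $R$, this separation yields the CI $X \cup \{D\} \indep Y|_R Z$ in the base relation. Now $R \cap S = \{D\} \subseteq X \cup \{D\}$, so Lemma~\ref{lem:ci-join-2} propagates it to $X \cup \{D\} \indep Y|_{R \Join S} Z$. Finally Corollary~\ref{cor:decomp-subset} (decomposition on the left-hand side, using $X \subseteq X \cup \{D\}$) gives $X \indep Y|_{R \Join S} Z$, as required; the $K \cap X = \emptyset$ sub-case is identical with the roles of $X$ and $Y$ swapped.

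\textbf{Main obstacle.} The only nontrivial case is (iii), and within it the key fact is that a single deleted-vertex component $K$ cannot straddle both $X$ and $Y$ — this is precisely what lets us absorb $D$ into whichever side of the CI it is reachable from. It is also exactly where the single-join-attribute hypothesis is used: with two or more join attributes, different attributes of $R \cap S$ could lie in different components of $G_1 \setminus Z$, one reachable from $X$ and another from $Y$, so that neither Lemma~\ref{lem:ci-join-1} (which needs all join attributes in $Z$) nor Lemma~\ref{lem:ci-join-2} (which needs all join attributes together on one side of the left-hand side) would apply, and a genuinely different argument would be needed.
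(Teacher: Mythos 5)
Your proposal is correct and follows essentially the same route as the paper: the paper dispatches the cases $D\in X$, $D\in Y$, $D\in Z$ via Theorem~\ref{thm:ci-join} and handles the remaining case through Lemma~\ref{lem:either-or}, which establishes exactly your dichotomy ($X \indep YD |_R Z$ or $XD \indep Y |_R Z$) by arguing in the P-map that otherwise paths from $X$ to $D$ and from $D$ to $Y$ avoiding $Z$ would concatenate into a forbidden $X$--$Y$ path; your connected-component phrasing is just the direct form of that same contradiction argument, and the final steps (propagate via the join-attribute-on-the-LHS lemma, then decompose) coincide with the paper's.
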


Theorem~\ref{thm:pmap-ci-propagates} is proved using the following lemma; both proofs are given in the appendix.
\begin{lemma}\label{lem:either-or}
If $R$ has a P-map $G_1$, $D = R \cap S$ is the single join attribute, and $X \indep Y |_{R} Z$ for disjoint $X, Y, Z \subseteq R$, then either $X \indep YD |_R Z$ or $XD \indep Y |_R Z$.
\end{lemma}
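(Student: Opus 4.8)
The statement to prove is Lemma~\ref{lem:either-or}: if $R$ has a P-map $G_1$, the single join attribute is $D = R \cap S$, and $X \indep Y |_R Z$ for disjoint $X, Y, Z \subseteq R$, then either $X \indep YD |_R Z$ or $XD \indep Y |_R Z$. The plan is to work entirely in the P-map $G_1$, translating the probabilistic CI hypothesis into vertex separation and then reasoning about where the single vertex $D$ can sit relative to the cutset $Z$.

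**The main argument.** First I would handle the trivial cases: if $D \in Z$, then $X \indep YD|_R Z$ is not even a well-formed disjoint triple, but in that case $YD$ "reduces" to $Y \cup (Z \cap \{D\})$... actually more cleanly, if $D \in Z$ or $D \in X$ or $D \in Y$, one of the two conclusions holds immediately (e.g. if $D \in X$ then $XD = X$ and $XD \indep Y|_R Z$ is just the hypothesis; similarly for $D \in Y$; if $D \in Z$ the claim needs slight rephrasing but the intended reading makes it trivial). So assume $D \notin X \cup Y \cup Z$. Now use the P-map property: $X \indep Y|_R Z$ gives $X \indep Y|_{G_1} Z$, i.e. $Z$ is a cutset separating $X$ from $Y$ in $G_1$. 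Removing $Z$ from $G_1$ disconnects $X$ from $Y$; in particular, in $G_1 \setminus Z$ the connected component(s) containing vertices of $X$ are disjoint from those containing vertices of $Y$. The single vertex $D$, in $G_1 \setminus Z$, lies in some connected component; that component either contains no vertex of $Y$ (call this "$D$ is on the $X$-side or isolated from $Y$") or contains no vertex of $X$. In the first case, $Z$ still separates $X \cup \{D\}$ from $Y$ in $G_1$ — adding $D$ to the left side cannot create a path to $Y$ through $G_1 \setminus Z$ since $D$'s component avoids $Y$ — so $XD \indep Y|_{G_1} Z$, and by the I-map direction of the P-map, $XD \indep Y|_R Z$. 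In the second case symmetrically $X \indep YD|_{G_1} Z$, hence $X \indep YD|_R Z$. That exhausts the cases, since $D$'s component in $G_1 \setminus Z$ cannot contain both an $X$-vertex and a $Y$-vertex (that would contradict $Z$ separating $X$ from $Y$).

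**Where the subtlety lies.** The genuine content — and the step I would be most careful about — is the claim that $D$ being a \emph{single} vertex forces a clean dichotomy. If $D$ were a set of two or more attributes, its removal-graph component structure could straddle both sides, so this is exactly where the single-join-attribute hypothesis is used. I would also double-check the direction of the P-map implication being invoked: we use $X \indep Y|_R Z \Rightarrow X \indep Y|_{G_1} Z$ (the D-map direction) to pull the hypothesis into the graph, and $X'\indep Y'|_{G_1} Z \Rightarrow X'\indep Y'|_R Z$ (the I-map direction) to push the strengthened separation back to a probabilistic CI. Both directions hold precisely because $G_1$ is a P-map. Finally I would note Observation~\ref{obs:cutset-sup} is available if needed to tidy up boundary cases where some component of $D$ meets $Z$, though with $D \notin Z$ and $D$ a single vertex this should not arise. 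The remaining case analysis ($D \in X$, $D \in Y$, $D \in Z$) is routine and can be dispatched in a sentence each.
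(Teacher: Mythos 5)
Your proof is correct and is essentially the paper's argument: both pull the hypothesis into $G_1$ via the D-map direction, use the fact that the single vertex $D$ must lie on one side of the cutset $Z$ (the paper phrases this contrapositively, concatenating a path $X$--$D$ with a path $D$--$Y$ to contradict separation), and push the resulting separation back to a CI via the I-map direction. The paper additionally records a second, purely axiomatic proof using transitivity, strong union, and intersection from Theorem~\ref{thm:PP-undirected}, but your graph-theoretic route matches its primary proof.
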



\cut{
\begin{proof}[of Lemma~\ref{lem:either-or}]
Suppose not, \ie, assume the contradiction that $\neg (X \indep YD |_R Z)$ \emph{and} $\neg(XD \indep Y |_R Z)$.  Since $G_1$ is a P-map for $R$, it follows that $\neg (X \indep YD |_{G_1} Z)$ \emph{and} $\neg(XD \indep Y |_{G_1} Z)$. Since $\neg (X \indep YD |_{G_1} Z)$, in $G_1$, there is a  path from $X$ to $DY$ that does not use any vertex in $Z$. Since $X \indep Y|_R Z \equiv X \indep Y|_{G_1} Z$, removing $Z$ disconnects $X$ from $Y$, hence there must be a path $p_1$ from $X$ to $D$ that does not use vertices from $Z$. Similarly, using $\neg(XD \indep Y |_{G_1} Z)$, there is a path $p_2$ from $D$ to $Y$ that does not use vertices from $Z$. Combining $p_1$ and $p_2$ (and making it a simple path by removing vertices if needed), there is path from $X$ to $Y$ in $G_1$ that does not use vertices from $Z$, contradicting the given assumption that $X \indep Y|_{G_1} Z$, and equivalently $X \indep Y|_{R} Z$. Hence either $X \indep YD |_R Z$ or $XD \indep Y |_R Z$.
\end{proof}
}

\cut{
Now we prove Theorem~\ref{thm:pmap-ci-propagates}.
\begin{proof}[of Theorem~\ref{thm:pmap-ci-propagates}]
If joined attributes $D \subseteq X, Y, $ or $Z$, then by Theorem~\ref{thm:ci-join}, $X \indep Y |_{R \Join S} Z$.
\par
Otherwise, assume $D \not\subseteq X, Y$, and $Z$. By  Lemma~\ref{lem:either-or}, $X \indep YD |_R Z$ or $XD \indep Y |_R Z$. Without loss of generality, suppose $X \indep YD |_R Z$. Then by Theorem~\ref{thm:ci-join}, $X \indep YD|_{R \Join S} Z$. Then by the decomposition property of graphoid axioms (\ref{equn:dec}), $X \indep Y|_{R \Join S}Z$.
\end{proof}
}

Note that in $R \Join S$, which may not have a P-map in general (Section~\ref{sec:future-directed} gives an example that can be extended to a join), 
we can only apply graphoid axioms, whereas since $R$ has a P-map $G_1$, we can apply both graphoid axioms as well as the necessary and sufficient conditions from Theorem~\ref{thm:PP-undirected}.
\par
We revisit why in Proposition~\ref{prop:join-1-tight} the CI $A \indep B |_{R} Z$ did not propagate to $R \Join S$.  In relation $R$ in the example, $A \indep B|_R C$. If $R$ had a P-map, the transitivity property of Theorem~\ref{thm:PP-undirected} will hold, and we will have either $A \indep D|_R C$ or $D \indep B|_R C$. However, in the example, both do not hold.  On the other hand, an Example~\ref{eg:pmap-prop} in the appendix shows that if we replace the $R$ instance with one that is generated by the P-map $A-B-C-D$, then the CI will propagate to $R \Join S$ with the same $S$ instance.

\cut{
\begin{example}\label{eg:pmap-prop}
Suppose $R = (ABCD)$ is given by graph $G_1 = A - B - C - D$, and $S = (DE)$ is given by $D-E$. 
Here we give an instance of $R$ that conforms to $G_1$: (\ie, $A \indep CD |_R B$, $B \indep D |_R C$, $A \indep D |_R BC$); $S$ remains the same:\\

{\small
\begin{tabular}{|c|c|c|c|}
\multicolumn{4}{c}{$\mathbf{R}$}\\
\hline
$A$ & $B$ & $C$ & $D$\\
\hline
$a_1$ & $b_1$ & $c$ & $d_1$ \\
$a_1$ & $b_2$ & $c$ & $d_2$ \\
$a_2$ & $b_1$ & $c$ & $d_3$ \\
$a_2$ & $b_2$ & $c$ & $d_4$ \\
$a_1$ & $b_1$ & $c$ & $d_3$ \\
$a_1$ & $b_2$ & $c$ & $d_4$ \\
$a_2$ & $b_1$ & $c$ & $d_1$ \\
$a_2$ & $b_2$ & $c$ & $d_2$ \\\hline
\end{tabular}
\begin{tabular}{|c|c|}
\multicolumn{2}{c}{$\mathbf{S}$}\\
\hline
$D$ & $E$\\
\hline
$d_1$ & $e_1$ \\
$d_1$ & $e_2$ \\
$d_2$ & $e_1$ \\
$d_2$ & $e_2$ \\
$d_2$ & $e_3$ \\
$d_3$ & $e_1$ \\
$d_4$ & $e_1$ \\\hline
\end{tabular}
\begin{tabular}{|c|c|c|c|c|}
\multicolumn{5}{c}{$\mathbf{R} \Join \mathbf{S}$}\\
\hline
$A$ & $B$ & $C$ & $D$ & $E$\\
\hline
$a_1$ & $b_1$ & $c$ & $d_1$ & $e_1$ \\
$a_1$ & $b_1$ & $c$ & $d_1$ & $e_2$ \\
$a_1$ & $b_2$ & $c$ & $d_2$ & $e_1$  \\
$a_1$ & $b_2$ & $c$ & $d_2$ & $e_2$  \\
$a_1$ & $b_2$ & $c$ & $d_2$ & $e_3$  \\
$a_2$ & $b_1$ & $c$ & $d_3$ & $e_1$  \\
$a_2$ & $b_2$ & $c$ & $d_4$ & $e_1$  \\
$a_2$ & $b_1$ & $c$ & $d_1$ & $e_1$ \\
$a_2$ & $b_1$ & $c$ & $d_1$ & $e_2$ \\
$a_2$ & $b_2$ & $c$ & $d_2$ & $e_1$  \\
$a_2$ & $b_2$ & $c$ & $d_2$ & $e_2$  \\
$a_2$ & $b_2$ & $c$ & $d_2$ & $e_3$  \\
$a_1$ & $b_1$ & $c$ & $d_3$ & $e_1$  \\
$a_1$ & $b_2$ & $c$ & $d_4$ & $e_1$  \\
\hline
\end{tabular}\\
}

\smallskip
Also, note that, $A \indep C |_{R \Join S} B)$: in $R \Join S$, $\Pr[A = a_1, B = b_1 | C = c] = \frac{3}{14}$, whereas $\Pr[A = a_1| C = c] = \frac{7}{14}$ and
$\Pr[B = b_1 | C = c] = \frac{6}{14}$, \ie, the CI now propagates to $R \Join S$.
\end{example}
}

\subsection{An I-map for Joined Relation}\label{joined-imap}
Theorems~\ref{thm:pmap-ci-propagates} and \ref{thm:ci-join} give us sufficient conditions for some of the CIs that hold in the joined relation. The question we study in this section is, whether we can infer additional CIs when both $R$ and $S$ have P-maps $G_1$ and $G_2$. A natural choice is to consider the \emph{union graph} $G$ of $G_1, G_2$, where the set of vertices in $G$ is $R \cup S$ and the set of edges in the union of edges from $G_1, G_2$. We use the following two observations in our proofs.
\begin{observation}\label{obs:join-attr-cutset}
The join attributes $R \cap S$ form a cutset between $R \setminus S$ and $S \setminus R$ in $G$, \ie, all paths between the non-join attributes in two relations must go through $R \cap S$.
\end{observation}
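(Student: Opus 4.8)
The plan is to argue purely graph-theoretically, using only the way the union graph $G$ is assembled from the two P-maps. Recall that $G$ has vertex set $R \cup S$ and edge set $E(G_1) \cup E(G_2)$, that every edge of $G_1$ has both endpoints in $R$, and that every edge of $G_2$ has both endpoints in $S$. The decisive elementary fact is then: no edge of $G$ can have one endpoint in $R \setminus S$ and the other in $S \setminus R$. Indeed, such an edge would have to come from $G_1$ (but then its $S \setminus R$ endpoint would lie in $R$, a contradiction) or from $G_2$ (but then its $R \setminus S$ endpoint would lie in $S$, a contradiction). I would state this as the first step.

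Next I would fix an arbitrary simple path $P = (v_0, v_1, \ldots, v_\ell)$ in $G$ with $v_0 \in R \setminus S$ and $v_\ell \in S \setminus R$ and show it must visit a vertex of $R \cap S$. Assume not; then every $v_i$ lies in $(R \setminus S) \cup (S \setminus R)$, and since $v_0$ and $v_\ell$ fall on opposite sides of this partition, there is an index $i$ with $v_i$ and $v_{i+1}$ on opposite sides, i.e. one in $R \setminus S$ and the other in $S \setminus R$. The edge $\{v_i, v_{i+1}\}$ then contradicts the first step. Hence every path in $G$ from $R \setminus S$ to $S \setminus R$ passes through $R \cap S$, so deleting $R \cap S$ together with all incident edges disconnects every vertex of $R \setminus S$ from every vertex of $S \setminus R$; equivalently, in $G_{-(R \cap S)}$ the surviving edges are exactly the $G_1$-edges and $G_2$-edges avoiding $R \cap S$, and these lie entirely within $R \setminus S$ or entirely within $S \setminus R$. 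This is precisely the cutset claim.

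There is no genuine obstacle here — the statement is an immediate consequence of the fact that the two graphs being unioned share vertices only in $R \cap S$. The only points that require a little care in the write-up are to phrase the separation for the whole vertex sets $R \setminus S$ and $S \setminus R$ (rather than a single designated pair), and to note that the argument actually yields full disconnection of the two sides after removing $R \cap S$, which is slightly stronger than the bare "cutset" wording and is what the later I-map proof will want to invoke.
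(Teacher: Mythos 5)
Your argument is correct and is exactly the intended justification: the paper states Observation~\ref{obs:join-attr-cutset} without proof, treating it as immediate from the fact that every edge of the union graph $G$ comes from $G_1$ (both endpoints in $R$) or from $G_2$ (both endpoints in $S$), so no edge joins $R \setminus S$ directly to $S \setminus R$. Your write-up simply makes this implicit one-line reasoning explicit, so there is nothing to correct.
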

\begin{observation}\label{obs:subset-indep}
If $X \indep Y|_{G} Z$ where $X, Y, Z \subseteq R$ and are disjoint, then $X \indep Y|_{G_1} Z$.
\end{observation}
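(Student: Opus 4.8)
The plan is to exploit the fact that, by the very definition of the union graph, $G_1$ sits inside $G$ as a (not necessarily induced) subgraph: its vertex set $R$ is contained in $V(G) = R \cup S$, and its edge set is contained in $E(G) = E(G_1) \cup E(G_2)$. Since $X, Y, Z$ are all subsets of $R$, the separation statement $X \indep Y |_{G_1} Z$ is well-defined, and the claim reduces to a purely combinatorial monotonicity property: deleting $Z$ can only disconnect more pairs of vertices in the smaller graph than in the larger one.

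Concretely, I would argue by contraposition. Suppose $\neg(X \indep Y |_{G_1} Z)$; then in $G_1$ there is a path $\pi$ from some vertex $x \in X$ to some vertex $y \in Y$ that uses no vertex of $Z$. Every vertex of $\pi$ lies in $R \subseteq V(G)$ and every edge of $\pi$ lies in $E(G_1) \subseteq E(G)$, so $\pi$ is also a path in $G$, and it still avoids $Z$. Hence $Z$ does not separate $X$ from $Y$ in $G$, i.e. $\neg(X \indep Y |_{G} Z)$. Taking the contrapositive yields exactly the stated implication.

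There is essentially no hard step here; the only points that deserve a moment's care are (i) checking that the union-graph construction really does make $G_1$ a subgraph of $G$ edge-wise and vertex-wise (it does, even though the induced subgraph $G[R]$ may pick up extra edges from $G_2$ among the join attributes, which is harmless for this direction), and (ii) getting the direction right: separation in the \emph{larger} graph $G$ implies separation in the \emph{subgraph} $G_1$, not the other way around. This observation is the ``easy half'' that, combined with Observation~\ref{obs:join-attr-cutset} and the propagation results of the previous section, lets us turn a cutset in $G$ into a valid conditional independence in $R \Join S$.
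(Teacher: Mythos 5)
Your proof is correct and is essentially the paper's own argument: the paper justifies this observation in one line by noting that $G_1$ has a subset of the edges of $G$, so vertices disconnected in $G$ cannot be connected in $G_1$, which is exactly your contrapositive path-lifting argument spelled out in detail.
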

The above observation follows from the fact that $G_1$ has a subset of edges of $G$, and if two subset of nodes are not connected in $G$, they cannot be connected in $G_1$.
In this section, we assume that (i) \emph{the join is on a single attribute $D$, \ie, $R \cap S = \{D\}$}, and (ii) \emph{the P-maps $G_1, G_2$ are connected}, \ie, there is a path from all vertices in $R$ and $S$ to $D$ in $G_1, G_2$ respectively. 
\par
First we show that additional CIs inferred from the union graph $G$, not necessarily captured by Theorems~\ref{thm:pmap-ci-propagates} and \ref{thm:ci-join}, hold in the joined relation $R \Join S$.
\begin{lemma}\label{lem:two-one-prop}
Suppose $X, Y, Z$ are disjoint set of vertices in the union graph $G$ such that (i) $X$ and $Y$ are disconnected in the graph $G_{-Z}$, and (ii) $X, Z$ belong to one of $R, S$ and $Y$ belongs to the other relation, then $X \indep Y|_{R \Join S} Z$.
\end{lemma}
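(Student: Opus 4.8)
The plan is a case analysis on the position of the single join attribute $D = R \cap S$ relative to the sets $X, Y, Z$. Since both $G_1$ and $G_2$ are connected and there is a single join attribute, the roles of $(R,G_1)$ and $(S,G_2)$ are interchangeable, so assume without loss of generality that $X, Z \subseteq R$ and $Y \subseteq S$; also assume $X, Y \neq \emptyset$, since otherwise the conclusion is trivial. The easy case is $D \in Z$: then $X \subseteq R \setminus \{D\} = R \setminus S$, $Y \subseteq S \setminus \{D\} = S \setminus R$, and $Z \setminus \{D\} \subseteq R \setminus S$. Lemma~\ref{lem:cl-join-0} gives $(R \setminus S) \indep (S \setminus R) |_{R \Join S} \{D\}$; decomposition~(\ref{equn:dec}) and symmetry~(\ref{equn:sym}) yield $X \cup (Z \setminus \{D\}) \indep Y |_{R \Join S} \{D\}$, and weak union~(\ref{equn:wu}) (with symmetry) moves $Z \setminus \{D\}$ into the conditioning set, giving $X \indep Y |_{R \Join S} \{D\} \cup (Z \setminus \{D\}) = Z$.

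Now suppose $D \notin Z$. Then $Z$ is disjoint from $S$, so the connected graph $G_2$ is contained in $G_{-Z}$, and hence $D$ is connected to every vertex of $Y \subseteq S$ in $G_{-Z}$. Together with hypothesis (i) --- $X$ disconnected from $Y$ in $G_{-Z}$ --- this forces $X$ and $D$ to lie in different connected components of $G_{-Z}$ (otherwise concatenating an $X$--$D$ path with a $D$--$Y$ path would connect $X$ to $Y$); in particular $D \notin X$, so $X, Z \subseteq R \setminus S$. Restricting to the subgraph $G_1$ of $G$ (Observation~\ref{obs:subset-indep}) then gives $X \indep D |_{G_1} Z$, and since $G_1$ is a P-map of $R$, $X \indep D |_R Z$. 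As the set of join attributes $R \cap S = \{D\}$ lies wholly on one side of this conditional independence, Lemma~\ref{lem:ci-join-2} propagates it to the join: $X \indep D |_{R \Join S} Z$. Separately, writing $Y' = Y \setminus \{D\} \subseteq S \setminus R$ and noting $X \cup Z \subseteq R \setminus S$, Lemma~\ref{lem:cl-join-0} together with decomposition, symmetry and weak union gives $X \indep Y' |_{R \Join S} Z \cup \{D\}$. Finally, contraction~(\ref{equn:con}) applied to $X \indep D |_{R \Join S} Z$ and $X \indep Y' |_{R \Join S} Z \cup \{D\}$ gives $X \indep (Y' \cup \{D\}) |_{R \Join S} Z$, whence decomposition~(\ref{equn:dec}) yields $X \indep Y |_{R \Join S} Z$ (and when $D \in Y$ this independence already reads $X \indep Y |_{R \Join S} Z$).

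The main obstacle is the $D \notin Z$ case, and specifically the step of producing a genuine conditional independence \emph{in the base relation}: one must use connectedness of $G_2$ to determine on which side of the cut induced by $Z$ the attribute $D$ falls, conclude that $Z$ separates $X$ from $D$ already inside $R$'s own P-map $G_1$, and only then invoke the P-map property to obtain $X \indep D |_R Z$. With that in hand, Lemma~\ref{lem:ci-join-2} (which needs the join attribute to sit on one side of the CI) and Lemma~\ref{lem:cl-join-0} (the independence created by the join) supply exactly the two premises that contraction combines. I would also verify the degenerate situations --- $X$, $Y$, or $Z$ empty, and $D \in X$, which the argument above shows is incompatible with hypothesis (i) whenever $Y \neq \emptyset$ --- so that no configuration is missed.
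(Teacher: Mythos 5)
Your proof is correct and follows essentially the same route as the paper's: case split on the location of $D$, ruling out $D\in X$ via connectedness of $G_2$, and in the main case deriving $X\indep D|_R Z$ from vertex separation in the P-map $G_1$, propagating it to the join, and combining it with the join-induced independence of Lemma~\ref{lem:cl-join-0} via weak union, contraction, and decomposition. The only (harmless) differences are that you invoke Lemma~\ref{lem:ci-join-2} rather than Theorem~\ref{thm:pmap-ci-propagates} to propagate $X\indep D|_R Z$, and you fold the $D\in Y$ subcase into the main argument instead of treating it separately.
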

The proof is given in the appendix, which (along with all other proofs in this section) uses all four symmetry, decomposition, weak union, and contraction properties ((\ref{equn:sym})-(\ref{equn:con})) of graphoid axioms, as well as properties of vertex separation of graphs. To see an example of the application of Lemma~\ref{lem:two-one-prop}, consider two simple P-maps $A-B-D$ and $D-F$. Note that the independence $A \indep F|_{R \Join S} B$ does not directly follow from Theorems~\ref{thm:pmap-ci-propagates} and \ref{thm:ci-join}, 
but follows from Lemma~\ref{lem:two-one-prop}. 

\cut{
\begin{proof}[of Lemma~\ref{lem:two-one-prop}] Without loss of generality, assume $X, Z \subseteq R$ and $Y \subseteq S$. Let $D = R \cap S$ denote the singleton join attribute.
There are different cases:
\par
\textbf{(i) $D \in Z$, $D \notin X, Y$:} If $Z = D$, the lemma follows from Theorem~\ref{thm:ci-join}. Hence assume $Z = DZ_1$, where $Z_1 \subseteq R$. By  Theorem~\ref{thm:ci-join}, $XZ_1 \indep Y |_{R \Join S} D$. By weak union property of graphoid axioms, $X \indep Y |_{R \Join S} DZ_1$, or $X \indep Y |_{R \Join S} Z$.
\par
\textbf{(ii) $D \in X$ (similarly $Y$), $D \notin Y, Z$} If $D = X$, by Theorem~\ref{thm:ci-join} the lemma follows. Hence assume $X = DX_1$, where $X_1 \subseteq R$ and $DX_1 \indep Y |_{G} Z$. We claim that this case cannot arise. Suppose not. Then in $G$, no path exists between $D$ and $Y$ in $G_{-Z}$. However, $Z \in R$ whereas $D, Y \in S$, $G_2$ is connected by assumption, and the connectivity of $D$ and $Y$ is not affected by removing $Z$.
\par
\textbf{(iii) $D \not\in X, Y, Z$:} 
Since $X \indep Y |_{G} Z$, all paths between $X$ and $Y$ in $G$ go through $Z$. By Observation~\ref{obs:join-attr-cutset}, all paths between $X$ and $Y$ also go through $D$. Therefore,
\begin{equation}\label{equn:xd-i-z-g}
X \indep D |_{G} Z
\end{equation}
 since otherwise, there is a path from $X$ to $D$ that do not go through $Z$, and in conjunction with a path between $D$ to $Y$ in $G_2$ (we assumed that both $G_1, G_2$ are connected), we get a path between $X$ and $Y$ in $G$ that does not go through $Z$ violating the assumption that $X \indep Y |_{G} Z$.
\par
From (\ref{equn:xd-i-z-g}), since $G_1$ has a subset of edges of $G$ (if a path exists in $G_1$ it must exist in $G$), we have
\begin{eqnarray}
\nonumber
& & X \indep D |_{G_1} Z\\\nonumber
&\Rightarrow  &X \indep D |_{R} Z ~~~~~~~\textrm{(since $G_1$ is a P-map of $R$)}\\
& \Rightarrow & X \indep D |_{R\Join S} Z~~~~~~~\textrm{(from Theorem~\ref{thm:pmap-ci-propagates})}\label{equn:x-d-z-rs}
\end{eqnarray}
The last step follows from the fact that all of $X, Y, Z$ belong to $R$.
Now, from Corollary~\ref{cor:cl-join-0} we have
\begin{eqnarray}
\nonumber
& & XZ \indep Y |_{R \Join S} D\\\nonumber
&\Rightarrow  &X \indep Y |_{R\Join S} D Z \label{equn:x-y-dz-rs}
\end{eqnarray}
where the last step follows from weak union of graphoid axioms (\ref{equn:wu}).
From (\ref{equn:x-d-z-rs}) and (\ref{equn:x-y-dz-rs}), applying the contraction property of the graphoid axioms (\ref{equn:con}) (assume $\X = X, \Y = D, \Z = Z, \W = Y$),
\begin{eqnarray*}
&&X \indep YD |_{R \Join S} Z\\
& \Rightarrow & X \indep Y |_{R \Join S} Z
\end{eqnarray*}
by the decomposition property of the graphoid axioms (\ref{equn:dec}).
This proves the lemma.
\end{proof}
}

Now we move to the general case of separation in $G$.
Suppose $X \indep Y |_G Z$ where $X = X_1, X_2$, $Y = Y_1Y_2$, $Z = Z_1Z_2$, and $X_1, Y_1, Z_1 \subseteq R$, $X_2, Y_2, Z_2 \subseteq S$.  In the following lemmas we consider different cases when some of $X_i, Y_i, Z_i$-s are empty.
All proofs are in the appendix.
\begin{lemma}\label{lem:y2-z2-empty}
If $Y_2 = Z_2 = \emptyset$, then $X_1X_2 \indep Y_1 |_{R \Join S} Z_1$.
\end{lemma}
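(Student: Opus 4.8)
The plan is to split $X=X_1X_2$ along the two relations, propagate the $R$-piece via Theorem~\ref{thm:pmap-ci-propagates}, propagate the $S$-piece via Lemma~\ref{lem:two-one-prop} (after first moving $X_1$ into the conditioning set so the two pieces become compatible), and then glue them back together with the contraction axiom. Throughout, note that $Y_2=Z_2=\emptyset$ forces $Y=Y_1\subseteq R$ and $Z=Z_1\subseteq R$; and we may assume, placing the join attribute $D=R\cap S$ into $X_1$ if it occurs in $X$, that $X_1=X\cap R$ and $X_2=X\cap(S\setminus R)$ partition $X$. The hypothesis then reads $X_1X_2\indep Y_1|_G Z_1$, and if $X_1=\emptyset$ or $X_2=\emptyset$ the argument collapses to just one of its two halves.

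First I would extract the $R$-part. By decomposition~(\ref{equn:dec}) together with symmetry~(\ref{equn:sym}), both of which vertex separation satisfies, $X_1X_2\indep Y_1|_G Z_1$ gives $X_1\indep Y_1|_G Z_1$ with $X_1,Y_1,Z_1\subseteq R$ pairwise disjoint. Then Observation~\ref{obs:subset-indep} gives $X_1\indep Y_1|_{G_1} Z_1$, hence $X_1\indep Y_1|_R Z_1$ since $G_1$ is a P-map of $R$, and Theorem~\ref{thm:pmap-ci-propagates} (using that $D$ is the single join attribute) yields
\begin{equation}\label{eq:y2z2-piece1}
X_1\indep Y_1|_{R \Join S} Z_1 .
\end{equation}

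Next I would handle the $S$-part, but conditioned on $X_1$ rather than on $Z_1$ alone, so that it is compatible with (\ref{eq:y2z2-piece1}). Reading the hypothesis via symmetry as $Y_1\indep X_1X_2|_G Z_1$ and applying weak union~(\ref{equn:wu}) (another property of vertex separation) gives $Y_1\indep X_2|_G Z_1X_1$. Now $Y_1$ and $Z_1X_1$ both lie in $R$ while $X_2\subseteq S\setminus R$ lies in the other relation, and $Y_1,Z_1X_1,X_2$ are pairwise disjoint, so Lemma~\ref{lem:two-one-prop} applies (with its ``$X$'' and ``$Z$'' instantiated to $Y_1$ and $Z_1X_1$, and its ``$Y$'' to $X_2$) and gives
\begin{equation}\label{eq:y2z2-piece2}
X_2\indep Y_1|_{R \Join S} Z_1X_1 .
\end{equation}

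Finally, (\ref{eq:y2z2-piece1}) and (\ref{eq:y2z2-piece2}) combine by the contraction axiom~(\ref{equn:con}) applied in $R\Join S$: with $Y_1$ on the left, $Z_1$ in the base conditioning set, and $X_1$ then $X_2$ on the right, contraction turns $Y_1\indep X_1|_{R \Join S} Z_1$ and $Y_1\indep X_2|_{R \Join S} Z_1X_1$ into $Y_1\indep X_1X_2|_{R \Join S} Z_1$, \ie, $X_1X_2\indep Y_1|_{R \Join S} Z_1$, as claimed. I expect the main obstacle to be precisely this last gluing: the two halves $X_1\indep Y_1$ and $X_2\indep Y_1$ of the target cannot be reassembled by decomposition and symmetry alone (that would require the intersection axiom, hence strict positivity, which we do not assume), and the resolution is the step that makes (\ref{eq:y2z2-piece2}) a statement conditioned on $Z_1X_1$ rather than on $Z_1$, which is exactly the shape contraction consumes and which is available without positivity. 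A secondary nuisance is bookkeeping on the join attribute $D$: routing $D$ into $X_1$ whenever it occurs in $X$ keeps $X_2$ strictly inside $S$, which is what makes the disjointness and ``other relation'' hypotheses of Lemma~\ref{lem:two-one-prop} and of contraction hold.
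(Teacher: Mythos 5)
Your proof is correct, and it reaches the conclusion by a noticeably cleaner route than the paper's. The paper proves this lemma by an explicit case analysis on the location of the join attribute $D$ (in $Z_1$, in $X_1$, in $X_2$, in $Y_1$, or in none of them), with sub-cases for whether $D$ equals the relevant set or is a proper subset of it, and each branch assembles its own chain of applications of Theorem~\ref{thm:ci-join}, Corollary~\ref{cor:cl-join-0}, Theorem~\ref{thm:pmap-ci-propagates}, weak union, and contraction. The two moves that let you collapse that case analysis are (a) normalizing the split of $X$ so that $D$, if present, always lands in $X_1$, which keeps $X_2$ strictly inside $S\setminus R$ and eliminates the $D\in X_2$ branches, and (b) conditioning the cross-relation piece on $Z_1X_1$ rather than on $Z_1$ alone (the paper's case (v) instead routes $D$ into the conditioning set via strong union, which costs several extra graphoid steps to remove it again); this puts the two pieces directly into the shape that contraction consumes. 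The residual sensitivity to where $D$ sits is then absorbed entirely by Lemma~\ref{lem:two-one-prop}, whose own proof performs that case analysis once and for all, and your remark that the $D\in Y_1$ configuration makes the hypothesis vacuous (because $G_2$ is connected and $(Z_1X_1)\cap S=\emptyset$) matches the paper's case (iv-a). Both arguments use exactly the same toolkit --- P-map propagation via Theorem~\ref{thm:pmap-ci-propagates}, the cross-relation Lemma~\ref{lem:two-one-prop}, and the semi-graphoid axioms --- so your version introduces no new machinery; what it buys is a single uniform derivation in place of roughly ten sub-cases, at the price of the degenerate bookkeeping you already flag (when $X_1=\emptyset$ or $X_2=\emptyset$ the contraction step degenerates and the conclusion is just one of the two propagated pieces).
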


\cut{
\begin{proof}[of Lemma~\ref{lem:y2-z2-empty}]
We consider all possible cases w.r.t. the join attribute $D$.
\par
\textbf{(i) $D \in Z_1, D \notin X_1, X_2, Y_1$}:
(a) Suppose $Z_1 = D$.  Since $X_1X_2 \indep Y_1 |_{G} Z_1$, we have $X_1 \indep Y_1 |_{G} Z_1$, by Observation~\ref{obs:subset-indep},  $X_1 \indep Y_1 |_{G_1} Z_1$, since $G_1$ is a P-map of $R$, $X_1 \indep Y_1 |_{R} Z_1$, by Theorem~\ref{thm:ci-join},
\begin{equation}\label{equn:x1-y1-rs-z-1}
X_1 \indep Y_1 |_{R \Join S} Z_1 \equiv X_1 \indep Y_1 |_{R \Join S} D
\end{equation}
Further, by Corollary~\ref{cor:cl-join-0},
\begin{equation}\label{equn:y1-x2-rs-d-1}
Y_1 \indep X_2 |_{R \Join S} D
\end{equation}
By Corollary~\ref{cor:cl-join-0},
$X_1 Y_1 \indep X_2 |_{R \Join S} D$, by weak union (\ref{equn:wu}), $X_1 \indep X_2 |_{R \Join S} DY_1$. Using (\ref{equn:x1-y1-rs-z-1}) and contraction (\ref{equn:con}),  $X_1 \indep Y_1X_2 |_{R \Join S} D$. By weak union (\ref{equn:wu}), $X_1 \indep Y_1 |_{R \Join S} DX_2$. By contraction (\ref{equn:con}) and (\ref{equn:y1-x2-rs-d-1}), $X_1X_2 \indep Y_1 |_{R \Join S} D = X_1X_2 \indep Y_1 |_{R \Join S} Z_1$.
\par
(b) Otherwise, suppose $D \neq Z_1$, and $Z_1 = DW_1$, where $W_1 \subseteq R$.
Since $X_1X_2 \indep Y_1 |_{G} DW_1$, we have $X_1 \indep Y_1 |_{G} DW_1$, by Observation~\ref{obs:subset-indep},  $X_1 \indep Y_1 |_{G_1} DW_1$, since $G_1$ is a P-map of $R$, $X_1 \indep Y_1 |_{R} DW_1$, by Theorem~\ref{thm:ci-join},
\begin{equation}\label{equn:x1-y1-rs-z-2}
X_1 \indep Y_1 |_{R \Join S} DW_1
\end{equation}
Further, by Corollary~\ref{cor:cl-join-0}, $Y_1W_1 \indep X_2 |_{R \Join S} D$, and by weak union (\ref{equn:wu}),
\begin{equation}\label{equn:y1-x2-rs-d-2}
Y_1 \indep X_2 |_{R \Join S} DW_1
\end{equation}
By Corollary~\ref{cor:cl-join-0},
$X_1 Y_1 W_1 \indep X_2 |_{R \Join S} D$, by weak union (\ref{equn:wu}), $X_1 \indep X_2 |_{R \Join S} (DW_1)Y_1$. Using (\ref{equn:x1-y1-rs-z-2}) and contraction (\ref{equn:con}),  $X_1 \indep Y_1X_2 |_{R \Join S} DW_1$. By weak union (\ref{equn:wu}), $X_1 \indep Y_1 |_{R \Join S} (DW_1)X_2$. By contraction (\ref{equn:con}) and (\ref{equn:y1-x2-rs-d-2}), $X_1X_2 \indep Y_1 |_{R \Join S} DW_1 = X_1X_2 \indep Y_1 |_{R \Join S} Z_1$.
\par
\textbf{(ii) (a) $D \in X_1, D \notin X_2, Y_1, Z_1$}: If $D = X_1$, \ie, if $DX_2 \indep Y_1 |_{G} Z_1$, then $D \indep Y_1|_{G_1} Z_1$ (Observation~\ref{obs:subset-indep}), since $G_1$ is a P-map $D \indep Y_1|_{R} Z_1$ and by Theorem~\ref{thm:pmap-ci-propagates},
\begin{equation}\label{equn:y1-d-z1-1}
Y_1 \indep D|_{R \Join S} Z_1
\end{equation}
By Corollary~\ref{cor:cl-join-0}, $Y_1Z_1 \indep X_2 |_{R \Join S} D$, and by weak union (\ref{equn:wu}),
\begin{equation}\label{equn:y1-x2-dz1-1}
Y_1 \indep X_2|_{R \Join S} DZ_1
\end{equation}
Combining (\ref{equn:y1-d-z1-1}) and (\ref{equn:y1-x2-dz1-1}) by contraction (\ref{equn:con}), we have $Y_1 \indep DX_2|_{R \Join S} Z_1 \equiv Y_1 \indep X_1X_2|_{R \Join S} Z_1$.
\par
(b)
Otherwise, $X_1 = DW_1$, where $W_1 \subseteq R$. Then $DW_1X_2 \indep Y_1 |_{G} Z_1$, then $DW_1 \indep Y_1|_{G_1} Z_1$ (Observation~\ref{obs:subset-indep}), since $G_1$ is a P-map $DW_1 \indep Y_1|_{R} Z_1$ and by Theorem~\ref{thm:pmap-ci-propagates},
\begin{equation}\label{equn:y1-d-z1-2}
Y_1 \indep DW_1|_{R \Join S} Z_1
\end{equation}
By Corollary~\ref{cor:cl-join-0}, $Y_1Z_1W_1 \indep X_2 |_{R \Join S} D$, and by weak union (\ref{equn:wu}),
\begin{equation}\label{equn:y1-x2-dz1-2}
Y_1 \indep X_2|_{R \Join S} (DW_1)Z_1
\end{equation}
Combining (\ref{equn:y1-d-z1-2}) and (\ref{equn:y1-x2-dz1-2}) by contraction (\ref{equn:con}), we have $Y_1 \indep DW_1X_2|_{R \Join S} Z_1 \equiv Y_1 \indep X_1X_2|_{R \Join S} Z_1$.
\par
\textbf{(iii) $D \in X_2, D \notin X_2, Y_1, Z_1$}: (iii-a)If $D = X_2$, \ie, if $DX_1 \indep Y_1 |_{G} Z_1$, then $DX_1 \indep Y_1|_{G_1} Z_1$ (Observation~\ref{obs:subset-indep}), since $G_1$ is a P-map $DX_1 \indep Y_1|_{R} Z_1$ and by Theorem~\ref{thm:pmap-ci-propagates},
\begin{equation*}
DX_1 \indep Y_1|_{R \Join S} Z_1 ~\equiv~ X_1X_2 \indep Y_1|_{R \Join S} Z_1
\end{equation*}
\par
(iii-b)
Otherwise, $X_2 = DW_2$, where $W_2 \subseteq S$. Then $DW_2X_1 \indep Y_1 |_{G} Z_1$, then $DX_1 \indep Y_1|_{G_1} Z_1$ (Observation~\ref{obs:subset-indep}), since $G_1$ is a P-map $DX_1 \indep Y_1|_{R} Z_1$ and by Theorem~\ref{thm:pmap-ci-propagates},
\begin{equation}\label{equn:y1-d-z1-3}
Y_1 \indep DX_1|_{R \Join S} Z_1
\end{equation}
By Corollary~\ref{cor:cl-join-0}, $Y_1Z_1X_1 \indep W_2 |_{R \Join S} D$, and by weak union (\ref{equn:wu}),
\begin{equation}\label{equn:y1-x2-dz1-3}
Y_1 \indep W_2|_{R \Join S} (DX_1)Z_1
\end{equation}
Combining (\ref{equn:y1-d-z1-3}) and (\ref{equn:y1-x2-dz1-3}) by contraction (\ref{equn:con}), we have $Y_1 \indep (DX_1)W_2|_{R \Join S} Z_1 \equiv Y_1 \indep X_1X_2|_{R \Join S} Z_1$.
\par
\textbf{(iv) $D \in Y_1, D \notin X_1, X_2, Z_1$}: (iv-a) Suppose $Y_1 = D$, \ie, $X_1X_2 \indep D|_{G}Z_1$. Since $D \notin Z_1$, removing $Z_1$ cannot disconnect $X_2$ with $D$ since $G_2$ is connected.
\par
(iv-b) Otherwise, $Y_1 = DW_1$,  \ie, $X_1X_2 \indep DW_1 |_{G}Z_1$. Therefore, $X_1 \indep DW_1 |_{G_1}Z_1$
and by Theorem~\ref{thm:pmap-ci-propagates}
\begin{equation}\label{equn:iv-1}
X_1 \indep DW_1 |_{R \Join S}Z_1
\end{equation}
Also $X_2 \indep DW_1 |_{G} Z_1$. By Lemma~\ref{lem:two-one-prop},
\begin{equation}\label{equn:iv-2}
X_2 \indep DW_1 |_{R \Join S}Z_1
\end{equation}
  $X_1W_1 Z_1 \indep X_2 |_{R \Join S} D$ $\Rightarrow X_1 \indep X_2 |_{R \Join S} DW_1Z_1$ (weak union).
  By contraction and (\ref{equn:iv-1}), $X_1 \indep DW_1X_2 |_{R \Join S} DW_1Z_1$. By weak union, $X_1 \indep DW_1 |_{R \Join S} X_2Z_1$.  By contraction and (\ref{equn:iv-2}), $X_1X_2 \indep DW_1 |_{R \Join S} Z_1$.
  \par
\textbf{(v) $D \notin X_1, X_2, Y_1$}:
Since $X_1X_2 \indep Y_1 |_{G} Z_1$, then by definition of independence in a graph, $X_1 \indep Y_1 |_{G} Z_1$, and since $G_1$ is a subgraph of $G$, $X_1 \indep Y_1 |_{G_1} Z_1$. Since $G_1$ is a P-map of $R$, $X_1 \indep Y_1 |_{R} Z_1$, by the strong union property of Theorem~\ref{thm:PP-undirected} $X_1 \indep Y_1 |_{R} DZ_1$, and by Theorem~\ref{thm:pmap-ci-propagates},
\begin{equation}\label{equn:x1-y1-dz1-rs}
X_1 \indep Y_1 |_{R \Join S} DZ_1
\end{equation}
Also since $X_1X_2 \indep Y_1 |_{G} Z_1$, by decomposition property of graphoid axioms (\ref{equn:dec}), $X_2 \indep Y_1 |_{G} Z_1$. We claim that $Y_1 \indep DX_2|_{G}Z_1$. Suppose not. Since $X_2 \indep Y_1 |_{G} Z_1$, there is a path $p_1$ between $D$ and $Y_1$ that does not go through $Z_1$. Since $D$ and $X_2$ are connected by at least one path $p_2$ in $G_2$ (which does not go through $Z_1$ since $Z_1 \subseteq R$), by combining $p_1$ and $p_2$ (and simplifying to get a simple path), we get a path from $Y$ to $X_2$ that does not go through $Z_1$, contradicting the assumption that $X_1X_2 \indep Y_1 |_{G} Z_1$. Hence $Y_1 \indep DX_2|_{G}Z_1$, and since $Y_1, Z_1$ is in $R$ and $DX_2$ is in $S$, by Lemma~\ref{lem:two-one-prop},
\begin{equation}\label{equn:y1-dx2-z1-rs}
Y_1 \indep DX_2|_{R \Join S} Z_1
\end{equation}
\par
Next note that $X_1Y_1 \indep X_2 |_{G} DZ_1$, since $D$ itself is a cutset between $X_1Y_1$ and $X_2$ (Observation~\ref{obs:join-attr-cutset}). Since $X_1Y_1 \subseteq R$, $X_2 \subseteq S$,  and $DZ_1 \subseteq R$, by Lemma~\ref{lem:two-one-prop},
\begin{eqnarray*}
&& X_1Y_1 \indep X_2 |_{R\Join S} DZ_1\\
& \Rightarrow & X_1 \indep X_2 |_{R\Join S} (DZ_1)(Y_1)~~~~\textrm{(weak union (\ref{equn:wu}))}\\
& \Rightarrow & X_1 \indep X_2Y_1 |_{R\Join S} (DZ_1)~~\textrm{((\ref{equn:x1-y1-dz1-rs}) and contraction (\ref{equn:con}))}\\
& \Rightarrow & X_1 \indep Y_1 |_{R\Join S} Z_1 (DX_2)~~~~\textrm{(weak union (\ref{equn:wu}))}\\
& \Rightarrow & X_1(DX_2) \indep Y_1 |_{R\Join S} Z_1~~\textrm{((\ref{equn:y1-dx2-z1-rs}) and contraction (\ref{equn:con}))}\\
& \Rightarrow & X_1X_2 \indep Y_1 |_{R\Join S} Z_1~~~~\textrm{(decomposition (\ref{equn:dec}))}
\end{eqnarray*}
\end{proof}
}

\begin{lemma}\label{lem:y1-z2-empty}
If $Y_1 = Z_2 = \emptyset$, then  $X_1X_2 \indep Y_2 |_{R \Join S} Z_1$.
\end{lemma}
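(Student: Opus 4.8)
The plan is to follow the template of the proof of Lemma~\ref{lem:y2-z2-empty}: split $X=X_1X_2$ into its $R$-part $X_1$ and its $S$-part $X_2$, propagate the cross-relation fragment with Lemma~\ref{lem:two-one-prop}, propagate the $S$-internal fragment using the fact that $G_2$ is a P-map, and re-assemble with the graphoid axioms. Here $Y=Y_2\subseteq S$ and $Z=Z_1\subseteq R$, and the argument is driven by whether the single join attribute $D=R\cap S$ lies in $Z_1$. A structural fact about the union graph $G$ will be used repeatedly: since $D$ is the only vertex shared by $G_1$ and $G_2$ (Observation~\ref{obs:join-attr-cutset}), every simple path of $G$ between two vertices of $S\setminus\{D\}$ stays inside $G_2$ (exiting $S$ would require two traversals of $D$). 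Hence if $X_2 \indep Y_2|_G Z_1$ then $X_2 \indep Y_2|_{G_2}(Z_1\cap S)$, where $Z_1\cap S$ is $\{D\}$ if $D\in Z_1$ and $\emptyset$ otherwise; as $G_2$ is a P-map of $S$, this yields a genuine CI in $S$, which then propagates to $R\Join S$ conditioned on $D$ by Theorem~\ref{thm:ci-join}.

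Case $D\notin Z_1$: then $Z_1\subseteq R\setminus\{D\}$ removes no vertex of $S$, so by the structural fact and connectivity of $G_2$, decomposing $X_1X_2\indep Y_2|_G Z_1$ down to the relevant singletons forces either $X_2=\emptyset$ or $Y_2=\emptyset$; the latter makes the claim trivial, and in the former case $X_1X_2\indep Y_2|_G Z_1$ is exactly $X_1\indep Y_2|_G Z_1$ with $X_1,Z_1\subseteq R$ and $Y_2\subseteq S$, so Lemma~\ref{lem:two-one-prop} applies directly (it handles every position of $D$, including $D\in Y_2$). This disposes of the case without ever needing to drop $D$ from a conditioning set.

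Case $D\in Z_1$: write $Z_1=DW_1$ with $W_1\subseteq R\setminus\{D\}$. By decomposition, $X_1\indep Y_2|_G Z_1$ and $X_2\indep Y_2|_G Z_1$; the first propagates by Lemma~\ref{lem:two-one-prop} (with $X:=X_1$, $Z:=DW_1\subseteq R$, $Y:=Y_2\subseteq S$) to $X_1\indep Y_2|_{R\Join S} DW_1$, and the second, via the structural fact, gives $X_2\indep Y_2|_S D$ and hence $X_2\indep Y_2|_{R\Join S} D$ by Theorem~\ref{thm:ci-join}. From Corollary~\ref{cor:cl-join-0} I have $X_1W_1\indep X_2Y_2|_{R\Join S} D$ and $W_1\indep X_2Y_2|_{R\Join S} D$; weak union and decomposition on the former give $X_1\indep X_2|_{R\Join S} DW_1$, while symmetry, weak union on the latter give $X_2\indep W_1|_{R\Join S} DY_2$, which combined with $X_2\indep Y_2|_{R\Join S} D$ by contraction and then weak union yields $X_2\indep Y_2|_{R\Join S} DW_1$. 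Finally I merge the three statements $X_1\indep Y_2|_{R\Join S} DW_1$, $X_2\indep Y_2|_{R\Join S} DW_1$, $X_1\indep X_2|_{R\Join S} DW_1$ by two contractions (with a weak union in between) to obtain $X_1X_2\indep Y_2|_{R\Join S} DW_1$, which is $X_1X_2\indep Y_2|_{R\Join S} Z_1$.

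I expect the main obstacle to be the re-assembly in the $D\in Z_1$ case: one cannot simply take the union of the two separated sides because the intermediate facts $X_2\indep Y_2|_{R\Join S} D$ and $X_1\indep Y_2|_{R\Join S} DW_1$ sit on slightly different conditioning sets, so the weak-union/contraction chain must be ordered carefully (mirroring, but not identical to, the bookkeeping in the proof of Lemma~\ref{lem:y2-z2-empty}). A secondary point that needs care is the appeal to connectivity of $G_2$ in the $D\notin Z_1$ case: it is precisely this that rules out a genuinely two-sided $X$ and so avoids the need for an intersection-type axiom to remove $D$ from a conditioning set.
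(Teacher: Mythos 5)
Your overall architecture is the same as the paper's: rule out (or trivialize) the case $D\notin Z_1$ using connectivity of $G_2$, extract $X_2\indep Y_2\,|_S\,D$ from the P-map of $S$ and push it into $R\Join S$, and then stitch everything together with Corollary~\ref{cor:cl-join-0} and the semi-graphoid axioms. Your handling of $D\notin Z_1$ is in fact slightly more careful than the paper's (which tacitly assumes $X_2,Y_2\neq\emptyset$), and your derivations of $X_1\indep X_2\,|_{R\Join S}\,DW_1$ and $X_2\indep Y_2\,|_{R\Join S}\,DW_1$ are sound. The problem is the very last step.

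The final merge does not go through: from the three \emph{pairwise} statements $X_1\indep Y_2\,|\,DW_1$, $X_2\indep Y_2\,|\,DW_1$, $X_1\indep X_2\,|\,DW_1$ one cannot derive $X_1X_2\indep Y_2\,|\,DW_1$ by contraction and weak union (nor by any sound inference rule --- the XOR distribution, with $Y_2=X_1\oplus X_2$ and the conditioning variables constant, satisfies all three premises and violates the conclusion). Concretely, contraction (\ref{equn:con}) needs one premise of the form $\I(A, DW_1\cup B, C)$, i.e.\ a statement whose conditioning set already contains one of the blocks being merged, and none of your three statements has that shape; weak union cannot manufacture it because each statement has only a single block on each side. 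You decomposed too early: the corollary hands you the block statement $X_1W_1\indep X_2Y_2\,|_{R\Join S}\,D$, and you must keep $X_2$ and $Y_2$ together long enough to obtain $X_1W_1\indep Y_2\,|_{R\Join S}\,DX_2$ by weak union; \emph{that} is the premise with the enlarged conditioning set, and contracting it with $X_2\indep Y_2\,|_{R\Join S}\,D$ gives $X_1X_2W_1\indep Y_2\,|_{R\Join S}\,D$, whence one more weak union yields $X_1X_2\indep Y_2\,|_{R\Join S}\,DW_1=Z_1$. This is exactly the paper's chain; with that substitution (and discarding your detour through Lemma~\ref{lem:two-one-prop} for $X_1\indep Y_2\,|\,DW_1$, which is no longer needed) your proof is correct.
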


\cut{
\begin{proof}[of Lemma~\ref{lem:y1-z2-empty}]
First we argue that the join attribute $D \in Z_1$.
Suppose not. Then in $G_{-Z_1}$, no path exists between $X_2$ and $Y_2$ in $G$.  Since $D \notin Z_1$, removing $Z_1$ does not remove any edge in $G_2$, implying that  no path exists between $X_2$ and $Y_2$ in $G_2$, which contradicts the assumption that $G_2$ is connected.
\par
Hence $D \in Z_1$. Assume $Z_1 = DW_1$ where $W_1 \subseteq R$. Since $X_1 \indep Y_2 |_{G} Z_1 ~\equiv~ X_1 \indep Y_2 |_{G} DW_1$, and $W_1$ belongs to $R$ or $G_1$, it holds that $Y_2 \indep X_2 |_{G} D$. Since $G_2$ is a subset of $G$,  $Y_2 \indep X_2 |_{G_2} D$, and since $G_2$ is a P-map of $S$,  $Y_2 \indep X_2 |_{S} D$, and by Theorem~\ref{thm:pmap-ci-propagates},
\begin{equation}\label{equn:y2-x2-d-rs}
Y_2 \indep X_2 |_{R \Join S} D
\end{equation}

By Corollary~\ref{cor:cl-join-0},
\begin{eqnarray*}
&&X_1W_1 \indep Y_2X_2 |_{R \Join S} D\\
& \Rightarrow & X_1W_1 \indep Y_2 |_{R \Join S} DX_2~~~~\textrm{(weak union (\ref{equn:wu}))}\\
& \Rightarrow & X_1X_2W_1 \indep Y_2 |_{R \Join S} D~~\textrm{((\ref{equn:y2-x2-d-rs}) and contraction (\ref{equn:con}))}\\
& \Rightarrow & X_1X_2 \indep Y_2 |_{R \Join S} DW_1~~~~\textrm{(weak union (\ref{equn:wu}))}\\
& \equiv & X_1X_2 \indep Y_2 |_{R \Join S} Z_1
\end{eqnarray*}
\end{proof}
}


\begin{lemma}\label{lem:z2-empty}
If $Z_2 = \emptyset$, then  $X_1X_2 \indep Y_1Y_2 |_{R \Join S} Z_1$.
\end{lemma}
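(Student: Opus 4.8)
The plan is to reduce Lemma~\ref{lem:z2-empty} to the two special cases already in hand (Lemmas~\ref{lem:y2-z2-empty} and~\ref{lem:y1-z2-empty}) by splitting the right-hand side $Y$ into its $R$-part $Y_1$ and its $S$-part $Y_2$, transferring each piece separately to $R\Join S$, and then re-assembling with one application of contraction. Recall the standing hypothesis of this family of lemmas: $X_1X_2 \indep Y_1Y_2 |_{G} Z_1Z_2$ with $X_1,Y_1,Z_1\subseteq R$ and $X_2,Y_2,Z_2\subseteq S$, where $D=R\cap S$ is the unique join attribute and $G_1,G_2$ are connected P-maps of $R,S$. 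Here $Z_2=\emptyset$, so we start from $X_1X_2 \indep Y_1Y_2 |_{G} Z_1$ and must produce $X_1X_2 \indep Y_1Y_2 |_{R\Join S} Z_1$.

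First I would peel off $Y_1$. By the decomposition property of vertex separation, $X_1X_2 \indep Y_1Y_2 |_{G} Z_1$ gives both $X_1X_2 \indep Y_1 |_{G} Z_1$ and $X_1X_2 \indep Y_2 |_{G} Z_1$. Since $Y_1\subseteq R$ and $Z_1\subseteq R$, the first is exactly of the form handled by Lemma~\ref{lem:y2-z2-empty}, which yields $X_1X_2 \indep Y_1 |_{R\Join S} Z_1$. For the $Y_2$-piece I would \emph{not} apply Lemma~\ref{lem:y1-z2-empty} directly to $X_1X_2 \indep Y_2 |_{G} Z_1$; instead I first enlarge the separator via Observation~\ref{obs:cutset-sup}, obtaining $X_1X_2 \indep Y_2 |_{G} Z_1Y_1$. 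Now $Y_2\subseteq S$ and $Z_1Y_1\subseteq R$, so Lemma~\ref{lem:y1-z2-empty} applies with its conditioning set instantiated to $Z_1Y_1$ (and an empty $R$-component on the $Y$-side), giving $X_1X_2 \indep Y_2 |_{R\Join S} Z_1Y_1$.

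It remains to glue the two CIs $X_1X_2 \indep Y_1 |_{R\Join S} Z_1$ and $X_1X_2 \indep Y_2 |_{R\Join S} Z_1Y_1$ together. These are precisely $\I(X_1X_2, Z_1, Y_1)$ and $\I(X_1X_2, Z_1\cup Y_1, Y_2)$ in $R\Join S$, so the contraction axiom~(\ref{equn:con}) yields $\I(X_1X_2, Z_1, Y_1\cup Y_2)$, \ie, $X_1X_2 \indep Y_1Y_2 |_{R\Join S} Z_1$, which is the claim. If $Y_1$ or $Y_2$ happens to be empty the argument simply collapses to Lemma~\ref{lem:y1-z2-empty} or Lemma~\ref{lem:y2-z2-empty} respectively, so no separate treatment of degenerate cases is needed.

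The one point that needs care — and the reason the proof is not a one-liner — is that a pair of CIs $A\indep Y_1|Z_1$ and $A\indep Y_2|Z_1$ does not in general combine into $A\indep Y_1Y_2|Z_1$, since composition is not among the graphoid axioms. The remedy is to push $Y_1$ into the conditioning set \emph{before} transferring the $Y_2$-statement to $R\Join S$, so that the final combining step can use contraction instead of composition; this is exactly why I invoke Observation~\ref{obs:cutset-sup} first and then apply Lemma~\ref{lem:y1-z2-empty} with the larger separator $Z_1Y_1$. A secondary subtlety is that Lemma~\ref{lem:y1-z2-empty} is being applied to a relabeled partition of the vertex set, which is legitimate because $Z_1Y_1$ still lies entirely in $R$ and none of that lemma's hypotheses (single join attribute, $G_1$ and $G_2$ connected) are affected by the relabeling.
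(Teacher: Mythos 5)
Your proof is correct, and it reaches the conclusion by a cleaner, more modular route than the paper's. Both arguments share the same skeleton: peel off $Y_1$, dispatch it with Lemma~\ref{lem:y2-z2-empty} to get $X_1X_2 \indep Y_1 |_{R\Join S} Z_1$, handle $Y_2$ separately, and recombine with contraction. The difference is entirely in the $Y_2$ step. The paper re-derives that piece from scratch: it first argues (via connectedness of $G_2$) that the join attribute $D$ must lie in $Z_1$, writes $Z_1 = DW_1$, uses the P-map of $S$ and Theorem~\ref{thm:pmap-ci-propagates} to obtain $Y_2 \indep X_2 |_{R\Join S} D$, and then runs a chain of Corollary~\ref{cor:cl-join-0}, weak union, and contraction steps that essentially replays the proof of Lemma~\ref{lem:y1-z2-empty} inline with $Y_1$ carried along as extra baggage. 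You instead enlarge the separator to $Z_1Y_1$ via Observation~\ref{obs:cutset-sup} and invoke Lemma~\ref{lem:y1-z2-empty} as a black box on the relabeled partition (which is legitimate, since that lemma places no constraint on its $R$-side conditioning set beyond $Z_1 \subseteq R$ and disjointness); a single application of contraction then finishes. Your version buys reuse and a shorter derivation, and your closing remark correctly identifies the one real danger — that composition is not a graphoid axiom, so $Y_1$ must be pushed into the conditioning set \emph{before} transferring the $Y_2$-statement — which is precisely the maneuver both proofs are secretly built around.
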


\cut{
\begin{proof}[of Lemma~\ref{lem:z2-empty}]
First we argue that the join attribute $D \in Z_1$.
Suppose not. Then in $G_{-Z_1}$, no path exists between $X_2$ and $Y_2$ in $G$.  Since $D \notin Z_1$, removing $Z_1$ does not remove any edge in $G_2$, implying that  no path exists between $X_2$ and $Y_2$ in $G_2$, which contradicts the assumption that $G_2$ is connected.
\par
Hence $D \in Z_1$. Assume $Z_1 = DW_1$ where $W_1 \subseteq R$. Since $X_1 \indep Y_2 |_{G} Z_1 ~\equiv~ X_1 \indep Y_2 |_{G} DW_1$, and $W_1$ belongs to $R$ or $G_1$, it holds that $Y_2 \indep X_2 |_{G} D$. Since $G_2$ is a subset of $G$,  $Y_2 \indep X_2 |_{G_2} D$, and since $G_2$ is a P-map of $S$,  $Y_2 \indep X_2 |_{S} D$, and by Theorem~\ref{thm:pmap-ci-propagates},
\begin{equation}\label{equn:y2-x2-d-rs-2}
Y_2 \indep X_2 |_{R \Join S} D
\end{equation}
Since $X_1X_2 \indep Y_1Y_2 |_{G} Z_1$, by definition,   $X_1X_2 \indep Y_1 |_{G} Z_1$, and by Lemma~\ref{lem:y2-z2-empty},
\begin{equation}\label{equn:x1x2-y1-z1-rs}
X_1X_2 \indep Y_1 |_{R \Join S} Z_1
\end{equation}
By Corollary~\ref{cor:cl-join-0},
\begin{eqnarray*}
&&X_1Y_1W_1 \indep Y_2X_2 |_{R \Join S} D\\
& \Rightarrow & X_1Y_1W_1 \indep Y_2 |_{R \Join S} DX_2~~~~\textrm{(weak union (\ref{equn:wu}))}\\
& \Rightarrow & X_1X_2Y_1W_1 \indep Y_2 |_{R \Join S} D~~\textrm{((\ref{equn:y2-x2-d-rs-2}) and contraction (\ref{equn:con}))}\\
& \Rightarrow & X_1X_2 \indep Y_2 |_{R \Join S} DW_1Y_1~~~~\textrm{(weak union (\ref{equn:wu}))}\\
& \Rightarrow & X_1X_2 \indep Y_2Y_1 |_{R \Join S} DW_1~~\textrm{((\ref{equn:x1x2-y1-z1-rs}) and contraction (\ref{equn:con}))}\\
& \equiv & X_1X_2 \indep Y_2Y_1 |_{R \Join S} Z_1
\end{eqnarray*}
\end{proof}
}

\begin{lemma}\label{lem:x2-y2-z1-empty}
If $X_2 = Y_2 = Z_1 = \emptyset$, then  $X_1 \indep Y_1 |_{R \Join S} Z_2$.
\end{lemma}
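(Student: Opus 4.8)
The plan is to observe that, under the standing assumptions of this section --- a single join attribute $D = R \cap S$ and connected P-maps $G_1, G_2$ --- the separation hypothesis of the lemma is essentially degenerate: it can hold only when $X_1 = \emptyset$ or $Y_1 = \emptyset$, and in that case the asserted conditional independence is immediate.

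First I would unpack the emptiness conditions. Since $X_2 = X \cap S = \emptyset$ and $D$ belongs to $S$, the attribute $D$ cannot lie in $X$, so $X = X_1 \subseteq R \setminus \{D\}$; likewise $Y_2 = \emptyset$ gives $Y = Y_1 \subseteq R \setminus \{D\}$, and $Z_1 = Z \cap R = \emptyset$ gives $Z = Z_2 \subseteq S \setminus \{D\}$, so that $Z_2$ contains no vertex of $R$ (as $R \cap S = \{D\}$). Hence deleting $Z_2$ from the union graph $G$ removes no vertex and no edge of the subgraph $G_1$, the P-map of $R$, and the hypothesis of the lemma reads $X_1 \indep Y_1 |_G Z_2$ with $X_1, Y_1 \subseteq R$.

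Next comes the key step. Suppose for contradiction that $X_1$ and $Y_1$ are both non-empty, and pick $x \in X_1$ and $y \in Y_1$. Because $G_1$ is a \emph{connected} graph whose vertex set is all of $R$, and $G_1$ is a subgraph of $G$, there is an $x$--$y$ path in $G$ all of whose vertices lie in $R$. By the previous paragraph this path persists in $G_{-Z_2}$, so $x$ and $y$ are not separated by $Z_2$ in $G$, contradicting $X_1 \indep Y_1 |_G Z_2$. Hence $X_1 = \emptyset$ or $Y_1 = \emptyset$, and then $X_1 \indep Y_1 |_{R \Join S} Z_2$ holds trivially, since the empty set of attributes is conditionally independent of every set of attributes given every set ($\Pr_{R \Join S}[\emptyset \mid Z_2 = z] = 1$ for every $z$).

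I do not expect a genuine obstacle; the only care needed is the bookkeeping of where $D$ sits in the $R$/$S$ split of each set, and the recognition --- used repeatedly in this section's case analysis (cf.\ the proof of Lemma~\ref{lem:y2-z2-empty}) --- that connectivity of the $G_i$ forbids certain separations outright. It is worth noting why this ``collapse'' is the right route: a computational attempt would derive $X_1 \indep Y_1 |_R \emptyset$ from the separation (via $G_1$), lift it to $X_1 \indep Y_1 |_{R \Join S} D$ by strong union (Theorem~\ref{thm:PP-undirected}) and Theorem~\ref{thm:ci-join}, and then try to trade $D$ for $Z_2$ in the conditioning set using Corollary~\ref{cor:cl-join-0} and the graphoid axioms --- but that gets stuck needing to \emph{remove} $D$ from the conditioning set, which requires the intersection axiom, hence strict positivity, which is not assumed here.
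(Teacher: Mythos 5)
Your argument rests on a misreading of the $R$/$S$ decomposition of $Z$, and this misreading collapses the lemma into a vacuous statement while leaving its actual content unproved. In the setup of this section, $Z = Z_1 Z_2$ with $Z_1 \subseteq R$ and $Z_2 \subseteq S$ is a partition in which the join attribute $D \in R \cap S$ may be assigned to \emph{either} side (see the case analysis in the proof of Lemma~\ref{lem:y2-z2-empty}, which separately treats $D \in Z_1$, $D \in X_1$, $D \in X_2$, and so on). So $Z_1 = \emptyset$ means only that all of $Z$ lies inside $S$; it does \emph{not} imply $D \notin Z$. You instead take $Z_1 = Z \cap R$, conclude $Z_2 \subseteq S \setminus \{D\}$, and then correctly observe that such a $Z_2$ cannot separate two non-empty subsets of the connected graph $G_1$ --- which proves the lemma only when its hypothesis is degenerate. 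The substantive case, $D \in Z_2$ (say $Z_2 = DW_2$ with $W_2 \subseteq S$), is exactly the one the lemma is needed for, and the paper's proof in fact begins by showing that $D \in Z_2$ is \emph{forced} by the separation hypothesis, for the same connectivity reason you invoke.

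In that remaining case the ``computational route'' you dismiss at the end goes through without any use of the intersection axiom, because one never needs to \emph{remove} $D$ from the conditioning set --- $D$ already belongs to $Z_2$, and the task is only to \emph{add} $W_2 = Z_2 \setminus \{D\}$. Concretely: from $X_1 \indep Y_1 |_{G} DW_2$ and the fact that $W_2$ touches no vertex of $G_1$, one gets $X_1 \indep Y_1 |_{G_1} D$, hence $X_1 \indep Y_1 |_{R} D$ ($G_1$ is a P-map) and $X_1 \indep Y_1 |_{R \Join S} D$ by Theorem~\ref{thm:pmap-ci-propagates}. Then Corollary~\ref{cor:cl-join-0} gives $X_1 Y_1 \indep W_2 |_{R \Join S} D$, weak union gives $X_1 \indep W_2 |_{R \Join S} D Y_1$, contraction with the previous independence gives $X_1 \indep Y_1 W_2 |_{R \Join S} D$, and a final weak union yields $X_1 \indep Y_1 |_{R \Join S} D W_2$, i.e., $X_1 \indep Y_1 |_{R \Join S} Z_2$. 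This is the paper's proof; your submission omits this case entirely, so there is a genuine gap.
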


\cut{
\begin{proof}[of Lemma~\ref{lem:x2-y2-z1-empty}]
Given $X_1 \indep Y_1 |_{G} Z_2$. We claim that $D \in Z_2$, otherwise, since $G_1$ is connected, removing vertices from $G_2$ cannot disconnect $X_1, Y_1$ in $G_1$.
\par
We assume $Z_2 = DW_2$: if $Z_2 = D$, then $X_1 \indep Y_1 |_{G_1} D \equiv X_1 \indep Y_1 |_{R} D$ ($G_1$ is a P-map of $R$), and therefore by Theorem~\ref{thm:pmap-ci-propagates}:
\begin{equation}\label{equn:x1-y1-d-4}
X_1 \indep Y_1 |_{R \Join S} D
\end{equation}
\par
By Corollary~\ref{cor:cl-join-0}, $X_1Y_1 \indep W_2 |_{R \Join S} D$. By weak union, $X_1 \indep W_2 |_{R \Join S} DY_1$. Combining with (\ref{equn:x1-y1-d-4}) and using contraction (\ref{equn:con}), $X_1 \indep Y_1W_2 |_{R \Join S} D$. By weak union, $X_1 \indep Y_1 |_{R \Join S} DW_2 \equiv X_1 \indep Y_1 |_{R \Join S} Z_2$.
\end{proof}
}
%

\cut{
\begin{lemma}\label{lem:dxz1z2}
\begin{itemize}
\item[(A)] If $D \indep X_1 |_{G} Z_1Z_2$, , then $D \indep X_1 |_{R\Join S} Z_1Z_2$.
\item[(B)] If $D \indep X_1X_2 |_{G} Z_1Z_2$, , then $D \indep X_1 |_{R\Join S} Z_1Z_2$.
\end{itemize}
\end{lemma}
\begin{proof}
\textbf{(A)}
$D \indep X_1 |_{G} Z_1Z_2$ $\Rightarrow D \indep X_1 |_{G_1} Z_1$, and since $G_1$ is a P-map of $R$ and by Theorem~\ref{thm:pmap-ci-propagates},
\begin{equation}\label{equn:DX-1}
D \indep X_1|_{R \Join S} Z_1
\end{equation}
By Corollary~\ref{cor:cl-join-0}), $X_1 Z_1 \indep Z_2 |_{R \Join S} D$. By weak union (\ref{equn:wu}), $X_1 \indep Z_2 |_{R \Join S} Z_1D$. Combining with \ref{equn:DX-1} by contraction (\ref{equn:con}),  $X_1 \indep DZ_2 |_{R \Join S} Z_1$. By weak union again, $X_1 \indep D |_{R \Join S} Z_1Z_2$.
\par
\textbf{(B)}
$D \indep X_1X_2 |_{G} Z_1Z_2$ $\Rightarrow D \indep X_1|_{G} Z_1$, By Observation~\ref{obs:cutset-sup}, $D \indep X_1|_{G} Z_1X_2$. By (A) above,
\begin{equation}\label{equn:DX-2}
D \indep X_1 |_{R \Join S} Z_1X_2
\end{equation}
Similarly, $D \indep X_2|_{G} Z_2$ and
\begin{equation}\label{equn:DX-3}
D \indep X_2 |_{R \Join S} Z_1Z_2
\end{equation}
By Corollary~\ref{cor:cl-join-0}), $X_1Z_1 \indep X_2Z_2 |_{R \Join S} D$. By weak union (\ref{equn:wu}), $X_1 \indep Z_2 |_{R \Join S} Z_1DX_2$. Combining with \ref{equn:DX-2} by contraction (\ref{equn:con}),  $X_1 \indep DZ_2 |_{R \Join S} Z_1X_2$. By contraction and (\ref{equn:DX-3}), $X_1X_2 \indep D |_{R \Join S} Z_1Z_2$.
\end{proof}

\begin{lemma}\label{lem:dx1-y1-y2}
Suppose the join attribute $D \notin X_1, Y_1, Y_2, Z_1, Z_2$.
\begin{itemize}
\item[(A)] If $DX_1 \indep Y_1 |_{G} Z_1Z_2$, then $DX_1 \indep Y_1 |_{R\Join S} Z_1Z_2$.
\item[(B)] If $DX_1 \indep Y_2 |_{G} Z_1Z_2$, then $DX_1 \indep Y_2 |_{R\Join S} Z_1Z_2$.
\end{itemize}
\end{lemma}
\begin{proof}
\textbf{(A)} Since $DX_1 \indep Y_1 |_{G} Z_1Z_2$, $D \indep Y_1 |_{G} Z_1Z_2$, and by Lemma~\ref{lem:dxz1z2},
\begin{equation}\label{equn:PP-1}
D \indep Y_1 |_{G} Z_1Z_2
\end{equation}
Also, $X_1 \indep Y_1 |_{G} Z_1Z_2$. Hence $X_1 \indep Y_1 |_{G} Z_1Z_2D$ (Observation~\ref{obs:cutset-sup}), and by Lemma~\ref{lem:y2-z2-empty},
\begin{equation}\label{equn:PP-2}
X_1 \indep Y_1 |_{R \Join S} Z_1Z_2D
\end{equation}
Combining (\ref{equn:PP-1}) and (\ref{equn:PP-1})  by contraction, $DX_1 \indep Y_1 |_{R \Join S} Z_1Z_2$.
\par
\textbf{(B)} Since $DX_1 \indep Y_2 |_{G} Z_1Z_2$, $D \indep Y_2 |_{G} Z_1Z_2$, and by Lemma~\ref{lem:dxz1z2},
\begin{equation}\label{equn:QQ-1}
D \indep Y_2 |_{G} Z_1Z_2
\end{equation}
Also, $X_1 \indep Y_2 |_{G} Z_1Z_2$. Hence $X_1 \indep Y_2 |_{G} Z_1Z_2D$ (Observation~\ref{obs:cutset-sup}), and by Lemma~\ref{lem:y1-z2-empty},
\begin{equation}\label{equn:QQ-2}
X_1 \indep Y_2 |_{R \Join S} Z_1Z_2D
\end{equation}
Combining (\ref{equn:QQ-1}) and (\ref{equn:QQ-1})  by contraction, $DX_1 \indep Y_2 |_{R \Join S} Z_1Z_2$.
\end{proof}

\begin{lemma}\label{lem:d-z1-z2}
Suppose the join attribute $D \notin X_1, Y_1, Y_2, Z_1, Z_2$.
\begin{itemize}
\item[(A)] If $X_1\indep Y_1 |_{G} DZ_1Z_2$, then $X_1\indep Y_1 |_{R \Join S} DZ_1Z_2$.
\item[(B)] If  $X_1\indep Y_2 |_{G} DZ_1Z_2$, then $X_1\indep Y_1 |_{R \Join S} DZ_1Z_2$.
\end{itemize}
\end{lemma}
\begin{proof}
\textbf{(A)} If $X_1\indep Y_1 |_{G} DZ_1Z_2$, then $X_1\indep Y_1 |_{G} DZ_1$. Also $X_1\indep Z_2 |_{G} DZ_1$. Hence $X_1\indep Y_1Z_2 |_{G} DZ_1$. 
By Lemma~\ref{lem:y2-z2-empty}, $X_1\indep Y_1Z_2 |_{R \Join S} DZ_1$. By weak union, $X_1\indep Y_1 |_{G} DZ_1Z_2$.
\par
\textbf{(B)} If $X_1\indep Y_2 |_{G} DZ_1Z_2$, then $X_1\indep Y_2Z_2 |_{G} DZ_1$ ($D$ itself disconnects $X_1$ from $Z_2$).  By Lemma~\ref{lem:two-one-prop}, $X_1\indep Y_2Z_2 |_{R \Join S} DZ_1$. By weak union, $X_1\indep Y_2 |_{G} DZ_1Z_2$.
\end{proof}
}

\begin{lemma}\label{lem:z1-z2-non-empty}
\begin{itemize}
\item[(A)] If $X_1 \indep Y_1 |_{G} Z_1Z_2$, then $X_1 \indep Y_1 |_{R \Join S} Z_1Z_2$.
\item[(B)] If $X_1 \indep Y_2 |_{G} Z_1Z_2$, then $X_1 \indep Y_2 |_{R \Join S} Z_1Z_2$.
\item[(C)] If $X_1X_2 \indep Y_1 |_{G} Z_1Z_2$, then $X_1 \indep Y_1Y_2 |_{R \Join S} Z_1Z_2$.
\item[(D)] If $X_1X_2 \indep Y_1Y_2 |_{G} Z_1Z_2$, then $X_1X_2 \indep Y_1Y_2 |_{R \Join S} Z_1Z_2$.
\end{itemize}
\end{lemma}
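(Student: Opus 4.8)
The plan is to reduce each of the four statements to the already-established "one side empty" lemmas (Lemmas~\ref{lem:y2-z2-empty}, \ref{lem:y1-z2-empty}, \ref{lem:z2-empty}, \ref{lem:x2-y2-z1-empty}) and Lemma~\ref{lem:two-one-prop}, using the graphoid axioms (\ref{equn:sym})--(\ref{equn:con}), the fact that $D=R\cap S$ is a cutset in $G$ (Observation~\ref{obs:join-attr-cutset}), and that $G_1,G_2$ are connected P-maps of $R,S$. The key structural observation, as in the earlier lemmas, is that if $X_1\indep Y_1\mid_G Z_1Z_2$ and the conditioning set $Z_1Z_2$ does not contain $D$, then connectivity of $G_2$ forces a path, so in fact $D\in Z_1Z_2$ in essentially all nontrivial cases; and when a separating set splits vertices across $R$ and $S$ we can always route through $D$. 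So the first step for each part is a case analysis on whether $D$ lies in $X_1$, $Y_1$ (resp.\ $Y_2$), or $Z_1Z_2$, peeling $D$ off as $Z_1=DW_1$ or $X_1=DW_1'$ etc., exactly in the style of the proof of Lemma~\ref{lem:y2-z2-empty}.

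For part (A), $X_1,Y_1,Z_1\subseteq R$ and $Z_2\subseteq S$. I would argue $D\in Z_1Z_2$ unless $X_1\indep Y_1\mid_G Z_1$ already holds with $Z_1\subseteq R$, in which case Lemma~\ref{lem:y2-z2-empty} (with $Z_2=\emptyset$) plus strong union (\ref{equn:su-1}) and Theorem~\ref{thm:pmap-ci-propagates} finish it. Otherwise write $Z_1Z_2=D\,W$ with $W=W_1W_2$; since $X_1,Y_1\subseteq R$ and $G_1$ is a P-map, $X_1\indep Y_1\mid_G DW_1$ gives $X_1\indep Y_1\mid_R DW_1$, hence $X_1\indep Y_1\mid_{R\Join S} DW_1$ by Theorem~\ref{thm:pmap-ci-propagates}; then, using Corollary~\ref{cor:cl-join-0} to get $X_1Y_1W_1\indep W_2\mid_{R\Join S} D$ and weak union (\ref{equn:wu}) to bring $W_2$ into the conditioning set, followed by contraction (\ref{equn:con}), I can enlarge the conditioning set to $DW_1W_2=Z_1Z_2$. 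Part (B) is the symmetric case with $Y_2\subseteq S$ instead of $Y_1\subseteq R$; here, after peeling $D$, I use Lemma~\ref{lem:two-one-prop} (which handles the "$X,Z$ in one relation, $Y$ in the other" pattern) in place of Theorem~\ref{thm:pmap-ci-propagates} to propagate $X_1\indep Y_2\mid_G DW$ to the join, and otherwise the same Corollary~\ref{cor:cl-join-0}/weak-union/contraction manipulation applies.

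For part (C), $X_1X_2\indep Y_1\mid_G Z_1Z_2$ (note the conclusion is $X_1\indep Y_1Y_2$, with the roles of the second-relation components swapped — I would first restate this as: separating $X_1X_2$ from $Y_1$ in $G$ is, by the cutset property, the same as separating $X_1Y_2$ from $Y_1$ when $D$ is on the conditioning side, so WLOG I prove $X_1X_2\indep Y_1Y_2\mid_{R\Join S}Z_1Z_2$ follows once we know the two "single-crossing" statements). Concretely: from $X_1X_2\indep Y_1\mid_G Z_1Z_2$ decomposition gives $X_1\indep Y_1\mid_G Z_1Z_2$ and $X_2\indep Y_1\mid_G Z_1Z_2$; part (A) handles the first, Lemma~\ref{lem:two-one-prop} (after checking $D\in Z_1Z_2$, which connectivity of $G_2$ forces) handles the second; then I combine them with Corollary~\ref{cor:cl-join-0} (giving $X_1Y_1W\indep X_2\mid_{R\Join S}D$), weak union, and contraction to assemble $X_1X_2\indep Y_1\mid_{R\Join S}Z_1Z_2$, and a second symmetric round brings in $Y_2$. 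Part (D) is then obtained from (C) and (B) by two applications of contraction (first adjoin $Y_2$ to the $Y$-side using (C)-type reasoning on $X_1X_2\indep Y_1$ and $X_1X_2\indep Y_2$, then adjoin $X_2$), mirroring how Lemma~\ref{lem:z2-empty} was bootstrapped from Lemma~\ref{lem:y2-z2-empty}.

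The main obstacle I anticipate is the bookkeeping in the case analysis when $D$ appears inside $X_1$ or $X_2$ (not in the conditioning set): there one cannot simply invoke $G_2$-connectivity to force $D$ into $Z_1Z_2$, and instead must show such configurations either cannot separate the relevant sets (because $G_2$ connected means $D$ is not separated from any $S$-vertex by an $R$-set) or reduce, after removing $D$ from the LHS, to a smaller instance already covered — keeping the disjointness hypotheses of the graphoid axioms satisfied throughout each contraction/weak-union step is the delicate part. I expect no genuinely new idea is needed beyond what already appears in the proofs of Lemmas~\ref{lem:y2-z2-empty}--\ref{lem:x2-y2-z1-empty}; this lemma is the "full generality" assembly step, and its proof should be a disciplined induction on $|X_2|+|Y_2|+|Z_1|$ peeling one attribute at a time.
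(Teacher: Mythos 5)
Your overall architecture --- a case analysis on where the join attribute $D$ sits, reduction to Lemmas~\ref{lem:y2-z2-empty}--\ref{lem:x2-y2-z1-empty} and Lemma~\ref{lem:two-one-prop}, and assembling (C) and (D) from (A) and (B) via decomposition, Observation~\ref{obs:cutset-sup}, and contraction --- is the same as the paper's, and your treatment of the case $D\in Z_1Z_2$ (peel off $D$, propagate with Theorem~\ref{thm:pmap-ci-propagates}, then bring $W_2$ across with Corollary~\ref{cor:cl-join-0}, weak union, and contraction) matches the paper's auxiliary Lemma~\ref{lem:d-z1-z2}.

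The gap is in the case $D\notin X_1\cup Y_1\cup Z_1\cup Z_2$ of parts (A) and (B). You correctly observe that here $X_1\indep Y_1|_G Z_1$ already holds, but you then claim Lemma~\ref{lem:y2-z2-empty} ``plus strong union'' finishes it. Strong union is not available in $R\Join S$ (it holds only in graph-isomorph models, and the whole point is that $R\Join S$ need not be one), and applying it inside $R$ or $G_1$ only lets you add attributes of $R$ to the conditioning set, not $Z_2\subseteq S$. What your route actually yields is $X_1\indep Y_1|_{R\Join S}DZ_1$, and after the Corollary~\ref{cor:cl-join-0}/weak-union/contraction step at best $X_1\indep Y_1|_{R\Join S}DZ_1Z_2$ --- with $D$ stuck in the conditioning set, whereas the target set $Z_1Z_2$ excludes $D$. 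Removing $D$ is the genuinely new step your sketch omits: the paper argues that, since $X_1\indep Y_1|_G Z_1Z_2$, at least one of $X_1\indep D|_G Z_1Z_2$ or $Y_1\indep D|_G Z_1Z_2$ must hold (else a path through $D$ would reconnect $X_1$ and $Y_1$), propagates that statement to $R\Join S$ via a separate auxiliary lemma about independences with $D$ on the left-hand side (Lemma~\ref{lem:dxz1z2}, which needs its own graphoid derivation), and only then can apply contraction followed by decomposition to strip $D$ off. Your proposal never produces an independence with $D$ on the left-hand side, so contraction cannot eliminate $D$; the attribute-at-a-time induction you suggest at the end would hit the same wall precisely at $D$. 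The same omission propagates to part (B) and, through it, to (C) and (D).
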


\cut{
\begin{proof}[of Lemma~\ref{lem:z1-z2-non-empty}]
\textbf{(A)} If $D \in Z_1Z_2$, it follows from Lemma~\ref{lem:d-z1-z2}. If $D \in X_1$ or $Y_1$, it follows from Lemma~\ref{lem:dxz1z2} and \ref{lem:dx1-y1-y2}. Hence we assume $D \notin Z_1Z_2, X_1, Y_1$.  $X_1 \indep Y_1 |_{G} Z_1Z_2$ $\Rightarrow$ $X_1 \indep Y_1 |_{G} Z_1$. This is because of the fact that $D \notin Z_1, Z_2$, $Z_2 \subseteq S$, and no path between $X_1, Y_1$ in $G$ can go through $Z_2$. In turn, $X_1 \indep Y_1 |_{G_1} Z_1$ (Observation~\ref{obs:subset-indep}). Hence $X_1 \indep Y_1 |_{G_1} DZ_1$ (Observation~\ref{obs:cutset-sup}), since $G_1$ is a P-map of $R$, by Theorem~\ref{thm:pmap-ci-propagates},
\begin{equation}\label{equn:A-1}
X_1 \indep Y_1 |_{R \Join S} DZ_1
\end{equation}
By Corollary~\ref{cor:cl-join-0}), $X_1Y_1Z_1 \indep Z_2 |_{R \Join S} D$. By weak union (\ref{equn:wu}), $X_1 \indep Z_2 |_{R \Join S} DY_1Z_1$. Combining with (\ref{equn:A-1}) by contraction (\ref{equn:con}), $X_1 \indep Y_1Z_2 |_{R \Join S} DZ_1$. By weak union,
\begin{equation}\label{equn:A-2}
X_1 \indep Y_1 |_{R \Join S} DZ_1Z_2
\end{equation}
We claim that either $X_1 \indep D |_{G} Z_1Z_2$ or $Y_1 \indep D |_{G} Z_1Z_2$. Indeed, if both fail, then there is a path from $X_1$ to $Y_1$ in $G_{-Z_1Z_2}$ violating the assumption that $X_1 \indep Y_1 |_{G} Z_1Z_2$.
\par
If $X_1 \indep D |_{G} Z_1Z_2$,by Lemma~\ref{lem:dxz1z2},  $\Rightarrow  X_1 \indep D |_{R \Join S} Z_1Z_2$. Combining with (\ref{equn:A-2}) by contraction,   $X_1 \indep DY_1 |_{R \Join S} Z_1Z_2$, and by decomposition, $X_1 \indep Y_1 |_{R \Join S} Z_1Z_2$.
\par
If $Y_1 \indep D |_{G} Z_1Z_2$, by similar argument, $DX_1 \indep Y_1 |_{R \Join S} Z_1Z_2$, and by decomposition, $X_1 \indep Y_1 |_{R \Join S} Z_1Z_2$.
\par
\textbf{(B)}
 If $D \in Z_1Z_2$, it follows from Lemma~\ref{lem:d-z1-z2}. If $D \in X_1$ or $Y_1$, it follows from Lemma~\ref{lem:dxz1z2} and \ref{lem:dx1-y1-y2}.
Hence assume $D \notin Z_1Z_2, X_1, Y_2$.  If $X_1 \indep Y_2 |_{G} Z_1Z_2$, either $X_1 \indep D |_{G} Z_1$ or  $Y_2 \indep D |_{G} Z_2$, otherwise, a path exists between $X_1$ and $Y_2$ through $D$ that is not in $Z_1Z_2$.
\par
Without loss of generality, assume $X_1 \indep D |_{G} Z_1$. Then $X_1 \indep D |_{G} Z_1Z_2$ (Observation~\ref{obs:cutset-sup}), and by Lemma~\ref{lem:dxz1z2},
\begin{equation}\label{equn:B-1}
X_1 \indep D |_{R \Join S} Z_1Z_2
\end{equation}
\par
By Corollary~\ref{cor:cl-join-0}),
$X_1Z_1 \indep Y_2Z_2|_{R \Join S}D$ $\Rightarrow X_1 \indep Y_2 |_{R \Join S}DZ_1Z_2$ (weak union). Combining with (\ref{equn:B-1}) by contraction,  $\Rightarrow X_1 \indep Y_2D|_{R \Join S}Z_1Z_2$, and by decomposition, $X_1 \indep Y_2|_{R \Join S}Z_1Z_2$.
\par
\textbf{(C)}
Since $X_1X_2\indep Y_1 |_{G} Z_1Z_2$, $X_1 \indep Y_1 |_{G} Z_1Z_2$, and by case (A) above,
\begin{equation}\label{equn:CC-1}
X_1 \indep Y_1 |_{R \Join S} Z_1Z_2
\end{equation}
Also  $X_2 \indep Y_1 |_{G} Z_1Z_2$, therefore, $X_2 \indep Y_1 |_{G} Z_1Z_2X_1$ (Observation~\ref{obs:cutset-sup}), and by case (B) above,
\begin{equation}\label{equn:CC-2}
X_2 \indep Y_1 |_{R \Join S} (Z_1Z_2)X_1
\end{equation}
Applying contraction (\ref{equn:con}) on (\ref{equn:CC-1}) and (\ref{equn:CC-2}), $X_1X_2 \indep Y_1 |_{R \Join S} Z_1Z_2$.
\par
\textbf{(D)}
Since $X_1X_2\indep Y_1Y_2 |_{G} Z_1Z_2$, $X_1X_2 \indep Y_1 |_{G} Z_1Z_2$, and by case (C) above,
\begin{equation}\label{equn:DD-1}
X_1X_2 \indep Y_1 |_{R \Join S} Z_1Z_2
\end{equation}
Also  $X_1X_2 \indep Y_2 |_{G} Z_1Z_2$, therefore, $X_1X_2 \indep Y_2|_{G} Z_1Z_2Y_1$ (Observation~\ref{obs:cutset-sup}), and by case (B) above,
\begin{equation}\label{equn:DD-2}
X_1X_2 \indep Y_2 |_{R \Join S} (Z_1Z_2)X_1
\end{equation}
Applying contraction (\ref{equn:con}) on (\ref{equn:DD-1}) and (\ref{equn:DD-2}), $X_1X_2 \indep Y_1Y_2 |_{R \Join S} Z_1Z_2$.

\cut{
}
\par
(D) Consider the remaining case that $D \notin Z_1Z_2, X_1, X_2, Y_1, Y_2$.  If $X_1X_2 \indep Y_1Y_2 |_{G} Z_1Z_2$, either (i) $X_1 \indep D |_{G} Z_1$, or  (ii) $Y_1 \indep D |_{G} Z_2$ and $Y_2 \indep D |_{G} Z_2$, otherwise, a path exists between $X_1$ and either $Y_1$ or $Y_2$ through $D$ that is not in $Z_1Z_2$.
\end{proof}
}

The theorem stating the main result of this section follows from the above lemmas (proof in the appendix):
\begin{theorem}\label{thm:union-imap}
Suppose $R$ and $S$ have P-maps $G_1$ and $G_2$ respectively that are connected graphs, $G$ is the union graph of $G_1, G_2$, and the join is on a single common attribute $R \cap S = \{D\}$. Then for disjoint subsets of vertices $X, Y, Z \in G$ if $X\indep Y |_{G} Z$, then  $X\indep Y |_{R \Join S} Z$, \ie, $G$ is an I-map for $R \Join S$.
\end{theorem}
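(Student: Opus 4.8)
The plan is to derive Theorem~\ref{thm:union-imap} as a short corollary of the lemmas already assembled in this section, via a case analysis on the conditioning set. Given disjoint $X, Y, Z \subseteq V(G) = R \cup S$ with $X \indep Y |_G Z$, I would first split each set along the two relations: write $X_1 = X \cap R$, $X_2 = X \setminus R$, and analogously $Y_1, Y_2, Z_1, Z_2$, so that $X_1, Y_1, Z_1 \subseteq R$ and $X_2, Y_2, Z_2 \subseteq S \setminus R$. Since $D$ is the unique shared attribute and lies in $R$, whichever of $X, Y, Z$ contains $D$ (at most one does, by disjointness) keeps it in its subscript-$1$ part, matching the bookkeeping convention under which Lemmas~\ref{lem:y2-z2-empty}--\ref{lem:z1-z2-non-empty} were stated. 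Then $X = X_1X_2$, $Y = Y_1Y_2$, $Z = Z_1Z_2$, and the goal is $X_1X_2 \indep Y_1Y_2 |_{R \Join S} Z_1Z_2$.

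Next I would branch on $Z$. If $Z \subseteq R$ (that is, $Z_2 = \emptyset$, including the case $Z = \emptyset$), Lemma~\ref{lem:z2-empty} gives directly $X_1X_2 \indep Y_1Y_2 |_{R \Join S} Z_1$, i.e. $X \indep Y |_{R \Join S} Z$. If $Z \subseteq S$ (that is, $Z_1 = \emptyset$), I would invoke symmetry: exchanging the roles of $R$ and $S$ leaves every hypothesis intact — each relation still has a connected P-map, the join is still on the single common attribute $D$, and $G$ is the same union graph — so the mirror image of Lemma~\ref{lem:z2-empty} applies and again yields $X \indep Y |_{R \Join S} Z$; this subsumes Lemma~\ref{lem:x2-y2-z1-empty} as the special case $X, Y \subseteq R$. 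Finally, if both $Z_1$ and $Z_2$ are non-empty, Lemma~\ref{lem:z1-z2-non-empty}(D) applies verbatim, giving $X_1X_2 \indep Y_1Y_2 |_{R \Join S} Z_1Z_2$. In every case $X \indep Y |_{R \Join S} Z$, so separation in $G$ implies a genuine conditional independence in $R \Join S$, which is precisely the assertion that $G$ is an I-map of $R \Join S$.

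The theorem itself is thus routine once the lemmas are in hand; the real work — and the main obstacle — lies inside those lemmas, above all Lemma~\ref{lem:z1-z2-non-empty} and its prerequisites (\ref{lem:two-one-prop}, \ref{lem:y2-z2-empty}, \ref{lem:y1-z2-empty}, \ref{lem:z2-empty}). The essential difficulty there is that $R \Join S$ need not admit any P-map, so one cannot reason about it graphically: all propagated independences must be manipulated purely through the graphoid axioms (symmetry, decomposition, weak union, contraction), with the union graph $G$ used only to locate, via Observation~\ref{obs:join-attr-cutset}, a path through the join attribute $D$ whenever separation would otherwise fail. The recurring maneuver is to peel off the interaction with $D$ using Corollary~\ref{cor:cl-join-0} together with weak union, to establish the $R$-internal component via Theorem~\ref{thm:pmap-ci-propagates} (which is legitimate because $R$ \emph{does} have a P-map), and to glue the pieces back together with contraction; the delicate aspect is carrying this out consistently across the many sub-cases, according to whether $D$ sits in $X$, $Y$, $Z$, or none of them, and on which side of the join it lies. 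For the present proof, by contrast, the only things to check are that the three branches above are exhaustive and that the $R \leftrightarrow S$ symmetry is legitimate, both of which are immediate.
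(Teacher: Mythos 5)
Your proposal matches the paper's own proof: both reduce the theorem to a case analysis on which of $Z_1, Z_2$ is empty, citing Lemma~\ref{lem:z1-z2-non-empty} when both are non-empty, the $Z_2=\emptyset$ lemmas otherwise, and the $R\leftrightarrow S$ symmetry for $Z_1=\emptyset$. The only difference is granularity: the paper further splits the $Z_2=\emptyset$ branch into sub-cases according to whether $X$ and $Y$ each straddle both relations, citing Lemmas~\ref{lem:y1-z2-empty}, \ref{lem:y2-z2-empty}, \ref{lem:two-one-prop} and Theorem~\ref{thm:pmap-ci-propagates} for the degenerate configurations, because the proof of Lemma~\ref{lem:z2-empty} tacitly uses $X_2, Y_2 \neq \emptyset$ (its opening step, forcing $D \in Z_1$, needs both to be non-empty); if one reads the lemma statements as permitting empty components, your single citation is harmless, so this is a presentational rather than a substantive difference.
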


\cut{
\begin{proof}[of Theorem~\ref{thm:union-imap}]
Suppose $X = X_1X_2$, $Y = Y_1Y_2$, $Z  = Z_1Z2$. Lemma~\ref{lem:z1-z2-empty} covers the (non-equivalent) cases when both $Z_1, Z_2$ are non-empty.  When $Z_2 = \emptyset$ (equivalently $Z_1$): (i) when both $X, Y$ contain both subsets, the result follows from Lemma~\ref{lem:z2-empty}; (ii) when one of $X,Y$ contain both subsets and the other contain one, the results follows from Lemmas~\ref{lem:y1-z2-empty} and \ref{lem:y2-z2-empty}; (iii) when both $X, Y$ contain one subset each, the result follows from Lemma~\ref{lem:two-one-prop} or Theorem~\ref{thm:pmap-ci-propagates} (in this case, the CI holds in the base relation since $G_1$ is a P-map, and therefore propagates to the joined relation).
%
\end{proof}
}

\textbf{Further Questions.~} In this section we showed that CIs from graph-isomorph base relations propagate to joined relation, and the union graph is an I-map if the join is on a single attribute and the given P-maps are connected. Several other questions remain to be explored: (1) If $G_1, G_2$ are P-maps for $R, S$, does a P-map always exist for $R \Join S$ (or under what conditions a P-map exists)? Example~\ref{eg:no-dmap} shows that new CIs may be generated for some instances, but a schema-based argument will be needed to find a solution to this problem. (2) What if $G_1, G_2$ are only I-maps? In this section, only the proof of Theorem~\ref{thm:pmap-ci-propagates} uses the properties of P-map in arguing that if a CI does not hold in the base relation $R$, the vertex separation does not hold in $G_1$; all other proofs mentioning P-maps use properties of vertex separation in undirected graphs. This assumption seemed to be necessary in our proof of Theorem~\ref{thm:pmap-ci-propagates}, so the question is whether an alternative proof exists that only uses properties from I-map to infer Theorem~\ref{thm:union-imap} (which uses Theorem~\ref{thm:pmap-ci-propagates}).  (3) We also use the assumption that $G_1, G_2$ are connected in our proofs - what happens if they are not connected (when some variables are unconditionally independent)? (4) Similar to Section~\ref{sec:condl_indep}, another question is what can be inferred about multiple relations, and whether properties like acyclicity of schemas help in inferring CIs.   Note that if conditions in Theorem~\ref{thm:union-imap} is weakened, requiring only that $G_1, G_2$ are I-maps of $R, S$, then if a join order on $k$ exists where every join happens only on one attribute (a special case of join tree for acyclic schema \cite{Beeri+83}), then the result can be extended to multiple relations, since by combining two I-maps we get another I-map. (5) Extending the results to arbitrary number of join attributes with arbitrary connections is another question to answer. 



\section{More Research Directions for the Framework}\label{sec:applications}
In addition to the questions stated in the previous sections, here we present three other fundamental directions that need to be explored to make causal analysis on multi-relational data robust and practical.

\subsection{Using Directed Graphical Models for CIs}\label{sec:future-directed}
The results in the previous sections suggest that additional knowledge on CIs on base relations help infer additional CIs in the joined relation, and that graphical models on base relation is a convenient technique to infer CIs on the joined relation. Although questions remain to be answered even for undirected graphical models (Section~\ref{sec:undirected}, \eg, for join on multiple attributes or with multiple relations), the undirected graphical model has inherent limitations itself in capturing some dependency relations, \eg, when $I(X, Z, Y)$ but $\neg I(X, ZW, Y)$ (in undirected graphs, supersets of cutsets are also cutsets). As a special case, consider the scenario when two variables $X, Y$ are independent (\ie, cannot have any path connecting them in an undirected graph), whereas they are dependent given a third variable $Z$ (which requires paths between $X, Z$ and $Y, Z$), for instance, when $X, Y$ denote the random toss of two coins, and $Z$ denotes the ring of a bell that rings only when the both coins output the same value \cite{Pearl-PR-book}. These two constraints cannot be satisfied simultaneously in any undirected graph, but they can be captured in a directed graphical model or \emph{Bayesian networks} by adding two directional edges from $X$ to $Z$ and $Y$ to $Z$. 
Bayesian networks are causal when the arrows reflect the direction of causality between variable: two edges from $X$ to $Z$ and $Y$ to $Z$ imply that both $X, Y$ causally affect $Z$ but not each other (mutually independent).
\par
The inference of CIs in Bayesian networks is performed using \emph{d-separation} proposed by Pearl and Verma \cite{PearlBook1998,DBLP:conf/aaai/PearlV87}. The basic idea is the following (which is extended to general paths in d-separation): for three variables $X, Y, Z$, if the directions are $X \rightarrow Z \rightarrow Y$, $X \leftarrow Z \leftarrow Y$, or $X \leftarrow Z \rightarrow Y$, then observing $Z$ makes $X, Y$ conditionally independent (or the path is \emph{blocked}), but if the direction is $X \rightarrow Z \leftarrow Y$ ($Z$ is a \emph{collider}), and if $Z$ or any of its descendants are observed, $X$ and $Y$ become dependent. In  general, $Z$ is said to d-separate $X$ from $Y$ in a directed graph, if all paths from $X$ to $Y$ are blocked by $Z$.
Not all CIs can be captured using a directed graph (\eg, CIs captured by a diamond-shaped undirected Markov network). However, the set of \emph{chordal graphs}, where every cycle of length $\geq 4$ has a chord, can be represented by both undirected and directed graphs (chordal graphs have also been studied for acyclic schemas in \cite{DBLP:journals/jacm/BeeriFMY83}). 
\par
The following example shows that the union of P-maps $G_1, G_2$ of base relations $R, S$ is not a D-map for the joined relation $R \Join S$: 
\begin{example}
Consider the example when $G_1 = (A, B)$, \ie, a directed edge from $A$ to $B$, and $G_2 = (C, B)$, \ie, a directed edge from $C$ to $B$. Taking the union of these two edges, $C$ becomes a collider between $A$ and $B$, suggesting that $A$ and $C$ are not independent given $B$. This contradicts with Corollary~\ref{cor:cl-join-0} that $A \indep C |_{R \Join S} B$. Since a CI in the model does not hold in the graph, it is not a D-map. 
\end{example} 

However, this does not say whether we can have an I-map by union (in the above case, we have a trivial I-map). 
Getoor et al. proposed the concept of \emph{Probabilistic Relational Model (PRM)} \cite{getoor2001selectivity}, their construction, and application to selectivity estimation for joins with primary  key-foreign keys. Maier et al. \cite{DBLP:conf/uai/MaierMAJ13} proposes the notion of \emph{relational d-separation}, to infer instance-based CIs in entity-relationship models involving multiple relations. But to the best of our knowledge, inferring schema-based d-separation for joins in Bayesian network, and combining the ones for base relations to get an I-map or a P-map for the joined relation, has not been explored in the literature.

\subsection{CIs involving Potential Outcomes and Causal Networks}\label{sec:app-causal-nw}
The CIs discussed in the paper consider columns from base or joined relations. However, for the strong ignorability condition, we need $Y(0), Y(1) \indep T | X$, where $Y(0), Y(1)$ are potential outcomes with missing data in the observed relation. In Proposition~\ref{prop:app-causality-all}  in a two-way join, we showed that conditioning on all attributes from both relations, or attributes containing the join variable satisfies ignorability but is not useful for estimating causal effects. 
The challenge is that while $Y(0), Y(1)$ should be independent of $T$ given $X$, $Y$ and $T$ cannot be independent given $X$ (otherwise ATE = 0). 
\par
In general, strong ignorability is not readily assertable from common knowledge \cite{PearlBook2000}. However, if the underlying variables are represented by a causal directed graph, Pearl  \cite{PearlBook2000} gives a sufficient condition called the \emph{backdoor criteria} for checking ignorability (also called unconfoundedness or admissibility or identifiability): in the DAG, no variable in $X$ is a descendant of $T$, and $X$ `blocks' all paths between $T$ and $Y$ that contains an arrow into $T$ (as done in d-separation, Section~\ref{sec:future-directed}). In other words, using Pearl's notations, $Y$ and $T$ are independent (using d-separation) given $X$ in the graph $G_{\underline{T}}$, which is obtained by deleting all edges emerging from $T$ (thereby taking care of the fact that $T$ should have a direct effect on $Y$). Understanding and extending backdoor criteria (and other observations from the causal graphical model from the vast literature by Pearl and co-authors) is an important direction to explore to understand ignorability for joined relations.

\subsection{Satisfying Basic Causal Assumptions for Arbitrary Joins}\label{sec:app-mult-rows}
Strong ignorability (Definition~\ref{def:SITA}) is one of the necessary conditions for observational causal studies. Another necessary condition is SUTVA (Definition~\ref{def:SUTVA}, required also for controlled experiments), which says that the treatment assignment to one unit does not affect the outcome of another unit. Another hidden assumption is that every unit constitutes one data record (one row in Table~\ref{tab:potential-outcome}). We discussed foreign key and one-one joins in Section~\ref{sec:condl_indep}. For foreign key joins, if treatment $T$ and outcome $Y$ belong to the table with the foreign key, we get one row for a unit with one $T$ and $Y$ value. The one-one join allows arbitrary selection of $Y$ and $T$. However, for many-many joins, $T$ and/or $Y$ may repeat in the joined table. In Observation~\ref{obs:units-unique-outcome} we discussed necessary conditions for units obtained by natural joins (tuples containing outcome $Y$ cannot repeat in joined relation). As discussed in Section~\ref{sec:valid-units-joins}, we need to investigate how joined relation can be post-processed to obtain valid units satisfying SUTVA; the same holds for relations originated from more complex queries.
For instance, in Example~\ref{eg:2}, if the information about both parents is stored in the $\mathtt{ParentsInfo}$ table, and the causal question is how much the $job$s of the parents affect the $gpa$s of the students,  for a student in the joined table $\mathtt{Students \Join Parent \Join ParentInfo}$, there may be two rows, both having the same outcome $Y = gpa$, but potentially different $T = \mathtt{parents\_income}$, which should be aggregated to obtain  valid units.

\subsection{Weaker Notions of CIs}\label{sec:weaker-CIs}
Defining CIs in base and joined relations using probabilistic  interpretation of conditional probabilities is a natural choice, although it raises some conceptual questions. As discussed earlier, CIs should be properties of relations when the relations capture well-defined entities, and should not change when we integrate this relation with another relation (\eg, if we are using weather data to reason about flight delays, the CIs that hold among temperature, pressure, humidity should continue to hold in the integrated dataset after joining with flight dataset). However as Proposition~\ref{prop:join-1-tight} shows, not all CIs propagate using join, and further, Example~\ref{eg:no-dmap} in the appendix shows that new CIs (confined to a base relation) may be generated in the joined relation. 
However, since all the schema-based CIs do not propagate to the joined relation anyway, 
the question arises whether other notions of CIs in relations are meaningful that will remain the same whether or not a join has been performed. Indeed, we can consider the combined joint distribution of all attributes from all base relations \cite{getoor2001selectivity}, but the question of inferring CIs in the joined relation starting from the CIs in the base relations still is an interesting and useful question to answer beyond causal analysis (\eg, efficiently learning graphical models on joined relation or selectivity estimation in joined relation, Section~\ref{sec:QO}). 

\cut{
\begin{example}\label{eg:no-dmap}
Consider relations $R(A, B, C, D, E)$ and $S(D, F)$. 
The relation instance contains the tuples $(a_1, b_1, c, d_1, -)$, $(a_1, b_2, c, d_2, -)$, $(a_2, b_1, c, d_3, -)$, $(a_2, b_2, c, d_4, -)$ (with unique values of $E$ as `-') respectively $2, 3, 1, 1$ times. 
For $C = c$, $\Pr_R[A = a_1, B = b_1 | C = c] = \frac{2}{7}$, whereas $\Pr_R[A = a_1 | C = c] = \frac{5}{7}$ and $\Pr_R[A = b_1 | C = c] = \frac{3}{7}$, so $\neg (A \indep B |_R C)$. Now suppose $S$ contains 3, 2, 1, 1 tuples respectively of the form $(d_1, -)$, $(d_2, -)$, $(d_3, -)$, $(d_4, -)$, and therefore using the new frequencies in $J = R \Join S$,
$\Pr_J[A = a_1, B = b_1 | C = c] = \frac{6}{14}$, whereas $\Pr_R[A = a_1 | C = c] = \frac{12}{14}$ and $\Pr_R[A = b_1 | C = c] = \frac{7}{14}$, thus satisfying the CI $A \indep B |_{R \Join S} C$.
\end{example} 
}

 One natural choice is that of \emph{embedded multi-valued dependency (EMVD)}  proposed by Fagin \cite{DBLP:journals/tods/Fagin77}.
A \emph{multi-valued dependency (MVD)} $X \mvd Y$ holds in a relation $R$,  if  for each pair of tuples $t_1, t_2 \in R$ such that $t_1[X] = t_2[X]$, there is a tuple $t$ in $R$ such that $t[X] = t_1[X] = t_2[X]$, $t[Y] = t_1[Y]$,   and $t[W] = t_2[W]$, where $W = R \setminus XY$.
An \emph{EMVD} $X \mvd Y | Z$ holds, if in the relation $\pi_Z R$, the MVD $X \mvd Y$ holds.
Basically EMVD gives a weaker form of CIs when the frequencies of variables and tuples are ignored, where $X \mvd Y | Z$ can be represented using independence symbol $\indep*$ as 
$Y \indep* (Z \setminus XY) | X$. As a result, EMVD gives desired properties like if an EMVD $X \mvd Y |_J Z$ holds in the joined relation $J$, and if $Z \subseteq R$, then this EMVD will also hold in the base relation: $X \mvd Y |_R Z$. In addition, if the join does not destroy any tuple from any of the base relations, \ie, if the set of relations $R_1, \cdots, R_k$ is \emph{semi-joined reduced} (all tuples in all relations generate a tuple in the joined relation), then  $X \mvd Y |_R Z \Rightarrow X \mvd Y |_J Z$. Further, EMVDs form a semi-graphoid (Definition~\ref{def:graphoid}) allowing additional inference of CIs in the joined relation.
Nevertheless, CI is fundamentally connected to frequencies, and whether using EMVDs for causal studies is a meaningful option, or if there are other possible ways of defining the probability space on relations, has to be investigated further.  


\subsection{Conclusions}
To summarize, in this paper we proposed a formal framework on causal analysis on multi-relational data extending the Neyman-Rubin potential outcome model. We obtained preliminary results in understanding CIs that hold in the joined relation in general, in special joins, as well as when the base relations are graph-isomorph and the P-maps are given. We also discussed several research problems and directions in the previous sections and in this section. 
Apart from these directions, the other general questions on causal analysis, like covariate selection and making matching techniques more efficient. Finally, extending the framework to more complex queries beyond natural join is a challenging research direction. 

%



\newpage

\appendix
\section{Proofs from Section~4}
\subsection{Proof of Lemma~\ref{lem:cl-join-0}}

\begin{proof}[of  of Lemma~\ref{lem:cl-join-0}]
Fix any value $z$ (as a set) of the attributes $Z = R \cap S$. Given $Z = z$, all tuples $r \in R$ with $r[Z] = z$ join with all tuples $s \in S$ with $s[Z] = z$, and with no other tuples. 
Let $X = R \setminus S$, $Y = S \setminus R$, $T = R \Join S$. For any values $X = x, Y = y$, $N_{T, xz} = N_{R, xz} \times N_{S, z}$, $N_{T, yz} = N_{S, yz} \times N_{R, z}$,
$N_{T, xyz} = N_{R, xz} \times N_{S, yz}$, and $N_{T, z} = N_{R, z} \times N_{S, z}$. Hence it follows that 
$$\frac{N_{T, xyz}}{N_{T, z}} = \frac{N_{T, xz}}{N_{T, z}} \times \frac{N_{T, yz}}{N_{T, z}}$$
\ie, $X \indep Y |_{T} Z$.
\end{proof}

\subsection{Proof of Lemma~\ref{lem:ci-join-1}}
\begin{proof}[of Lemma~\ref{lem:ci-join-1}]
By the definition of conditional independence, and since $X \indep Y |_{R \Join S} Z$, for all values $x, y, z$ of $X, Y, Z$ we have
\begin{equation}\label{equn:6}
\frac{N_{R, xyz}}{N_{R, z}} = \frac{N_{R, xz}}{N_{R, z}} \times \frac{N_{R, yz}}{N_{R, z}}
\end{equation}
Fix arbitrary $x, y, z$. 
Let $Z = Z_1 \cup Z_2$, where $Z_1 = R \cap S$ and $Z_2  = Z \setminus Z_1$. 
Let $Z_1 = z_1$ and $Z_2 = z_2$ in $z$.
Note that $N_{S, (Z_1, z_1)}$ is the number of tuples in $S$ with the value of the join attributes as $z_1$. Each tuple $r$ in $R$ with $Z = z$, \ie, with $Z_1 = z_1$, joins with all tuples in $S$ with $Z_1 = z_1$, and joins with no other tuples in  $S$. Hence multiplying the numerator and denominator all terms in equation (\ref{equn:6}) above, we get
$$\frac{N_{R \Join S, xyz}}{N_{R \Join S, z}} = \frac{N_{R \Join S, xz}}{N_{R \Join S, z}} \times \frac{N_{R \Join S, yz}}{N_{R \Join S, z}}$$
In other words, $X \indep Y |_{R \Join S} Z$.
\end{proof}

\subsection{Proof of Lemma~\ref{lem:ci-join-2}}
\begin{proof}[of Lemma~\ref{lem:ci-join-2}]
Since $X \supseteq (R	 \cap S)$,  suppose $X = W \cup V$ where $V = (R \cap S)$ and $W = Z \setminus W$, \ie, $V$ denotes the set of joined attributes. 
By the definition of conditional independence, and since $WV \indep Y |_{R \Join S} Z$, for all values $y, w, v, z$ of $Y, W, V, Z$, from (\ref{equn:6}) we have: 
\begin{equation}\label{equn:7}
N_{R, wvz} \times N_{R, yz} = N_{R, wvyz} \times N_{R, z}
\end{equation}
In the joined relation $T = R \Join S$, every tuple with a value $V = v$ in $R$ will join with all $N_{S, v}$ tuples in $S$ and with no other tuples. Hence,  
{\small
\begin{eqnarray}
\nonumber
&&N_{T, wvyz} \times N_{T, z}\\\nonumber
& = & N_{T, wvyz} \times (\sum_{w', v', y'} N_{T, w'v''y'z})\\\nonumber
& = & (N_{R, wvyz} \times N_{S, v}) \times \sum_{w', v', y'} (N_{R, w' v' y' z} \times N_{S, v'})\\\nonumber
& = & (N_{R, wvz} \times N_{R, yz}) \times N_{S, z} \times \sum_{w', v', y'} (N_{R, w'v'z} \times N_{R, y'z} \times N_{S, v'})\\\nonumber
&&\phantom{(N_{R, wvz} \times N_{R, yz}) \times N_{S, z} \times }~~~\textrm{(by (\ref{equn:7}))}\\\nonumber
& = & N_{R, wvz} \times N_{R, yz} \times N_{S, z} \times \sum_{w', v'} (N_{R, w'v'z} \times N_{S, v'} \times (\sum_{y'}N_{R, y'z}) )\\\nonumber
& = & N_{R, wvz} \times N_{R, yz} \times N_{S, z} \times \sum_{w', v'} (N_{R, w'v'z} \times N_{S, v'} \times N_{R, z}) \\
& = & N_{R, wvz} \times N_{R, yz} \times N_{S, z} \times N_{R, z} \times \sum_{w', v'} (N_{R, w'v'z} \times N_{S, v'})\label{equn:8}
\end{eqnarray}
}
And,

{\small
\begin{eqnarray}
\nonumber
&&N_{T, wvz} \times N_{T, yz}\\\nonumber
& = & (\sum_{y'} N_{T, wvy'z}) \times (\sum_{w', v'} N_{T, w'v''yz})\\\nonumber
& = & \sum_{y'} N_{R, wvy'z} \times N_{S, z}) \times \sum_{w', v'} (N_{R, w' v' y z} \times N_{S, v'})\\\nonumber
& = & N_{S, z} \times (\sum_{y'}N_{R, wvz} \times N_{R, y'z})  \times \sum_{w', v'} (N_{R, w'v'z} \times N_{R, yz} \times N_{S, v'})\\\nonumber
&&\phantom{(N_{R, wvz} \times N_{R, yz}) \times N_{S, z} \times }~~~\textrm{(by (\ref{equn:7}))}\\\nonumber
& = & N_{R, wvz} \times N_{R, yz} \times N_{S, z} \times (\sum_{y'} N_{R, y'z})  \times  \sum_{w', v'} (N_{R, w'v'z} \times N_{S, v'})\\\nonumber
& = & N_{R, wvz} \times N_{R, yz} \times N_{S, z} \times N_{R, z}\sum_{w', v'} (N_{R, w'v'z} \times N_{S, v'} \times N_{R, z}) \\
& = & N_{R, wvz} \times N_{R, yz} \times N_{S, z} \times N_{R, z} \times \sum_{w', v'} (N_{R, w'v'z} \times N_{S, v'})\label{equn:9}
\end{eqnarray}
}
From (\ref{equn:8}) and (\ref{equn:9}), for all $x = (w, v), y, z$
\begin{eqnarray}
\nonumber
&& N_{T, wvyz} \times N_{T, z} = N_{T, wvz} \times N_{T, yz}\\\nonumber
& \Rightarrow & \frac{N_{T, xyz}}{N_{T, z}} = \frac{N_{T, xz}}{N_{T, z}} \times \frac{N_{T, yz}}{N_{T, z}}
\end{eqnarray}
\ie, in $T = R \Join S$, $X \indep Y |_{R \Join S} Z$.
\end{proof}

\subsection{Proof of Proposition~\ref{prop:join-fk}}

\begin{proof}[of Proposition~\ref{prop:join-fk}]
With a primary key join, every tuple in $R$ joins with exactly one tuple in $S$. Hence for $T = R \Join S$, and for all $x, y, z$ values of $X, Y, Z$, if 
$\frac{N_{R, xyz}}{N_{R, z}} = \frac{N_{R, xz}}{N_{R, z}} \times \frac{N_{R, yz}}{N_{R, z}}$, then 
$\frac{N_{T, xyz}}{N_{T, z}} = \frac{N_{T, xz}}{N_{T, z}} \times \frac{N_{T, yz}}{N_{T, z}}$, since all frequencies in all numerators and denominators in $R$ is multiplied by 1 to obtain the frequencies in $T$. 
\end{proof}

\subsection{Proof of Proposition~\ref{prop:app-causality-all}}

\begin{proof}[of Proposition~\ref{prop:app-causality-all}]
Let
$U, V$ denote the subset of attributes in $R$ and $S$ respectively in $X$ that do not belong to $T, Y, $ or $R \cap S$. 
Below, we use $Y'$ as a placeholder for either $Y(0), Y(1)$ (to prove (i)), or $Y$ (to prove (ii)).
\begin{eqnarray*}
&&UT \indep VY' |_{R\Join S} (R \cap S)~~~~\textrm{(Corollary~\ref{cor:cl-join-0})}\\
& \Rightarrow & UT \indep Y' |_{R\Join S} (R \cap S) V~~~~\textrm{(weak union (\ref{equn:wu})}\\
& \Rightarrow & T \indep Y' |_{R\Join S} (R \cap S) VU~~~~\textrm{(weak union (\ref{equn:wu})}\\
& \equiv & T \indep Y' |_{R\Join S} X\\
\end{eqnarray*}
since $X = \{U\} \cup \{V\} \cup (R \cap S)$.
This shows that (i) $T \indep Y(0), Y(1)  |_{R\Join S} X$, and (ii) $T \indep Y  |_{R\Join S} X$.
\par
To show (iii), we show that if  $T \indep Y  |_{R\Join S} X$, then ATE = 0. 
If $T \indep Y |X$ then for all $X = x, Y = y, T = t$, $P(y|t, x)=P(y|x)$. Thus,  $E[Y|X]=E[Y|X,T=1]=E[Y|X,T=0]$. Therefore (see (\ref{equn:ae})),
$ATE$ = $E_X[E[Y(1)|T=1,X]] - E_X[E[Y(0)|T=0,X]]$ = $E_X[E[Y|T=1,X]] - E_X[E[Y|T=0,X]]$ = 0. This shows (iii).
\end{proof}

\subsection{Proof of Proposition~\ref{prop:cequi}}

	We will use the following properties of entropy \cite{InfoTheoryBookCoverThomas, DBLP:conf/pods/DalkilicR00}:

\begin{proposition} \label{obs:entropy_prop} 
 For $X, Y, Z \subseteq \attr$
		\begin{itemize}
		\itemsep0em
		\item[(a)] $H(X) \geq 0$, $H(X|Y) \geq 0$, $I(X, Y) \geq 0$
	and  $I(X, Y|Z) \geq 0$
		\item[(b)]  $X \indep Y$ if and only if $I(X,Y)=0$, and $X \indep Y |Z$  if and only if $I(X,Y|Z)=0$.
	\item[(c)] If $X$ functionally determines $Y$, \ie, if $X \rightarrow Y$, then $H(Y|X)=0$.
	\item[(d)] If $H(Y|X)=0$ then for any $Z$, $H(Y|XZ)=0$
	\item[(e)] If $X \rightarrow Y$ (or, if $H(Y|X) = H(Y|XZ) = 0$) then for any $Z$, $I(Y,Z|X)$ = 0. 
		\end{itemize}

\end{proposition}

\begin{proof}
 Proofs of (a, b) can be found in \cite{InfoTheoryBookCoverThomas} (some observations are obvious). To see (c) (also shown in \cite{DBLP:conf/pods/DalkilicR00}), note that if $X$ functionally determines $Y$
	then for any $x \in X$ and $y \in Y$ $P(Y=y,X=x)=P(X=x)$ (follows from the definition of a functional dependency).
	Thus $H(X,Y)=H(X)$ and therefore, $H(Y|X)=H(X,Y)-H(X)=0$. (d) is obtained from the 
	non-negativity of mutual information. Since $I(Y, Z|X) \geq 0$ then $H(Y|X)-H(Y|XZ) \geq 0$, i.e., 
	$H(Y|X) \geq H(Y|XZ)$. Now if
	$H(Y|X)=0$ then $H(Y|XZ)\leq 0$. Since $H(Y|XZ) \geq 0$, $H(Y|XZ) = 0$. (e) follows from (c) and (d), since $I(Y,Z|X) = H(Y|X) - H(Y|XZ)$. 
\end{proof}	

Now we prove Proposition~\ref{prop:cequi}.
\begin{proof}[of Proposition~\ref{prop:cequi}]
To prove the claim, we show that one of the conditions in Theorem \ref{th:ceq} 
is satisfied by $X$  and $X_i$. In particular, we show that both 
\begin{equation}\label{equn:equiv-1}
T \indep X_i|_U  X
\end{equation} and 
\begin{equation}\label{equn:equiv-2}
Y \indep X|_U X_iT
\end{equation} 
 hold. 

First we prove (\ref{equn:equiv-1}). Since $X_i \subseteq  X$, $X$ functionally
determines $X_i$ (this is a trivial functional dependency in $U$). 
Now, from  Proposition~\ref{obs:entropy_prop}(e) it follows that, $I(X_i, T|X)$.  Therefore, $T \indep X_i|_U  X$ (Proposition~ \ref{obs:entropy_prop}(c)), \ie,   (\ref{equn:equiv-1}) holds. 

Next we show (\ref{equn:equiv-2}). 
Since $H(A|B) = H(AB) - H(B)$, 
\begin{equation}\label{equn:equiv-3}
H(X \setminus X_i | X_i) = H(X) - H(X_i) = H(X|X_i)
\end{equation}
Let $FK$ denote the set of foreign keys from $R_i$ to all $R_j$, $j \in [1, k]$, $j \neq i$.
Hence $FK$ functionally determines $X-X_i$ in $U$, \ie,

\begin{eqnarray*} \tiny
&& H(X \setminus X_i|FK)=0 ~~~~~\textrm{(Proposition \ref{obs:entropy_prop}(c))} \nonumber\\  
& \Rightarrow & H(X \setminus X_i|FK X_i)=0 ~~~ \textrm{(Proposition \ref{obs:entropy_prop}(d))} \nonumber \\  
& \Rightarrow & H(X \setminus X_i|X_i)=0  ~~~~ \textrm{(Since } {FK \subseteq X_i)} \nonumber \\  
& \Rightarrow & H(X|X_i)=0   ~~~~ \textrm{{(From } (\ref{equn:equiv-3}))} \nonumber \\  
& \Rightarrow & H(X|X_i,T)=0  ~~~~  \textrm{(Proposition \ref{obs:entropy_prop}(d))} \nonumber \\  
& \Rightarrow & I(X,Y|X_iT)=0 ~~~~ \textrm{(Proposition \ref{obs:entropy_prop}(e))} \nonumber \\  
& \Rightarrow & X\indep Y| X_i, T   ~~~~~ \textrm{(Proposition \ref{obs:entropy_prop}(b))}\nonumber   
\end{eqnarray*}
\end{proof}

\cut{
Since $X_i \rightarrow U$ and $X \subseteq U$, $X_i \rightarrow X$. Hence,

\begin{eqnarray*} \tiny
&& H(X|X_i)=0   ~~~~  \textrm{(Proposition \ref{obs:entropy_prop}(c))} \nonumber \\  
& \Rightarrow & H(X|X_i,T)=0  ~~~~  \textrm{(Proposition \ref{obs:entropy_prop}(d))} \nonumber \\  
& \Rightarrow & I(X,Y|X_iT)=0 ~~~~ \textrm{(Proposition \ref{obs:entropy_prop}(e))} \nonumber \\  
& \Rightarrow & X\indep Y| X_i, T   ~~~~~ \textrm{(Proposition \ref{obs:entropy_prop}(b))}\nonumber   
\end{eqnarray*}
}

\section{Proofs from Section~5}

\subsection{Proof of Lemma~\ref{lem:either-or}}

\begin{proof}[of Lemma~\ref{lem:either-or}]
Suppose not, \ie, assume the contradiction that $\neg (X \indep YD |_R Z)$ \emph{and} $\neg(XD \indep Y |_R Z)$. 
Note that $D$ is a single vertex in $G_1$. Since $G_1$ is a P-map for $R$, it follows that $\neg (X \indep YD |_{G_1} Z)$ \emph{and} $\neg(XD \indep Y |_{G_1} Z)$. Since $\neg (X \indep YD |_{G_1} Z)$, in $G_1$, there is a  path from $X$ to $DY$ that does not use any vertex in $Z$. Since $X \indep Y|_R Z \equiv X \indep Y|_{G_1} Z$, removing $Z$ disconnects $X$ from $Y$, hence there must be a path $p_1$ from $X$ to $D$ that does not use vertices from $Z$. Similarly, using $\neg(XD \indep Y |_{G_1} Z)$, there is a path $p_2$ from $D$ to $Y$ that does not use vertices from $Z$. Combining $p_1$ and $p_2$ (and making it a simple path by removing vertices if needed), there is path from $X$ to $Y$ in $G_1$ that does not use vertices from $Z$, contradicting the given assumption that $X \indep Y|_{G_1} Z$, and equivalently $X \indep Y|_{R} Z$. Hence either $X \indep YD |_R Z$ or $XD \indep Y |_R Z$. 
\par
An alternative proof can be obtained using the transitivity property (\ref{equn:tran-1}). Since $X \indep Y |_{G_1} Z$, by strong union (\ref{equn:su-1})
\begin{equation}
X \indep Y |_{G_1} Z \label{equn:bsu}
\end{equation}
Also,
{\small
\begin{eqnarray} 
	&& X \indep Y |_{G_1} Z \nonumber\\ 
	& \Rightarrow & X \indep D |_{G_1} Z~~  \textrm{or}~~ Y \indep D |_{G_1} Z  ~~~\textrm{(Transitivity (\ref{equn:tran-1}))} \nonumber\\
	& \Rightarrow & X \indep D |_{G_1} ZY~~ \textrm{or}~~ Y \indep D |_{G_1} ZX \label{equn:btra} \\
	&& ~~~~~~~~~\textrm{~(Strong union (\ref{equn:su-1}))} \nonumber\\
		& \Rightarrow & X \indep YD |_{G_1} Z~~ \textrm{or}~~  XD \indep Y |_{G_1} Z \nonumber\\
		& &~~~~~~~~~\textrm{((\ref{equn:bsu}),(\ref{equn:btra}) and Intersection (\ref{equn:int-1}))} \nonumber
\end{eqnarray}
}
\end{proof}

\cut{
\babak{ The following proof work for an arbitrary singleton D disjoint from X,Y and Z. I don't know why the transitivity axiom restricted to singleton $\gamma$ in Pearl's book.
}
}

\subsection{Proof of Theorem~\ref{thm:pmap-ci-propagates}}

Now we prove Theorem~\ref{thm:pmap-ci-propagates} using Lemma~\ref{lem:either-or}.
\begin{proof}[of Theorem~\ref{thm:pmap-ci-propagates}]
If joined attributes $D \subseteq X, Y, $ or $Z$, then by Theorem~\ref{thm:ci-join}, $X \indep Y |_{R \Join S} Z$.
\par
Otherwise, assume $D \not\subseteq X, Y$, and $Z$. By  Lemma~\ref{lem:either-or}, $X \indep YD |_R Z$ or $XD \indep Y |_R Z$. Without loss of generality, suppose $X \indep YD |_R Z$. Then by Theorem~\ref{thm:ci-join}, $X \indep YD|_{R \Join S} Z$. Then by the decomposition property of graphoid axioms (\ref{equn:dec}), $X \indep Y|_{R \Join S}Z$.  
\end{proof}

\subsection{Example~\ref{eg:pmap-prop}: CIs from Graph-Isomorph Relations Propagate to Joined Relation}

\begin{example}\label{eg:pmap-prop} 
Suppose $R = (ABCD)$ is given by graph $G_1 = A - B - C - D$, and $S = (DE)$ is given by $D-E$. 
Here we give an instance of $R$ that conforms to $G_1$: \ie, $A \indep CD |_R B$, $B \indep D |_R C$, $A \indep D |_R BC$ -- all can be verified from $R$; $S$ remains the same:\\

{\small
\begin{tabular}{|c|c|c|c|}
\multicolumn{4}{c}{$\mathbf{R}$}\\
\hline
$A$ & $B$ & $C$ & $D$\\
\hline
$a_1$ & $b_1$ & $c$ & $d_1$ \\
$a_1$ & $b_2$ & $c$ & $d_2$ \\
$a_2$ & $b_1$ & $c$ & $d_3$ \\
$a_2$ & $b_2$ & $c$ & $d_4$ \\
$a_1$ & $b_1$ & $c$ & $d_3$ \\
$a_1$ & $b_2$ & $c$ & $d_4$ \\
$a_2$ & $b_1$ & $c$ & $d_1$ \\
$a_2$ & $b_2$ & $c$ & $d_2$ \\\hline
\end{tabular}
\begin{tabular}{|c|c|}
\multicolumn{2}{c}{$\mathbf{S}$}\\
\hline
$D$ & $E$\\
\hline
$d_1$ & $e_1$ \\
$d_1$ & $e_2$ \\
$d_2$ & $e_1$ \\
$d_2$ & $e_2$ \\
$d_2$ & $e_3$ \\
$d_3$ & $e_1$ \\
$d_4$ & $e_1$ \\\hline
\end{tabular}
\begin{tabular}{|c|c|c|c|c|}
\multicolumn{5}{c}{$\mathbf{R} \Join \mathbf{S}$}\\
\hline
$A$ & $B$ & $C$ & $D$ & $E$\\
\hline
$a_1$ & $b_1$ & $c$ & $d_1$ & $e_1$ \\
$a_1$ & $b_1$ & $c$ & $d_1$ & $e_2$ \\
$a_1$ & $b_2$ & $c$ & $d_2$ & $e_1$  \\
$a_1$ & $b_2$ & $c$ & $d_2$ & $e_2$  \\
$a_1$ & $b_2$ & $c$ & $d_2$ & $e_3$  \\
$a_2$ & $b_1$ & $c$ & $d_3$ & $e_1$  \\
$a_2$ & $b_2$ & $c$ & $d_4$ & $e_1$  \\
$a_2$ & $b_1$ & $c$ & $d_1$ & $e_1$ \\
$a_2$ & $b_1$ & $c$ & $d_1$ & $e_2$ \\
$a_2$ & $b_2$ & $c$ & $d_2$ & $e_1$  \\
$a_2$ & $b_2$ & $c$ & $d_2$ & $e_2$  \\
$a_2$ & $b_2$ & $c$ & $d_2$ & $e_3$  \\
$a_1$ & $b_1$ & $c$ & $d_3$ & $e_1$  \\
$a_1$ & $b_2$ & $c$ & $d_4$ & $e_1$  \\
\hline
\end{tabular}\\
}

\smallskip
Also, note that, $A \indep C |_{R \Join S} B$: in $R \Join S$, $\Pr[A = a_1, B = b_1 | C = c] = \frac{3}{14}$, whereas $\Pr[A = a_1| C = c] = \frac{7}{14}$ and
$\Pr[B = b_1 | C = c] = \frac{6}{14}$, \ie, the CI now propagates to $R \Join S$.
\end{example}

\subsection{Proof of Lemma~\ref{lem:two-one-prop}}

\begin{proof}[of Lemma~\ref{lem:two-one-prop}] Without loss of generality, assume $X, Z \subseteq R$ and $Y \subseteq S$. Let $D = R \cap S$ denote the singleton join attribute. 
There are different cases:
\par
\textbf{(i) $D \in Z$, $D \notin X, Y$:} If $Z = D$, the lemma follows from Theorem~\ref{thm:ci-join}. Hence assume $Z = DZ_1$, where $Z_1 \subseteq R$. By  Theorem~\ref{thm:ci-join}, $XZ_1 \indep Y |_{R \Join S} D$. By weak union property of graphoid axioms, $X \indep Y |_{R \Join S} DZ_1$, or $X \indep Y |_{R \Join S} Z$.
\par
\textbf{(ii) $D \in X$ (similarly $Y$), $D \notin Y, Z$} If $D = X$, by Theorem~\ref{thm:ci-join} the lemma follows. Hence assume $X = DX_1$, where $X_1 \subseteq R$ and $DX_1 \indep Y |_{G} Z$. We claim that this case cannot arise. Suppose not. Then in $G$, no path exists between $D$ and $Y$ in $G_{-Z}$. However, $Z \in R$ whereas $D, Y \in S$, $G_2$ is connected by assumption, and the connectivity of $D$ and $Y$ is not affected by removing $Z$. 
\par
\textbf{(iii) $D \not\in X, Y, Z$:} 
Since $X \indep Y |_{G} Z$, all paths between $X$ and $Y$ in $G$ go through $Z$. By Observation~\ref{obs:join-attr-cutset}, all paths between $X$ and $Y$ also go through $D$. Therefore, 
\begin{equation}\label{equn:xd-i-z-g}
X \indep D |_{G} Z
\end{equation}
 since otherwise, there is a path from $X$ to $D$ that do not go through $Z$, and in conjunction with a path between $D$ to $Y$ in $G_2$ (we assumed that both $G_1, G_2$ are connected), we get a path between $X$ and $Y$ in $G$ that does not go through $Z$ violating the assumption that $X \indep Y |_{G} Z$. 
\par
From (\ref{equn:xd-i-z-g}), since $G_1$ has a subset of edges of $G$ (if a path exists in $G_1$ it must exist in $G$), we have
\begin{eqnarray}
\nonumber
& & X \indep D |_{G_1} Z\\\nonumber
&\Rightarrow  &X \indep D |_{R} Z ~~~~~~~\textrm{(since $G_1$ is a P-map of $R$)}\\
& \Rightarrow & X \indep D |_{R\Join S} Z~~~~~~~\textrm{(from Theorem~\ref{thm:pmap-ci-propagates})}\label{equn:x-d-z-rs}
\end{eqnarray}
The last step follows from the fact that all of $X, Y, Z$ belong to $R$.
Now, from Corollary~\ref{cor:cl-join-0} we have 
\begin{eqnarray}
\nonumber
& & XZ \indep Y |_{R \Join S} D\\\nonumber
&\Rightarrow  &X \indep Y |_{R\Join S} D Z \label{equn:x-y-dz-rs}
\end{eqnarray}
where the last step follows from weak union of graphoid axioms (\ref{equn:wu}).
From (\ref{equn:x-d-z-rs}) and (\ref{equn:x-y-dz-rs}), applying the contraction property of the graphoid axioms (\ref{equn:con}) (assume $\X = X, \Y = D, \Z = Z, \W = Y$), 
\begin{eqnarray*}
&&X \indep YD |_{R \Join S} Z\\
& \Rightarrow & X \indep Y |_{R \Join S} Z
\end{eqnarray*}
by the decomposition property of the graphoid axioms (\ref{equn:dec}).
This proves the lemma.
\end{proof}

\subsection{Proof of Lemma~\ref{lem:y2-z2-empty}}

\begin{proof}[of Lemma~\ref{lem:y2-z2-empty}]
We consider all possible cases w.r.t. the join attribute $D$.
\par
\textbf{(i) $D \in Z_1, D \notin X_1, X_2, Y_1$}: 
(i-a) Suppose $Z_1 = D$.  Since $X_1X_2 \indep Y_1 |_{G} Z_1$, we have $X_1 \indep Y_1 |_{G} Z_1$, by Observation~\ref{obs:subset-indep},  $X_1 \indep Y_1 |_{G_1} Z_1$, since $G_1$ is a P-map of $R$, $X_1 \indep Y_1 |_{R} Z_1$, by Theorem~\ref{thm:ci-join}, 
\begin{equation}\label{equn:x1-y1-rs-z-1}
X_1 \indep Y_1 |_{R \Join S} Z_1 \equiv X_1 \indep Y_1 |_{R \Join S} D
\end{equation}
Further, by Corollary~\ref{cor:cl-join-0},
\begin{equation}\label{equn:y1-x2-rs-d-1}
Y_1 \indep X_2 |_{R \Join S} D
\end{equation}
By Corollary~\ref{cor:cl-join-0},
$X_1 Y_1 \indep X_2 |_{R \Join S} D$, by weak union (\ref{equn:wu}), $X_1 \indep X_2 |_{R \Join S} DY_1$. Using (\ref{equn:x1-y1-rs-z-1}) and contraction (\ref{equn:con}),  $X_1 \indep Y_1X_2 |_{R \Join S} D$. By weak union (\ref{equn:wu}), $X_1 \indep Y_1 |_{R \Join S} DX_2$. By contraction (\ref{equn:con}) and (\ref{equn:y1-x2-rs-d-1}), $X_1X_2 \indep Y_1 |_{R \Join S} D = X_1X_2 \indep Y_1 |_{R \Join S} Z_1$.
\par
(i-b) Otherwise, suppose $D \neq Z_1$, and $Z_1 = DW_1$, where $W_1 \subseteq R$. 
Since $X_1X_2 \indep Y_1 |_{G} DW_1$, we have $X_1 \indep Y_1 |_{G} DW_1$, by Observation~\ref{obs:subset-indep},  $X_1 \indep Y_1 |_{G_1} DW_1$, since $G_1$ is a P-map of $R$, $X_1 \indep Y_1 |_{R} DW_1$, by Theorem~\ref{thm:ci-join}, 
\begin{equation}\label{equn:x1-y1-rs-z-2}
X_1 \indep Y_1 |_{R \Join S} DW_1
\end{equation}
Further, by Corollary~\ref{cor:cl-join-0}, $Y_1W_1 \indep X_2 |_{R \Join S} D$, and by weak union (\ref{equn:wu}), 
\begin{equation}\label{equn:y1-x2-rs-d-2}
Y_1 \indep X_2 |_{R \Join S} DW_1
\end{equation}
By Corollary~\ref{cor:cl-join-0},
$X_1 Y_1 W_1 \indep X_2 |_{R \Join S} D$, by weak union (\ref{equn:wu}), $X_1 \indep X_2 |_{R \Join S} (DW_1)Y_1$. Using (\ref{equn:x1-y1-rs-z-2}) and contraction (\ref{equn:con}),  $X_1 \indep Y_1X_2 |_{R \Join S} DW_1$. By weak union (\ref{equn:wu}), $X_1 \indep Y_1 |_{R \Join S} (DW_1)X_2$. By contraction (\ref{equn:con}) and (\ref{equn:y1-x2-rs-d-2}), $X_1X_2 \indep Y_1 |_{R \Join S} DW_1 = X_1X_2 \indep Y_1 |_{R \Join S} Z_1$.
\par
\textbf{(ii) (ii-a) $D \in X_1, D \notin X_2, Y_1, Z_1$}: If $D = X_1$, \ie, if $DX_2 \indep Y_1 |_{G} Z_1$, then $D \indep Y_1|_{G_1} Z_1$ (Observation~\ref{obs:subset-indep}), since $G_1$ is a P-map $D \indep Y_1|_{R} Z_1$ and by Theorem~\ref{thm:pmap-ci-propagates}, 
\begin{equation}\label{equn:y1-d-z1-1}
Y_1 \indep D|_{R \Join S} Z_1
\end{equation}
By Corollary~\ref{cor:cl-join-0}, $Y_1Z_1 \indep X_2 |_{R \Join S} D$, and by weak union (\ref{equn:wu}),
\begin{equation}\label{equn:y1-x2-dz1-1}
Y_1 \indep X_2|_{R \Join S} DZ_1
\end{equation}
Combining (\ref{equn:y1-d-z1-1}) and (\ref{equn:y1-x2-dz1-1}) by contraction (\ref{equn:con}), we have $Y_1 \indep DX_2|_{R \Join S} Z_1 \equiv Y_1 \indep X_1X_2|_{R \Join S} Z_1$.
\par
(ii-b)
Otherwise, $X_1 = DW_1$, where $W_1 \subseteq R$. Then $DW_1X_2 \indep Y_1 |_{G} Z_1$, then $DW_1 \indep Y_1|_{G_1} Z_1$ (Observation~\ref{obs:subset-indep}), since $G_1$ is a P-map $DW_1 \indep Y_1|_{R} Z_1$ and by Theorem~\ref{thm:pmap-ci-propagates}, 
\begin{equation}\label{equn:y1-d-z1-2}
Y_1 \indep DW_1|_{R \Join S} Z_1
\end{equation}
By Corollary~\ref{cor:cl-join-0}, $Y_1Z_1W_1 \indep X_2 |_{R \Join S} D$, and by weak union (\ref{equn:wu}),
\begin{equation}\label{equn:y1-x2-dz1-2}
Y_1 \indep X_2|_{R \Join S} (DW_1)Z_1
\end{equation}
Combining (\ref{equn:y1-d-z1-2}) and (\ref{equn:y1-x2-dz1-2}) by contraction (\ref{equn:con}), we have $Y_1 \indep DW_1X_2|_{R \Join S} Z_1 \equiv Y_1 \indep X_1X_2|_{R \Join S} Z_1$.
\par
\textbf{(iii) $D \in X_2, D \notin X_2, Y_1, Z_1$}: (iii-a)If $D = X_2$, \ie, if $DX_1 \indep Y_1 |_{G} Z_1$, then $DX_1 \indep Y_1|_{G_1} Z_1$ (Observation~\ref{obs:subset-indep}), since $G_1$ is a P-map $DX_1 \indep Y_1|_{R} Z_1$ and by Theorem~\ref{thm:pmap-ci-propagates}, 
\begin{equation*}
DX_1 \indep Y_1|_{R \Join S} Z_1 ~\equiv~ X_1X_2 \indep Y_1|_{R \Join S} Z_1
\end{equation*}
\par
(iii-b)
Otherwise, $X_2 = DW_2$, where $W_2 \subseteq S$. Then $DW_2X_1 \indep Y_1 |_{G} Z_1$, then $DX_1 \indep Y_1|_{G_1} Z_1$ (Observation~\ref{obs:subset-indep}), since $G_1$ is a P-map $DX_1 \indep Y_1|_{R} Z_1$ and by Theorem~\ref{thm:pmap-ci-propagates}, 
\begin{equation}\label{equn:y1-d-z1-3}
Y_1 \indep DX_1|_{R \Join S} Z_1
\end{equation}
By Corollary~\ref{cor:cl-join-0}, $Y_1Z_1X_1 \indep W_2 |_{R \Join S} D$, and by weak union (\ref{equn:wu}),
\begin{equation}\label{equn:y1-x2-dz1-3}
Y_1 \indep W_2|_{R \Join S} (DX_1)Z_1
\end{equation}
Combining (\ref{equn:y1-d-z1-3}) and (\ref{equn:y1-x2-dz1-3}) by contraction (\ref{equn:con}), we have $Y_1 \indep (DX_1)W_2|_{R \Join S} Z_1 \equiv Y_1 \indep X_1X_2|_{R \Join S} Z_1$.
\par
\textbf{(iv) $D \in Y_1, D \notin X_1, X_2, Z_1$}: (iv-a) Suppose $Y_1 = D$, \ie, $X_1X_2 \indep D|_{G}Z_1$. Since $D \notin Z_1$, removing $Z_1$ cannot disconnect $X_2$ with $D$ since $G_2$ is connected.
\par
(iv-b) Otherwise, $Y_1 = DW_1$,  \ie, $X_1X_2 \indep DW_1 |_{G}Z_1$. Therefore, $X_1 \indep DW_1 |_{G_1}Z_1$
and by Theorem~\ref{thm:pmap-ci-propagates}
\begin{equation}\label{equn:iv-1}
X_1 \indep DW_1 |_{R \Join S}Z_1
\end{equation}
Also $X_2 \indep DW_1 |_{G} Z_1$. By Lemma~\ref{lem:two-one-prop},
\begin{equation}\label{equn:iv-2}
X_2 \indep DW_1 |_{R \Join S}Z_1
\end{equation}
  $X_1W_1 Z_1 \indep X_2 |_{R \Join S} D$ $\Rightarrow X_1 \indep X_2 |_{R \Join S} DW_1Z_1$ (weak union).
  By contraction and (\ref{equn:iv-1}), $X_1 \indep DW_1X_2 |_{R \Join S} DW_1Z_1$. By weak union, $X_1 \indep DW_1 |_{R \Join S} X_2Z_1$.  By contraction and (\ref{equn:iv-2}), $X_1X_2 \indep DW_1 |_{R \Join S} Z_1$.
  \par
\textbf{(v) $D \notin X_1, X_2, Y_1$}:
Since $X_1X_2 \indep Y_1 |_{G} Z_1$, then by definition of independence in a graph, $X_1 \indep Y_1 |_{G} Z_1$, and since $G_1$ is a subgraph of $G$, $X_1 \indep Y_1 |_{G_1} Z_1$. Since $G_1$ is a P-map of $R$, $X_1 \indep Y_1 |_{R} Z_1$, by the strong union property of Theorem~\ref{thm:PP-undirected} $X_1 \indep Y_1 |_{R} DZ_1$, and by Theorem~\ref{thm:pmap-ci-propagates}, 
\begin{equation}\label{equn:x1-y1-dz1-rs}
X_1 \indep Y_1 |_{R \Join S} DZ_1
\end{equation}
Also since $X_1X_2 \indep Y_1 |_{G} Z_1$, by decomposition property of graphoid axioms (\ref{equn:dec}), $X_2 \indep Y_1 |_{G} Z_1$. We claim that $Y_1 \indep DX_2|_{G}Z_1$. Suppose not. Since $X_2 \indep Y_1 |_{G} Z_1$, there is a path $p_1$ between $D$ and $Y_1$ that does not go through $Z_1$. Since $D$ and $X_2$ are connected by at least one path $p_2$ in $G_2$ (which does not go through $Z_1$ since $Z_1 \subseteq R$), by combining $p_1$ and $p_2$ (and simplifying to get a simple path), we get a path from $Y$ to $X_2$ that does not go through $Z_1$, contradicting the assumption that $X_1X_2 \indep Y_1 |_{G} Z_1$. Hence $Y_1 \indep DX_2|_{G}Z_1$, and since $Y_1, Z_1$ is in $R$ and $DX_2$ is in $S$, by Lemma~\ref{lem:two-one-prop},
\begin{equation}\label{equn:y1-dx2-z1-rs}
Y_1 \indep DX_2|_{R \Join S} Z_1
\end{equation}
\par
Next note that $X_1Y_1 \indep X_2 |_{G} DZ_1$, since $D$ itself is a cutset between $X_1Y_1$ and $X_2$ (Observation~\ref{obs:join-attr-cutset}). Since $X_1Y_1 \subseteq R$, $X_2 \subseteq S$,  and $DZ_1 \subseteq R$, by Lemma~\ref{lem:two-one-prop}, 
\begin{eqnarray*}
&& X_1Y_1 \indep X_2 |_{R\Join S} DZ_1\\
& \Rightarrow & X_1 \indep X_2 |_{R\Join S} (DZ_1)(Y_1)~~~~\textrm{(weak union (\ref{equn:wu}))}\\
& \Rightarrow & X_1 \indep X_2Y_1 |_{R\Join S} (DZ_1)~~\textrm{((\ref{equn:x1-y1-dz1-rs}) and contraction (\ref{equn:con}))}\\
& \Rightarrow & X_1 \indep Y_1 |_{R\Join S} Z_1 (DX_2)~~~~\textrm{(weak union (\ref{equn:wu}))}\\
& \Rightarrow & X_1(DX_2) \indep Y_1 |_{R\Join S} Z_1~~\textrm{((\ref{equn:y1-dx2-z1-rs}) and contraction (\ref{equn:con}))}\\
& \Rightarrow & X_1X_2 \indep Y_1 |_{R\Join S} Z_1~~~~\textrm{(decomposition (\ref{equn:dec}))}
\end{eqnarray*}
\end{proof}

\subsection{Proof of Lemma~\ref{lem:y1-z2-empty}}
\begin{proof}[of Lemma~\ref{lem:y1-z2-empty}]
Since $X_1X_2 \indep Y_2 |_{G} Z_1$, no path exists between any vertices in $X_1, X_2$ and $Y_2$ in $G_{-Z_1}$.
First we argue that the join attribute $D \in Z_1$. 
Suppose not. In $G_{-Z_1}$, no path exists between $X_2$ and $Y_2$ in $G$.  Since $D \notin Z_1$, removing $Z_1$ does not remove any edge in $G_2$, implying that  no path exists between $X_2$ and $Y_2$ in $G_2$, which contradicts the assumption that $G_2$ is connected. 
\par
Hence $D \in Z_1$. Assume $Z_1 = DW_1$ where $W_1 \subseteq R$. 
Since $X_1X_2 \indep Y_2 |_{G} Z_1$, we have 
\begin{equation}\label{equn:y2-x2-2}
X_2 \indep Y_2 |_{G} DW_1
\end{equation}
Note that 
$W_1$ belongs to $R$ or $G_1$, and does not have any common vertex in $G_2$. In other words, removing $W_1$ cannot affect the connectivity between $X_2, Y_2$. Therefore,  it holds that 
\begin{equation}\label{equn:y2-x2-3}
Y_2 \indep X_2 |_{G} D
\end{equation}
 \ie, removing $D$ disconnects $X_2, Y_2$ in $G$. Since $G_2$ is a subset of $G$,  $Y_2 \indep X_2 |_{G_2} D$, and since $G_2$ is a P-map of $S$,  $Y_2 \indep X_2 |_{S} D$, and by Theorem~\ref{thm:pmap-ci-propagates}, 
\begin{equation}\label{equn:y2-x2-d-rs}
Y_2 \indep X_2 |_{R \Join S} D
\end{equation}

By Corollary~\ref{cor:cl-join-0}, 
\begin{eqnarray*}
&&X_1W_1 \indep Y_2X_2 |_{R \Join S} D\\
& \Rightarrow & X_1W_1 \indep Y_2 |_{R \Join S} DX_2~~~~\textrm{(weak union (\ref{equn:wu}))}\\
& \Rightarrow & X_1X_2W_1 \indep Y_2 |_{R \Join S} D~~\textrm{((\ref{equn:y2-x2-d-rs}) and contraction (\ref{equn:con}))}\\
& \Rightarrow & X_1X_2 \indep Y_2 |_{R \Join S} DW_1~~~~\textrm{(weak union (\ref{equn:wu}))}\\
& \equiv & X_1X_2 \indep Y_2 |_{R \Join S} Z_1
\end{eqnarray*}
\end{proof}

\subsection{Proof of Lemma~\ref{lem:z2-empty}}
The proof is similar to Lemma~\ref{lem:y1-z2-empty} but uses Lemma~\ref{lem:y2-z2-empty}.

\begin{proof}[of Lemma~\ref{lem:z2-empty}]
First we argue that the join attribute $D \in Z_1$. 
Suppose not. Then in $G_{-Z_1}$, no path exists between $X_2$ and $Y_2$ in $G$.  Since $D \notin Z_1$, removing $Z_1$ does not remove any edge in $G_2$, implying that  no path exists between $X_2$ and $Y_2$ in $G_2$, which contradicts the assumption that $G_2$ is connected. 
\par
Hence $D \in Z_1$. Assume $Z_1 = DW_1$ where $W_1 \subseteq R$. Since $X_1 \indep Y_2 |_{G} Z_1 ~\equiv~ X_1 \indep Y_2 |_{G} DW_1$, and $W_1$ belongs to $R$ or $G_1$, it holds that $Y_2 \indep X_2 |_{G} D$. Since $G_2$ is a subset of $G$,  $Y_2 \indep X_2 |_{G_2} D$, and since $G_2$ is a P-map of $S$,  $Y_2 \indep X_2 |_{S} D$, and by Theorem~\ref{thm:pmap-ci-propagates}, 
\begin{equation}\label{equn:y2-x2-d-rs-2}
Y_2 \indep X_2 |_{R \Join S} D
\end{equation}
Since $X_1X_2 \indep Y_1Y_2 |_{G} Z_1$, by definition,   $X_1X_2 \indep Y_1 |_{G} Z_1$, and by Lemma~\ref{lem:y2-z2-empty}, 
\begin{equation}\label{equn:x1x2-y1-z1-rs}
X_1X_2 \indep Y_1 |_{R \Join S} Z_1 
\end{equation}
By Corollary~\ref{cor:cl-join-0}, 
\begin{eqnarray*}
&&X_1Y_1W_1 \indep Y_2X_2 |_{R \Join S} D\\
& \Rightarrow & X_1Y_1W_1 \indep Y_2 |_{R \Join S} DX_2~~~~\textrm{(weak union (\ref{equn:wu}))}\\
& \Rightarrow & X_1X_2Y_1W_1 \indep Y_2 |_{R \Join S} D~~\textrm{((\ref{equn:y2-x2-d-rs-2}) and contraction (\ref{equn:con}))}\\
& \Rightarrow & X_1X_2 \indep Y_2 |_{R \Join S} DW_1Y_1~~~~\textrm{(weak union (\ref{equn:wu}))}\\
& \Rightarrow & X_1X_2 \indep Y_2Y_1 |_{R \Join S} DW_1~~\textrm{((\ref{equn:x1x2-y1-z1-rs}) and contraction (\ref{equn:con}))}\\
& \equiv & X_1X_2 \indep Y_2Y_1 |_{R \Join S} Z_1
\end{eqnarray*}
\end{proof}

\subsection{Proof of Lemma~\ref{lem:x2-y2-z1-empty}}
\begin{proof}[of Lemma~\ref{lem:x2-y2-z1-empty}]
Given $X_1 \indep Y_1 |_{G} Z_2$. We claim that $D \in Z_2$, otherwise, since $G_1$ is connected, removing vertices from $G_2$ cannot disconnect $X_1, Y_1$ in $G_1$. 
\par
We assume $Z_2 = DW_2$: if $Z_2 = D$, then $X_1 \indep Y_1 |_{G_1} D \equiv X_1 \indep Y_1 |_{R} D$ ($G_1$ is a P-map of $R$), and therefore by Theorem~\ref{thm:pmap-ci-propagates}:
\begin{equation}\label{equn:x1-y1-d-4}
X_1 \indep Y_1 |_{R \Join S} D 
\end{equation}
\par
By Corollary~\ref{cor:cl-join-0}, $X_1Y_1 \indep W_2 |_{R \Join S} D$. By weak union, $X_1 \indep W_2 |_{R \Join S} DY_1$. Combining with (\ref{equn:x1-y1-d-4}) and using contraction (\ref{equn:con}), $X_1 \indep Y_1W_2 |_{R \Join S} D$. By weak union, $X_1 \indep Y_1 |_{R \Join S} DW_2 \equiv X_1 \indep Y_1 |_{R \Join S} Z_2$.
\end{proof}

\subsection{Proof of Lemma~\ref{lem:z1-z2-non-empty}}
To prove this lemma, we will need additional lemmas:

\begin{lemma}\label{lem:dxz1z2}
\begin{itemize}
\item[(A)] If $D \indep X_1 |_{G} Z_1Z_2$, , then $D \indep X_1 |_{R\Join S} Z_1Z_2$.
\item[(B)] If $D \indep X_1X_2 |_{G} Z_1Z_2$, , then $D \indep X_1 |_{R\Join S} Z_1Z_2$.
\end{itemize}
\end{lemma}
\begin{proof}
\textbf{(A)}
$D \indep X_1 |_{G} Z_1Z_2$ $\Rightarrow D \indep X_1 |_{G_1} Z_1$, and since $G_1$ is a P-map of $R$ and by Theorem~\ref{thm:pmap-ci-propagates}, 
\begin{equation}\label{equn:DX-1}
D \indep X_1|_{R \Join S} Z_1
\end{equation}
By Corollary~\ref{cor:cl-join-0}), $X_1 Z_1 \indep Z_2 |_{R \Join S} D$. By weak union (\ref{equn:wu}), $X_1 \indep Z_2 |_{R \Join S} Z_1D$. Combining with \ref{equn:DX-1} by contraction (\ref{equn:con}),  $X_1 \indep DZ_2 |_{R \Join S} Z_1$. By weak union again, $X_1 \indep D |_{R \Join S} Z_1Z_2$.
\par
\textbf{(B)}
$D \indep X_1X_2 |_{G} Z_1Z_2$ $\Rightarrow D \indep X_1|_{G} Z_1$, By Observation~\ref{obs:cutset-sup}, $D \indep X_1|_{G} Z_1X_2$. By (A) above, 
\begin{equation}\label{equn:DX-2}
D \indep X_1 |_{R \Join S} Z_1X_2
\end{equation}
Similarly, $D \indep X_2|_{G} Z_2$ and 
\begin{equation}\label{equn:DX-3}
D \indep X_2 |_{R \Join S} Z_1Z_2
\end{equation}
By Corollary~\ref{cor:cl-join-0}), $X_1Z_1 \indep X_2Z_2 |_{R \Join S} D$. By weak union (\ref{equn:wu}), $X_1 \indep Z_2 |_{R \Join S} Z_1DX_2$. Combining with \ref{equn:DX-2} by contraction (\ref{equn:con}),  $X_1 \indep DZ_2 |_{R \Join S} Z_1X_2$. By contraction and (\ref{equn:DX-3}), $X_1X_2 \indep D |_{R \Join S} Z_1Z_2$.
\end{proof}

\begin{lemma}\label{lem:dx1-y1-y2}
Suppose the join attribute $D \notin X_1, Y_1, Y_2, Z_1, Z_2$.
\begin{itemize}
\item[(A)] If $DX_1 \indep Y_1 |_{G} Z_1Z_2$, then $DX_1 \indep Y_1 |_{R\Join S} Z_1Z_2$.
\item[(B)] If $DX_1 \indep Y_2 |_{G} Z_1Z_2$, then $DX_1 \indep Y_2 |_{R\Join S} Z_1Z_2$.
\end{itemize}
\end{lemma}
\begin{proof}
\textbf{(A)} Since $DX_1 \indep Y_1 |_{G} Z_1Z_2$, $D \indep Y_1 |_{G} Z_1Z_2$, and by Lemma~\ref{lem:dxz1z2}, 
\begin{equation}\label{equn:PP-1}
D \indep Y_1 |_{G} Z_1Z_2
\end{equation}
Also, $X_1 \indep Y_1 |_{G} Z_1Z_2$. Hence $X_1 \indep Y_1 |_{G} Z_1Z_2D$ (Observation~\ref{obs:cutset-sup}), and by Lemma~\ref{lem:y2-z2-empty},
\begin{equation}\label{equn:PP-2}
X_1 \indep Y_1 |_{R \Join S} Z_1Z_2D
\end{equation}
Combining (\ref{equn:PP-1}) and (\ref{equn:PP-2})  by contraction, $DX_1 \indep Y_1 |_{R \Join S} Z_1Z_2$.
\par
\textbf{(B)} Since $DX_1 \indep Y_2 |_{G} Z_1Z_2$, $D \indep Y_2 |_{G} Z_1Z_2$, and by Lemma~\ref{lem:dxz1z2}, 
\begin{equation}\label{equn:QQ-1}
D \indep Y_2 |_{G} Z_1Z_2
\end{equation}
Also, $X_1 \indep Y_2 |_{G} Z_1Z_2$. Hence $X_1 \indep Y_2 |_{G} Z_1Z_2D$ (Observation~\ref{obs:cutset-sup}), and by Lemma~\ref{lem:y1-z2-empty},
\begin{equation}\label{equn:QQ-2}
X_1 \indep Y_2 |_{R \Join S} Z_1Z_2D
\end{equation}
Combining (\ref{equn:QQ-1}) and (\ref{equn:QQ-1})  by contraction, $DX_1 \indep Y_2 |_{R \Join S} Z_1Z_2$.
\end{proof}

\begin{lemma}\label{lem:d-z1-z2}
Suppose the join attribute $D \notin X_1, Y_1, Y_2, Z_1, Z_2$.
\begin{itemize}
\item[(A)] If $X_1\indep Y_1 |_{G} DZ_1Z_2$, then $X_1\indep Y_1 |_{R \Join S} DZ_1Z_2$.
\item[(B)] If  $X_1\indep Y_2 |_{G} DZ_1Z_2$, then $X_1\indep Y_1 |_{R \Join S} DZ_1Z_2$.
\end{itemize}
\end{lemma}
\begin{proof}
\textbf{(A)} If $X_1\indep Y_1 |_{G} DZ_1Z_2$, then $X_1\indep Y_1 |_{G} DZ_1$. Also $X_1\indep Z_2 |_{G} DZ_1$. Hence $X_1\indep Y_1Z_2 |_{G} DZ_1$. 
By Lemma~\ref{lem:y2-z2-empty}, $X_1\indep Y_1Z_2 |_{R \Join S} DZ_1$. By weak union, $X_1\indep Y_1 |_{G} DZ_1Z_2$.
\par
\textbf{(B)} If $X_1\indep Y_2 |_{G} DZ_1Z_2$, then $X_1\indep Y_2Z_2 |_{G} DZ_1$ ($D$ itself disconnects $X_1$ from $Z_2$).  By Lemma~\ref{lem:two-one-prop}, $X_1\indep Y_2Z_2 |_{R \Join S} DZ_1$. By weak union, $X_1\indep Y_2 |_{G} DZ_1Z_2$.
\end{proof}

Now we prove Lemma~\ref{lem:z1-z2-non-empty}. There are four non-equivalent cases as stated in the  lemma.

\begin{proof}[of Lemma~\ref{lem:z1-z2-non-empty}]
\textbf{(A)} If $D \in Z_1Z_2$, it follows from Lemma~\ref{lem:d-z1-z2}. If $D \in X_1$ or $Y_1$, it follows from Lemma~\ref{lem:dxz1z2} and \ref{lem:dx1-y1-y2}. Hence we assume $D \notin Z_1Z_2, X_1, Y_1$.  $X_1 \indep Y_1 |_{G} Z_1Z_2$ $\Rightarrow$ $X_1 \indep Y_1 |_{G} Z_1$. This is because of the fact that $D \notin Z_1, Z_2$, $Z_2 \subseteq S$, and no path between $X_1, Y_1$ in $G$ can go through $Z_2$. In turn, $X_1 \indep Y_1 |_{G_1} Z_1$ (Observation~\ref{obs:subset-indep}). Hence $X_1 \indep Y_1 |_{G_1} DZ_1$ (Observation~\ref{obs:cutset-sup}), since $G_1$ is a P-map of $R$, by Theorem~\ref{thm:pmap-ci-propagates},
\begin{equation}\label{equn:A-1}
X_1 \indep Y_1 |_{R \Join S} DZ_1
\end{equation} 
By Corollary~\ref{cor:cl-join-0}), $X_1Y_1Z_1 \indep Z_2 |_{R \Join S} D$. By weak union (\ref{equn:wu}), $X_1 \indep Z_2 |_{R \Join S} DY_1Z_1$. Combining with (\ref{equn:A-1}) by contraction (\ref{equn:con}), $X_1 \indep Y_1Z_2 |_{R \Join S} DZ_1$. By weak union, 
\begin{equation}\label{equn:A-2}
X_1 \indep Y_1 |_{R \Join S} DZ_1Z_2
\end{equation} 
We claim that either $X_1 \indep D |_{G} Z_1Z_2$ or $Y_1 \indep D |_{G} Z_1Z_2$. Indeed, if both fail, then there is a path from $X_1$ to $Y_1$ in $G_{-Z_1Z_2}$ violating the assumption that $X_1 \indep Y_1 |_{G} Z_1Z_2$. 
\par
If $X_1 \indep D |_{G} Z_1Z_2$,by Lemma~\ref{lem:dxz1z2},  $\Rightarrow  X_1 \indep D |_{R \Join S} Z_1Z_2$. Combining with (\ref{equn:A-2}) by contraction,   $X_1 \indep DY_1 |_{R \Join S} Z_1Z_2$, and by decomposition, $X_1 \indep Y_1 |_{R \Join S} Z_1Z_2$. 
\par
If $Y_1 \indep D |_{G} Z_1Z_2$, by similar argument, $DX_1 \indep Y_1 |_{R \Join S} Z_1Z_2$, and by decomposition, $X_1 \indep Y_1 |_{R \Join S} Z_1Z_2$. 
\par
\textbf{(B)} 
 If $D \in Z_1Z_2$, it follows from Lemma~\ref{lem:d-z1-z2}. If $D \in X_1$ or $Y_1$, it follows from Lemma~\ref{lem:dxz1z2} and \ref{lem:dx1-y1-y2}.
Hence assume $D \notin Z_1Z_2, X_1, Y_2$.  If $X_1 \indep Y_2 |_{G} Z_1Z_2$, either $X_1 \indep D |_{G} Z_1$ or  $Y_2 \indep D |_{G} Z_2$, otherwise, a path exists between $X_1$ and $Y_2$ through $D$ that is not in $Z_1Z_2$. 
\par
Without loss of generality, assume $X_1 \indep D |_{G} Z_1$. Then $X_1 \indep D |_{G} Z_1Z_2$ (Observation~\ref{obs:cutset-sup}), and by Lemma~\ref{lem:dxz1z2}, 
\begin{equation}\label{equn:B-1}
X_1 \indep D |_{R \Join S} Z_1Z_2
\end{equation} 
\par
By Corollary~\ref{cor:cl-join-0}), 
$X_1Z_1 \indep Y_2Z_2|_{R \Join S}D$ $\Rightarrow X_1 \indep Y_2 |_{R \Join S}DZ_1Z_2$ (weak union). Combining with (\ref{equn:B-1}) by contraction,  $\Rightarrow X_1 \indep Y_2D|_{R \Join S}Z_1Z_2$, and by decomposition, $X_1 \indep Y_2|_{R \Join S}Z_1Z_2$.
\par
\textbf{(C)} 
Since $X_1X_2\indep Y_1 |_{G} Z_1Z_2$, $X_1 \indep Y_1 |_{G} Z_1Z_2$, and by case (A) above, 
\begin{equation}\label{equn:CC-1}
X_1 \indep Y_1 |_{R \Join S} Z_1Z_2
\end{equation}
Also  $X_2 \indep Y_1 |_{G} Z_1Z_2$, therefore, $X_2 \indep Y_1 |_{G} Z_1Z_2X_1$ (Observation~\ref{obs:cutset-sup}), and by case (B) above,
\begin{equation}\label{equn:CC-2}
X_2 \indep Y_1 |_{R \Join S} (Z_1Z_2)X_1
\end{equation}
Applying contraction (\ref{equn:con}) on (\ref{equn:CC-1}) and (\ref{equn:CC-2}), $X_1X_2 \indep Y_1 |_{R \Join S} Z_1Z_2$.
\par
\textbf{(D)} 
Since $X_1X_2\indep Y_1Y_2 |_{G} Z_1Z_2$, $X_1X_2 \indep Y_1 |_{G} Z_1Z_2$, and by case (C) above, 
\begin{equation}\label{equn:DD-1}
X_1X_2 \indep Y_1 |_{R \Join S} Z_1Z_2
\end{equation}
Also  $X_1X_2 \indep Y_2 |_{G} Z_1Z_2$, therefore, $X_1X_2 \indep Y_2|_{G} Z_1Z_2Y_1$ (Observation~\ref{obs:cutset-sup}), and by case (B) above,
\begin{equation}\label{equn:DD-2}
X_1X_2 \indep Y_2 |_{R \Join S} (Z_1Z_2)X_1
\end{equation}
Applying contraction (\ref{equn:con}) on (\ref{equn:DD-1}) and (\ref{equn:DD-2}), $X_1X_2 \indep Y_1Y_2 |_{R \Join S} Z_1Z_2$.

\cut{
}
\par
(D) Consider the remaining case that $D \notin Z_1Z_2, X_1, X_2, Y_1, Y_2$.  If $X_1X_2 \indep Y_1Y_2 |_{G} Z_1Z_2$, either (i) $X_1 \indep D |_{G} Z_1$, or  (ii) $Y_1 \indep D |_{G} Z_2$ and $Y_2 \indep D |_{G} Z_2$, otherwise, a path exists between $X_1$ and either $Y_1$ or $Y_2$ through $D$ that is not in $Z_1Z_2$. 
\end{proof}

\subsection{Proof of Theorem~\ref{thm:union-imap}}

\begin{proof}[of Theorem~\ref{thm:union-imap}]
Suppose $X = X_1X_2$, $Y = Y_1Y_2$, $Z  = Z_1Z2$. Lemma~\ref{lem:z1-z2-non-empty} covers the (non-equivalent) cases when both $Z_1, Z_2$ are non-empty.  When $Z_2 = \emptyset$ (equivalently $Z_1$): (i) when both $X, Y$ contain both subsets, the result follows from Lemma~\ref{lem:z2-empty}; (ii) when one of $X,Y$ contain both subsets and the other contain one, the results follows from Lemmas~\ref{lem:y1-z2-empty} and \ref{lem:y2-z2-empty}; (iii) when both $X, Y$ contain one subset each, the result follows from Lemma~\ref{lem:two-one-prop} or Theorem~\ref{thm:pmap-ci-propagates} (in this case, the CI holds in the base relation since $G_1$ is a P-map, and therefore propagates to the joined relation). 
%
\end{proof}

\section{Proofs from Section~6}

\subsection{Example~\ref{eg:no-dmap}: New CIs in the Joined Relation}
\begin{example}\label{eg:no-dmap}
Consider relations $R(A, B, C, D, E)$ and $S(D, F)$. 
The relation instance contains the tuples $(a_1, b_1, c, d_1, -)$, $(a_1, b_2, c, d_2, -)$, $(a_2, b_1, c, d_3, -)$, $(a_2, b_2, c, d_4, -)$ (with unique values of $E$ as `-') respectively $2, 3, 1, 1$ times. 
For $C = c$, $\Pr_R[A = a_1, B = b_1 | C = c] = \frac{2}{7}$, whereas $\Pr_R[A = a_1 | C = c] = \frac{5}{7}$ and $\Pr_R[A = b_1 | C = c] = \frac{3}{7}$, so $\neg (A \indep B |_R C)$. Now suppose $S$ contains 3, 2, 1, 1 tuples respectively of the form $(d_1, -)$, $(d_2, -)$, $(d_3, -)$, $(d_4, -)$, and therefore using the new frequencies in $J = R \Join S$,
$\Pr_J[A = a_1, B = b_1 | C = c] = \frac{6}{14}$, whereas $\Pr_R[A = a_1 | C = c] = \frac{12}{14}$ and $\Pr_R[A = b_1 | C = c] = \frac{7}{14}$, thus satisfying the CI $A \indep B |_{R \Join S} C$.
\end{example}

\section{Additional Related Work}\label{sec:related}
\textbf{Recent work in the intersection of causality and databases.~} While causality has been used as a motivation to explain interesting  observations in the area of databases \cite{MeliouGMS2011, RoyS14}, actual causal inference as done in statistical studies using techniques from databases has drawn attention only very recently
\cite{FLAME2017, salimi2017zaliql}. In \cite{FLAME2017}, Roy et al. studied efficient matching methods that include a large number of covariates, while ensuring that each group contains at least one treated and one control units, and such that the selected covariates predict the outcome well. Since the goals conflict with each other, the proposed technique aim to match as many units as possible using as many covariates as possible, then drops the `less useful' covariates to match more units in the next round. 
In \cite{salimi2017zaliql}, Salimi and Suciu proposed a framework for supporting various existing causal inference techniques efficiently inside a database-based engine, for both online and offline settings. 
The problem of matching units with the same values of covariates has a strong connection with the \emph{group-by} operator used in SQL queries, therefore, both \cite{FLAME2017, salimi2017zaliql} use database queries to efficiently implement the matching algorithms using standard relational database management systems. 
However, both \cite{FLAME2017, salimi2017zaliql} consider the problem of efficient causal inference on a single relation, whereas the main focus of the framework proposed in the current paper is extending causal inference to multiple relations. 

\textbf{Matching for observational studies.~} 
Matching has been studied since 1940s for observational studies \cite{chapin1947experimental,greenwood1945experimental}. 
In the 1970s and 1980s, a large literature on different dimension reduction approaches to matching was developed 
(\eg, \cite{rubin1973matching, rubin1976multivariate, cochran1973controlling}). One of the most famous approaches that is still 
prevalent in current research
is matching using \emph{propensity score}, the distribution of treatment assignment conditional on background covariates. Rosenbaum and Rubin  \cite{RosenbaumRubin1983} demonstrated that under certain assumptions the propensity score is a \emph{balancing score}  (within each matched group, treated and control units are independent) that allowed for the unbiased estimation of average causal effects, 
and also is the most coarsened balancing score, thereby allowing valid matching of many units. 
Beyond naive matching, the propensity score has been integrated into \emph{subclassification} \cite{rubin2006matched, rosenbaum1984reducing} and \emph{multivariate matching schemes} \cite{rosenbaum1985constructing} for observational data. 
Relatively recent \emph{coarsened exact matching} avoids fitting complicated propensity score models by coarsening or discretizing covariates in such a way that the newly constructed covariates allow for exact matching \cite{iacus2011causal}. 
A general overview of  matching techniques can be found in \cite{stuart2010matching}.

\textbf{Causality and causal graphs in artificial intelligence.~}
The study of causality in the artificial intelligence community is based on the notion of counterfactuals,
\cut{
which can be traced back to
Hume~\cite{hume1748, Menzies:Causation2008}, 
whereas the best known counterfactual analysis
of causation is due to Lewis~\cite{Lewis1973}. The 
}
where the basic idea is that if the first event (cause) had not occurred, then the second
event (effect) would not have occurred (in contrast to the quantitative measure of ATE in the causal analysis in statistics).
The work by Pearl and others~\cite{PearlBook2000} 
refined the generally accepted aspects of causality into a rigorous
definition using structural equations, which can also be viewed as \emph{causal networks}.
In causal networks, the causal effects and
counterfactuals are modeled using a mathematical operator
$do(X=x)$ to simulate the effect of an intervention  (called \emph{do-calculus}). 
\cut{
Specifically, the distribution $P(Y=y |do(T=1))$
describes the outcome of the population from a hypothetical experiment
in which we administer treatment uniformly to the entire
population. Then, the average treatment effect of $T$ on $Y$ is
measured with $E(Y |do(T=1))- E(Y |do(T=0))$. Any expression in the
$do$ calculus can be written in terms of the potential outcome notion
\cite{Richardson}.  For example, $P(Y=y |do(T=1))$ denotes the same
distribution as $P(Y(1)=y)$. 
}
Pearl proposed a sufficient condition
called {\em back-door criterion} to identify the distribution 
$P(Y=y |do(T=1))$ in order to estimate the distribution on the causal effect of the treatment \cite{PearlBook2000}. 



\cut{
the study of causality in databases was motivated by
the need to find \emph{reasons} for surprising
observations~\cite{DBLP:conf/sigmod/ChapmanJ09}, or simply to trace
observations on the outputs back to the inputs~\cite{MeliouGNS2011}. In a
database context, they would like to find the causes of answers or
non-answers to their queries. For example, ``What caused my
personalized newscast to have more than 50 items today?'' Or, ``What
caused my favorite undergrad student to not appear on the Dean's list
this year?''

Halpern and Pearl's work on actual causes~\cite{HalpernP01,
HalpernPearl:Cause2005, PearlBook2000} and its extension in
databases~\cite{DEBulletin2010, MUD2010, MeliouGNS2011, MeliouGMS2011} rely on
a framework of structural equations that describes the causal structure of a
system. In the database domain, this causal structure may be defined using the
\emph{lineage} or \emph{provenance} of a query answer~\cite{DBLP:journals/tods/CuiWW00, GKT07-semirings}, or
integrity constraints~\cite{explanation-sigmod2014}. 
}
\cut{
A different direction of
database research has focused on deriving causal relationships directly from
the data~\cite{Silverstein+2000,RattiganMJ11,MaierTOJ10}. 

The above set of axioms has also been conjectured to be \emph{complete} by Pearl and Paz  \cite{PearlPaz86} ($I$ satisfies the axioms if and only if there is a probabilistic model $P$ such that $X \indep Y|_P Z \Leftrightarrow I(X, Z, Y)$) when $I$ is a conditional independence relation. Although this conjecture is not proved yet, all known properties of conditional independences are derivable from the graphoid axioms, and completeness results for special cases have been derived \cite{DBLP:journals/amai/GeigerP90}.
}

\textbf{Other work on causality in databases.~} 
Motivated by the notion of causality and intervention by Pearl \cite{PearlBook2000}, Meliou et al. \cite{MeliouGMS2011}, studied the problem of finding and ranking input tuples as `causes' of query answers. Roy and Suciu \cite{RoyS14} studied finding \emph{explanations} for aggregate query answers, a summary of tuples that have a large impact on the query answers, with a similar motivation. 
In \cite{Silverstein+2000}, Silverstein \etal\ studied the problem of efficiently determining
causal relationship 
for mining market basket data. 
Maier \etal~\cite{RattiganMJ11,MaierTOJ10} identified ways to distinguish
statistical association from actual causal dependencies in the relational
domain. \\

To the best of our knowledge, none of the work till date has considered the problem of extending causal analysis and potential outcome model to multiple relations with a rigorous study of the underlying assumptions like SUTVA and strong ignorability, which is the main contribution of the framework proposed in this paper. 



%

\end{sloppypar}

\end{document}